\theoremstyle{definition}
\newtheorem{defi}{Definition}[section]
\newtheorem{remark}[defi]{Remark}
\newtheorem{example}[defi]{Example}
\theoremstyle{plain}
\newtheorem{theorem}[defi]{Theorem}
\newtheorem{lemma}[defi]{Lemma}
\newtheorem{prop}[defi]{Proposition}
\newtheorem{asum}[defi]{Assumption}
\numberwithin{equation}{section}
\DeclareMathOperator*{\esssup}{ess\,su{p}^{\textit{P}}}
\DeclareMathOperator*{\esssupT}{ess\,su{p}^{\textit{$\tilde{P}$}}}
\DeclareMathOperator*{\esssupO}{ess\,su{p}^{\textit{$\overline{P}$}}}
\DeclareMathOperator*{\essinf}{ess\,inf^{\textit{$\tilde{P}$}}}
\title{Reduced-form setting under model uncertainty with non-linear affine intensities}
\author{Francesca Biagini\footnote{Department of Mathematics, Workgroup Financial and Insurance Mathematics, University of Munich (LMU), Theresienstra\ss e 39, 80333 Munich, Germany. Email: francesca.biagini@math.lmu.de} \quad Katharina Oberpriller\footnote{Department of Mathematics of Natural, Social and Life Sciences, Gran Sasso Science Institute (GSSI), Viale F. Crispi 7, 67100 L'Aquila, Italy. Email: katharina.oberpriller@gssi.it }}
\begin{document}
\maketitle

\begin{abstract}
	In this paper we extend the reduced-form setting under model uncertainty introduced in \cite{bz_2019} to include intensities following an affine process under parameter uncertainty, as defined in \cite{fns_2019}. This framework allows to introduce a longevity bond under model uncertainty in a consistent way with the classical case under one prior, and to compute its valuation numerically. Moreover, we are able to price a contingent claim with the sublinear conditional operator such that the extended market is still arbitrage-free in the sense of ``No Arbitrage of the first kind'' as in \cite{bbkn_2017}.
	\end{abstract}

\textbf{Keywords:} sublinear expectation, reduced-from framework, non-linear affine processes, arbitrage-free pricing\\
\textbf{Mathematics Subject Classification (2020):} 60G65, 91B70, 91G15, 91G40 \\
\textbf{JEL Classification:} C02, G22
\begin{section}{Introduction}
Aim of this paper is to extend the reduced-form setting under model uncertainty as introduced in \cite{bz_2019} to include non-linear affine intensities as defined in \cite{fns_2019}. In this way we are able to introduce a longevity bond under model uncertainty in an arbitrage-free way and numerically compute its value process in several examples. Furthermore, we apply these results to the arbitrage-free pricing of a general contingent claim under model uncertainty. \\
More precisely, in \cite{bz_2019} the classical reduced-form framework as in \cite{bielecki_rutkowski_2004} is extended under model uncertainty by defining a sublinear conditional operator with respect to a progressively enlarged filtration $\mathbb{G}$ and a family of probability measures possibly mutually singular to each other, which is an extension of the sublinear conditional operator with respect to $\mathbb{F}$ introduced in \cite{nh_2013}. In this setting no specific structure or assumptions are made for the intensity process. In the last few years several papers dealing with short rate modeling under model uncertainty have been published, e.g., \cite{hoelzermann_2020}, \cite{hoelzermann_2019}, \cite{hoelzermann_quian_2020}, \cite{fadina_schmidt_2019}, \cite{fns_2019}. A more general approach is treated in \cite{fns_2019} by considering affine processes under parameter uncertainty, called non-linear affine processes, as an extension of the non-linear L\'evy processes in \cite{neufeld_nutz_2016}. More specifically, one-dimensional non-linear affine processes are defined as a family of semimartingale laws whose differential characteristics are bounded from above and below by affine functions of the current states. In financial applications, affine processes are not only relevant for short rate models but also for modeling the stochastic mortality/default intensity, e.g., in \cite{dahl_2004}, \cite{b_2004}, \cite{s_2006} and \cite{luciano_vigna_2005}, as they allow analytically tractable models.\\
Here we wish to provide the most general reduced-form setting under model uncertainty which allows numerical tractability or explicit computation for pricing insurance liabilities or credit derivatives. Hence we extend the results in \cite{bz_2019} by representing the mortality intensity as non-linear affine processes in the sense of \cite{fns_2019}. By doing so we are able to construct a general market model, where the risky assets are local $\mathbb{F}$-martingales and the intensity process is a non-linear affine process under the considered (time-dependent increasing) families of probability measures. The associated sublinear conditional operator can then be used to evaluate insurance products by taking into account the (non-linear) affine structure of the mortality intensity. Furthermore, we give some examples for families of probability measures such that the market model satisfies the required assumptions. From a mathematical point of view the construction of the sublinear conditional operator requires some regularity assumptions for the families of probability measures as in \cite{nh_2013}. In our context the difficulty lies in constructing families of priors which satisfies these assumptions as well as the desired properties concerning the market model and the affine structure of the intensity. \\
In general, the mortality intensity is used to define the survival index and can be seen as building block for mortality linked securities \cite{cbd_2006}. These kind of financial instruments which started appearing on the market around 2003 have the aim to reduce the mortality and longevity risk connected to life insurance and pension products. One of the basic products of this type are longevity bonds which pay the survivor index at the maturity and it is common to price them with the risk-neutral measure such that the extended market including the longevity bond is arbitrage-free \cite{cbd_2006}. In this work we are able to introduce the definition of a longevity bond under model uncertainty in a consistent way with the classical setting under one prior. To this purpose, we use the sublinear conditional operator of \cite{bz_2019}. As already mentioned in \cite{bz_2019}, the sublinear conditional operator is a priori not c\`{a}dl\`{a}g which can lead to problems as c\`{a}dl\`{a}g paths are a common standard assumption. This problem is solved in \cite{bz_2019} by considering a fixed set of probability measures. Here this result does not hold because we need to work with time-dependent increasing families of priors in order to include non-linear affine intensities. Nevertheless, we are able to find conditions such that there exists a c\`{a}dl\`{a}g modification for the conditional sublinear operator. By generalizing the representation of the sublinear expectation with Riccati equations from \cite{fns_2019}, we are able to numerically compute the value of a longevity bond for some relevant examples. Moreover, numerical computations can also be used for the valuation of general endowment contracts under an independence assumption between the asset's price process and the mortality intensity.\\
Motivated by the valuation of the longevity bond, we examine if the sublinear conditional operator in \cite{bz_2019} can be used for pricing a contingent claim under model uncertainty such that the extended market is arbitrage-free. To do so, we first need to choose an appropriate definition of arbitrage in a continuous time setting under model uncertainty. While for the discrete time setting there exists a broad literature about no arbitrage and related concepts under model uncertainty, e.g., \cite{acciaio_beiglboeck_penkner_schachermayer_2013}, \cite{bayraktar_zhang_zhou_2014}, \cite{bouchard_nutz_2015} and \cite{nutz_superreplication_discrete_2014}, the situation is different for the continuous case. In \cite{vorbrink_2014} no-arbitrage is studied within a setting of volatility uncertainty. In \cite{bbkn_2017} they introduce a robust version of arbitrage of the first kind and derive the fundamental theorem of asset pricing. By applying this definition to our setting, we show that the extended sublinear operator can be used to price a contingent claim such that the extended market allows no arbitrage of the first kind under model uncertainty as in \cite{bbkn_2017}. This result requires assumptions about the trading strategies which are however not restrictive in an insurance setting. Moreover, we discuss the relation of this valuation to the superhedging price of a contingent claim under model uncertainty given in \cite{bz_2019}.  \\
The paper is organized as follows. In Section \ref{Section_Reduced_framework} we outline the setting in \cite{bz_2019} and extend the definition of the sublinear conditional operator with respect to time-dependent increasing families of probability measures instead of a fixed set as in the original framework. In Section \ref{SectionAffineProcesses} we introduce the definition of a non-linear affine process defined as in \cite{fns_2019}. Next, we define a market model under uncertainty combining the two settings and illustrate this with some examples in Section \ref{Section_Combination}. In Section \ref{SectionMainLongevity} we give the definition of a longevity bond under model uncertainty and derive its numerical approximation via Riccati equations. Moreover, we show under which conditions it is possible to find a c\`{a}dl\`{a}g quasi-sure modification of the sublinear conditional operator and prove that these assumptions are satisfied in the examples given in Section \ref{Section_Combination}. In Section \ref{SectionPricing} we introduce the definition of no arbitrage under first kind in our framework and study arbitrage-free pricing of a contingent claim via the sublinear conditional operator.
\end{section}

\begin{section}{Reduced-form setting under model uncertainty} \label{Section_Reduced_framework}
A reduced-form setting for credit and insurance markets under model uncertainty is introduced in the paper \cite{bz_2019} by defining a sublinear conditional operator with respect to a progressively enlarged filtration and a family of probability measures possibly mutually singular to each other. This is not a straight forward extension of the construction in \cite{nh_2013}, because the approach in \cite{nh_2013} relies on special properties of the natural filtration generated by the canonical process which are not any longer satisfied by the enlarged filtration. For example, the Galmarino's test\footnote{Galmarino's Test \cite[Exercise 4.21]{revuz_yor_2005}: Let $\Omega=C(\mathbb{R}_+,\mathbb{R})$, $\mathcal{F}$ the Borel $\sigma$-algebra with respect to the topology of locally uniform convergence and $\mathbb{F}$ be the raw filtration generated by the canonical process $B$ on $\Omega$. Then a $\mathcal{F}$-measurable function $\tau: \Omega \to \mathbb{R}_+$ is a $\mathbb{F}$-stopping time if and only if $\tau(\omega) \leq t$ and $\omega\vert_{[0,t]}=\omega'\vert_{[0,t]}$ imply $\tau(\omega)=\tau(\omega')$. Furthermore, given a $\mathbb{F}$-stopping time $\tau$, and $\mathcal{F}$-measurable function $f$ is $\mathcal{F}_{\tau}$-measurable if and only if $f=f \circ \iota_{\tau}$, where $\iota_{\tau}:\Omega \to \Omega$ is the stopping map $(\iota_{\tau}(\omega))_t=\omega_{t \wedge \tau(\omega)}$. } 
	cannot be used in this extended framework, as the assumptions under which it holds are not satisfied in the enlarged filtration.
In the following, we recall the approach in \cite{bz_2019} in a more general version by taking into account families of probability measures $(\mathcal{P}(t,\omega))_{(t,\omega) \in [0,T] \times \Omega}$ on a space $\Omega$ as in \cite{nh_2013} instead of a fixed set $\mathcal{P}$ on $\Omega$ as in \cite{bz_2019}.\\
Fix $T>0$ and consider the space $\Omega=C_0([0,T], \mathbb{R})$ of continuous functions $\omega=(\omega_t)_{t \in [0,T]}$ in $\mathbb{R}$ starting from zero, which is equipped with the topology of locally uniform convergence and is therefore a Polish space. The Borel $\sigma$-algebra on this space is given by $\mathcal{F}=\mathcal{B}(\Omega)$ and the set of probability measures on $(\Omega, \mathcal{F})$ by $\mathcal{P}(\Omega)$. We assume that $\mathcal{P}(\Omega)$ is endowed with the topology of weak convergence. Furthermore, we denote by $B:=(B_t)_{t \in [0,T]}$ the canonical process, i.e., $B_t(\omega)=\omega_t, \ t \in [0,T]$ and its corresponding raw filtration by $\mathbb{F}:=(\mathcal{F}_t)_{t \in [0,T]}$ with $\mathcal{F}_0=\lbrace \emptyset, \Omega \rbrace$ and $\mathcal{F}_{T}:=\bigvee_{t \in [0,T]} \mathcal{F}_t=\mathcal{F}$. For every given $P \in \mathcal{P}(\Omega)$ and $t \in [0,T]$, we define $\mathcal{N}_t^P$ as the collection of sets which are $(P, \mathcal{F}_t)$-null and we consider the following filtration $\mathbb{F}^*:=(\mathcal{F}^*_t)_{t \in [0,T]}$ defined by
\begin{equation}
	\mathcal{F}_t^*:=\mathcal{F}_t \vee \mathcal{N}_t^*, \quad \mathcal{N}_t^*:=\bigcap_{P \in \mathcal{P}(\Omega)} \mathcal{N}_t^P. \label{filtration}
\end{equation}
For a given family of probability measures $\mathcal{P}$ on $\Omega$ we define the $\sigma$-algebra $\mathcal{F}_T^{\mathcal{P}}$ by 
\begin{equation} \label{defiNullsets}
	\mathcal{F}_T^{\mathcal{P}}:=\mathcal{F} \vee \mathcal{N}_T^{\mathcal{P}}, \quad \mathcal{N}_T^{\mathcal{P}}:=\bigcap_{P \in \mathcal{P}} N_T^P
\end{equation}
and the filtration $\mathbb{F}^{*,\mathcal{P}}:=(\mathcal{F}_t^{*,\mathcal{P}})_{t \in [0,T]}$  is given by
\begin{equation}
	\mathcal{F}_t^{*,\mathcal{P}}:=\mathcal{F}_t^* \vee \mathcal{N}_T^{\mathcal{P}}, \quad t \in [0,T], \label{filtrationP}
\end{equation}
where $\mathcal{N}_T^{\mathcal{P}}$ is the collection of sets which are $(P, \mathcal{F}_T)$-null for all $P \in \mathcal{P}$.
We follow the approach of \cite{nh_2013} for defining sublinear expectations and introduce the following notation. Let $\tau$ be a $[0,T]$-valued $\mathbb{F}$-stopping time and $\omega \in \Omega$. For every $\tilde{\omega} \in \Omega$, the concatenation process $\omega \otimes_{\tau} \tilde{\omega}:=((\omega \otimes_{\tau} \tilde{\omega})_t)_{t \in [0,T]}$ of $(\omega, \omega')$ at $\tau$ is given by
\begin{equation}
	(\omega \otimes_{\tau} \tilde{\omega})_t:= \omega_t \textbf{1}_{[0,\tau(\omega))}(t)+(\omega_{\tau(\omega)}+ \tilde{\omega}_{t-\tau(\omega)})\textbf{1}_{[\tau(\omega)),T]}(t), \quad t \in [0,T]. \label{concatenation}
\end{equation}
Furthermore, for every function $X$ on $\Omega$ we define the function $X^{\tau,\omega}$ on $\Omega$ by
\begin{equation}
	X^{\tau,\omega}(\tilde{\omega}):=X(\omega \otimes_{\tau} \tilde{\omega}), \quad \tilde{\omega} \in \Omega. \label{concaRV}
\end{equation}
Given a probability measure $P \in \mathcal{P}(\Omega)$ and the regular conditional probability distribution $P^{\omega}_{\tau}$ of $P$ given $\mathcal{F}_{\tau}$, we consider the probability measure $P^{\tau, \omega} \in \mathcal{P}(\Omega)$ given by
\begin{equation}
	P^{\tau, \omega}(A):=P_{\tau}^{\omega}(\omega \otimes_{\tau} A), \quad A \in \mathcal{F}, \label{condProbability} 
\end{equation}
with $\omega \otimes_{\tau} A=\lbrace \omega \otimes_{\tau} \tilde{\omega}: \tilde{\omega} \in A \rbrace$. Note that $P^{\omega}_{\tau}$ is concentrated on the paths which coincide with $\omega$ up to time $\tau(\omega)$. \\
For any $(s,\omega) \in [0,T] \times \Omega$ we fix the sets $\mathcal{P}(s,\omega) \subseteq \mathcal{P}(\Omega)$ and assume that
\begin{equation*}
	\mathcal{P}(s,\omega) = \mathcal{P}(s,\tilde{\omega}) \quad \text{ if } \omega \vert_{[0,s]} = \tilde{\omega} \vert_{[0,s]}.
\end{equation*}
The set $\mathcal{P}(0,\omega)$ is independent of $\omega$ and from now on denoted by $\mathcal{P}$. For a stopping time $\sigma$ we put $\mathcal{P}(\sigma,\omega):=\mathcal{P}(\sigma(\omega),\omega)$.
\begin{asum} \label{assumptionnutzNew}
	Let $(s,\overline{\omega}) \in [0,T] \times \Omega, P \in \mathcal{P}(s, \overline{\omega})$ and $\tau$ be a stopping time such that $T \geq \tau \geq s$. Set $\eta:=\tau^{s,\overline{\omega}}-s$, then
	\begin{enumerate}
	\itemsep0pt
		\item \emph{Measurability:} The graph $\lbrace (P',\omega): \omega \in \Omega, P' \in \mathcal{P}(\tau,\omega) \rbrace \subseteq \mathcal{P}(\Omega) \times \Omega$ is analytic.
		\item \emph{Invariance:} $P^{\eta, \omega} \in \mathcal{P}(\tau, \overline{\omega} \otimes_s \omega)$ for $P$-a.e. $\omega \in \Omega$.
		\item \emph{Stability under Pasting:} If $\nu: \Omega \to \mathcal{P}(\Omega)$ is an $\mathcal{F}_{\eta}$-measurable kernel and $\nu(\omega) \in \mathcal{P}(\tau, \overline{\omega} \otimes_s \omega)$ for $P$-a.e. $\omega \in \Omega$, then the measure defined by
			\begin{equation}
				\overline{P}(A)=\int \int (\textbf{1}_A)^{\eta,\omega}(\omega')\nu(d\omega';\omega)P(d\omega), \quad A \in \mathcal{F}, \label{OverlineP}
			\end{equation}
			is an element of $\mathcal{P}(s,\overline{\omega})$.
	\end{enumerate}
\end{asum}
The following proposition is the main result in \cite[Theorem 2.3]{nh_2013}.
\begin{prop} \label{SublinearNutz}
	Let Assumption \ref{assumptionnutzNew} hold true, $\sigma \leq \tau \leq T$ be $\mathbb{F}$-stopping times and $X: \Omega \to \overline{\mathbb{R}}$ be an upper semianalytic function on $\Omega$. Then the function $\mathcal{E}_{\tau}(X)$ defined by
		\begin{equation}
			\mathcal{E}_{\tau}(X)(\omega):=\sup_{P \in \mathcal{P}(\tau,\omega)} E^P[X^{\tau,\omega}], \quad \omega \in \Omega, \label{definitionOperator}
		\end{equation}
		is $\mathcal{F}^*_{\tau}$-measurable and upper semianalytic.
		Moreover
		\begin{equation}
			\mathcal{E}_{\sigma}(X)(\omega)= \mathcal{E}_{\sigma}(\mathcal{E}_{\tau}(X))(\omega) \quad \text{ for all } \omega \in \Omega. \label{towerNutz}
		\end{equation}
	Furthermore, the following consistency condition is fulfilled, i.e.,
		\begin{equation}
			\mathcal{E}_{\tau}(X)=\esssup_{P' \in \mathcal{P}(\tau;P)} E^{P'}[X| \mathcal{F}_{\tau}] \quad P\text{-a.s. for all } P \in \mathcal{P}, \label{repesssup}
		\end{equation}
	where $\mathcal{P}(\tau;P)=\lbrace P' \in \mathcal{P}: P'=P \text{ on } \mathcal{F}_{\tau}\rbrace$.
\end{prop}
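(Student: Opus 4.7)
The plan is to follow the structure of Theorem 2.3 in \cite{nh_2013}, verifying the three claims (measurability/upper semianalyticity, tower property, and the consistency representation) in that order. The central analytic tools are the measurable projection theorem for analytic sets, which lets us take suprema over the analytic graph provided by Assumption \ref{assumptionnutzNew}(1) without losing upper semianalyticity, and the Jankov--von Neumann measurable selection theorem, which provides $\varepsilon$-optimal kernels that can then be pasted via Assumption \ref{assumptionnutzNew}(3).

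First I would verify that the map $(P',\omega) \mapsto E^{P'}[X^{\tau,\omega}]$ is upper semianalytic. Since $\tau$ is an $\mathbb{F}$-stopping time and $X$ is upper semianalytic, the shifted random variable $X^{\tau,\omega}$ depends only on $\omega|_{[0,\tau(\omega)]}$ and the map $(P',\omega)\mapsto E^{P'}[X^{\tau,\omega}]$ inherits upper semianalyticity from $X$ by standard results on integration of upper semianalytic integrands with respect to Borel-measurable families of probability measures. Taking the supremum over the $\omega$-section of the analytic graph $\{(P',\omega) : P' \in \mathcal{P}(\tau,\omega)\}$ yields, by the projection theorem, that $\mathcal{E}_\tau(X)$ is upper semianalytic. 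Upper semianalytic functions are universally measurable, hence $\mathcal{F}^*_\tau$-measurable once one checks that $\mathcal{E}_\tau(X)$ factors through $\omega|_{[0,\tau(\omega)]}$, which follows from the hypothesis $\mathcal{P}(\tau,\omega)=\mathcal{P}(\tau,\tilde\omega)$ whenever $\omega|_{[0,\tau]}=\tilde\omega|_{[0,\tau]}$ combined with the analogous property of $X^{\tau,\omega}$.

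Next I would prove the tower property \eqref{towerNutz}, which is the main technical step. For the inequality $\mathcal{E}_\sigma(X) \le \mathcal{E}_\sigma(\mathcal{E}_\tau(X))$, I fix $\omega$ and $P\in\mathcal{P}(\sigma,\omega)$, set $\eta := \tau^{\sigma,\omega}-\sigma$, and apply Assumption \ref{assumptionnutzNew}(2) to conclude that $P^{\eta,\omega'}\in\mathcal{P}(\tau,\omega\otimes_\sigma\omega')$ for $P$-a.e.\ $\omega'$; hence $E^{P^{\eta,\omega'}}[X^{\tau,\omega\otimes_\sigma\omega'}]\le \mathcal{E}_\tau(X)(\omega\otimes_\sigma\omega')$, and integrating against $P$ using the regular conditional probability gives $E^P[X^{\sigma,\omega}]\le E^P[(\mathcal{E}_\tau(X))^{\sigma,\omega}]\le \mathcal{E}_\sigma(\mathcal{E}_\tau(X))(\omega)$. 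Taking supremum over $P$ yields the desired inequality. For the reverse inequality, given $\varepsilon>0$, I use Jankov--von Neumann selection on the analytic graph from Assumption \ref{assumptionnutzNew}(1) to choose a universally measurable kernel $\omega'\mapsto \nu(\omega')\in\mathcal{P}(\tau,\omega\otimes_\sigma\omega')$ that is $\varepsilon$-optimal for $\mathcal{E}_\tau(X)(\omega\otimes_\sigma\omega')$; for any $\varepsilon$-optimizer $P\in\mathcal{P}(\sigma,\omega)$ of $\mathcal{E}_\sigma(\mathcal{E}_\tau(X))(\omega)$, Assumption \ref{assumptionnutzNew}(3) produces $\overline{P}\in\mathcal{P}(\sigma,\omega)$ defined by \eqref{OverlineP} whose expectation of $X^{\sigma,\omega}$ differs from $\mathcal{E}_\sigma(\mathcal{E}_\tau(X))(\omega)$ by at most $2\varepsilon$.

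Finally, the consistency formula \eqref{repesssup} is obtained by combining the tower property with the standard identification of the $P$-essential supremum: $P$-almost surely, one checks that both sides coincide with the upper conditional expectation of $X$ with respect to the section $\mathcal{P}(\tau;P)$, using the fact that every $P'\in\mathcal{P}(\tau;P)$ can be disintegrated via its regular conditional probability at $\tau$ and reassembled by pasting, and conversely every pasting of $P$ on $\mathcal{F}_\tau$ with admissible kernels stays in $\mathcal{P}(\tau;P)$. I expect the main obstacle to be the selection-and-pasting step in the reverse tower inequality: ensuring joint measurability of the $\varepsilon$-optimizer in $\omega'$, verifying the kernel condition of Assumption \ref{assumptionnutzNew}(3), and transporting the obtained measure back through the concatenation \eqref{concatenation} to compare with $\mathcal{E}_\sigma(\mathcal{E}_\tau(X))$. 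These are precisely the places where analyticity of the graph and stability under pasting are indispensable, and where the proof diverges from the simpler linear conditional-expectation argument.
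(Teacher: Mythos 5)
Your proposal is correct and takes essentially the same route as the paper: the paper does not prove this proposition at all but imports it verbatim as \cite[Theorem 2.3]{nh_2013}, and your sketch faithfully reconstructs that cited proof (projection theorem for upper semianalyticity of the supremum over the analytic graph, the Galmarino-type factorization for $\mathcal{F}^*_{\tau}$-measurability, invariance for one tower inequality, Jankov--von Neumann selection plus pasting for the reverse one, and disintegration/pasting for the essential-supremum representation). The subtleties you flag---replacing the analytically measurable $\varepsilon$-optimal selector by an $\mathcal{F}_{\eta}$-measurable kernel up to a $P$-null set, and handling infinite values of $\mathcal{E}_{\tau}(X)$---are exactly the points handled in \cite{nh_2013}, so nothing is missing in substance.
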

The family of sublinear conditional expectations $(\mathcal{E}_t)_{t \in [0,T]}$ is called $(\mathcal{P}, \mathbb{F})$-conditional expectation. \\
We now enlarge the underlying space to introduce a random time $\tilde{\tau}$, which is not an $\mathbb{F}$-stopping time but has an $\mathbb{F}$-progressively measurable intensity process $\mu$ to represent a totally unexpected default or decease time under model uncertainty. Let $\hat{\Omega}$ be another Polish space equipped with its Borel $\sigma$-algebra $\mathcal{B}(\hat{\Omega})$. On the product space $(\tilde{\Omega}, \mathcal{G}):=(	\Omega \times \hat{\Omega} , \mathcal{B}(\Omega) \otimes \mathcal{B}(\hat{\Omega}))$ we adopt the following conventions. For every function or process $X$ on $(\Omega,\mathcal{B}(\Omega))$ we denote its natural immersion into the product space by $X(\tilde{\omega}):=X(\omega)$ for all $\omega \in \Omega$ and similarly for processes on $(\tilde{\Omega}, \mathcal{B}(\tilde{\Omega}))$. Furthermore, for every sub-$\sigma$-algebra $\mathcal{A}$ of $\mathcal{B}(\Omega)$, the natural extension as a sub-$\sigma$-algebra of $\mathcal{G}$ on $(\tilde{\Omega}, \mathcal{G})$ is given by $\mathcal{A} \otimes \lbrace \emptyset, \tilde{\Omega} \rbrace$, similarly for sub-$\sigma$-algebras of $\mathcal{B}(\hat{\Omega})$.\\
We fix a probability measure $\hat{P}$ on $(\hat{\Omega},\mathcal{B}(\hat{\Omega}))$ such that $(\hat{\Omega}, \mathcal{B}(\hat{\Omega}), \hat{P})$ is an atomless probability space, i.e., there exists a random variable with an absolutely continuous distribution. Moreover, let $\xi$ be a Borel-measurable surjective random variable
	\begin{equation*}
		\xi:(\hat{\Omega}, \mathcal{B}(\hat{\Omega}), \hat{P}) \to ([0,1], \mathcal{B}([0,1]))
	\end{equation*}
with uniform distribution, i.e., $\xi \in \mathcal{U}([0,1])$. Without loss of generality we assume $\mathcal{B}(\hat{\Omega})=\sigma(\xi)$.\\
The family of all probability measures on $(\tilde{\Omega}, \mathcal{G})$ is denoted by $\mathcal{P}(\tilde{\Omega})$. In particular we are interested in the following families of probability measures $(\tilde{\mathcal{P}}(t,\omega))_{(t,\omega) \in [0,T] \times \Omega}$ with
	\begin{equation}
			\tilde{\mathcal{P}}(t,\omega):=\lbrace \tilde{P} \in \mathcal{P}(\tilde{\Omega}): \tilde{P} = P \otimes \hat{P}, P \in \mathcal{P}(t,\omega) \rbrace \label{probExtendedDependence}
	\end{equation}
for $(t,\omega) \in [0,T] \times \Omega$. As $\mathcal{P}(0,\omega)$ does not depend on $\omega$ this also holds for $\tilde{\mathcal{P}}(0,\omega)$ which is denoted by $\tilde{\mathcal{P}}$.
Moreover, we consider an $\mathbb{R}$-valued, $\mathbb{F}$-adapted, continuous and increasing process $\Gamma:=(\Gamma_t)_{t \geq 0}$ on $(\Omega, \mathcal{B}(\Omega))$ with $\Gamma_0:=0$ and $\Gamma_{\infty}:=+\infty$ such that 
\begin{equation}
	\Gamma_t:=\int_0^t \mu_s ds, \quad t \geq 0, \quad \text{ for all } t \geq 0,  \text{ for all } \omega \in \Omega,  \label{representationIntensity}
\end{equation}
where $\mu=(\mu_t)_{t \geq 0}$ is a nonnegative $\mathbb{F}$-progressively measurable stochastic process with $\int_0^t  \mu_s (\omega) ds < \infty$ for all $t\geq 0, \omega \in \Omega$. 
On $\tilde{\Omega}=\Omega \times \hat{\Omega}$ we define the stopping time $\tilde{\tau}$ by
	\begin{equation}
		\tilde{\tau}=\inf \lbrace t \geq 0: e^{-\Gamma_t} \leq \xi \rbrace = \inf \lbrace t \geq 0: \Gamma_t \geq -\ln \xi \rbrace, \label{definitionTau}
	\end{equation}
where we use the convention $\inf \emptyset = \infty$.\\
We define the filtration $\mathbb{H}:=(\mathcal{H}_t)_{t \in [0,T]}$ on $\tilde{\Omega}$ which is generated by the process $H:=(H_t)_{t \in [0,T]}$ with 
	\begin{equation}
		H_t:=\textbf{1}_{\lbrace \tilde{\tau} \leq t \rbrace}, \quad t \in [0,T], \label{definitionFiltrationH}
	\end{equation}
and consider the enlarged filtration $\mathbb{G}:=(\mathcal{G}_t)_{t \in [0,T]}$ with $\mathcal{G}_t:=\mathcal{F}_t \vee \mathcal{H}_t,$ $ t \in [0,T]$. With this construction it holds $\mathcal{G}=\mathcal{F}_T \otimes \sigma(\xi) = \mathcal{H}_T \vee \mathcal{F}_T=\sigma(\tilde{\tau}) \vee \mathcal{F}_T$.
As in $(\ref{filtration})$ we denote by $\mathbb{G}^*$ the corresponding universally completed filtration. Moreover, let $\mathcal{G}^P:=\mathcal{G} \vee \mathcal{N}_{T}^P$ for $P \in \mathcal{P}$ and $\mathcal{G}^{\mathcal{P}}:=\mathcal{G} \vee \mathcal{N}_{T}^{\mathcal{P}}$ with $\mathcal{N}_{T}^{\mathcal{P}}$ defined in $(\ref{defiNullsets})$.
In addition, we define $L^0(\tilde{\Omega})$ as the space of all $\mathbb{R}$-valued $\mathcal{G}^{\mathcal{P}}$-measurable functions, where we use the following convention. For every $\tilde{P} \in \mathcal{P}(\tilde{\Omega})$, we set $E^{\tilde{P}}[X]:=E^{\tilde{P}}[X^+]-E^{\tilde{P}}[X^-]$ if $E^{\tilde{P}}[X^+]$ or $E^{\tilde{P}}[X^-]$ is finite and $E^{\tilde{P}}[X]:= -\infty$ if $E^{\tilde{P}}[X^+]=E^{\tilde{P}}[X^-]=+\infty$. Furthermore, we introduce the set
\begin{align*}
	L^1({\tilde{\Omega}}):=\lbrace \tilde{X} \vert  \tilde{X}: (\tilde{\Omega}, \mathcal{G}^{\mathcal{P}}) \to (\mathbb{R}, \mathcal{B}(\mathbb{R})) \text{ measurable function such that  }
	\tilde{\mathcal{E}}(\vert \tilde{X} \vert ) < \infty \rbrace.
\end{align*}
Here $\tilde{\mathcal{E}}$ denotes the upper expectation associated to $\tilde{\mathcal{P}}$ defined as
\begin{equation*}
	\tilde{\mathcal{E}}(\tilde{X}):=\sup_{\tilde{P} \in \tilde{\mathcal{P}}} E^{\tilde{P}} [\tilde{X}],  \quad \tilde{X} \in L^0(\tilde{\Omega}).
\end{equation*}
One important step for the main result of \cite{bz_2019} is Proposition 2.13 in\cite{bz_2019}. A similar result has also been derived in Proposition 2.2 in \cite{callegaro_jeanblanc_2018}.
\begin{prop} \label{ExistenceXi}
	Let $t \in [0,T]$. If $\tilde{X}$ is a real-valued $\sigma(\tilde{\tau}) \vee \mathcal{F}_t$-measurable function on $\tilde{\Omega}$, then there exists a unique measurable function 
	\begin{equation*}
		\varphi: (\mathbb{R}_+ \times \Omega, \mathcal{B}(\mathbb{R}_+) \otimes \mathcal{F}_t) \to (\mathbb{R}, \mathcal{B}(\mathbb{R})),
	\end{equation*}
	such that
	\begin{equation}
		\tilde{X}(\omega, \hat{\omega})=\varphi(\tilde{\tau}(\omega, \hat{\omega}), \omega), \quad (\omega, \hat{\omega}) \in \tilde{\Omega}. \label{existenceVarphi}
	\end{equation}
\end{prop}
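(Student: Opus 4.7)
The plan is to establish existence by a functional monotone class argument and to deduce uniqueness from the surjectivity of $\hat{\omega}\mapsto \tilde{\tau}(\omega,\hat{\omega})$ for each fixed $\omega$.

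First I would identify a convenient $\pi$-system generating $\sigma(\tilde{\tau}) \vee \mathcal{F}_t$ on the product space $\tilde{\Omega}$. Under the natural extensions, $\mathcal{F}_t$ acts on $\tilde{\Omega}$ as $\mathcal{F}_t \otimes \{\emptyset,\hat{\Omega}\}$, so the join is generated by sets of the form $\{\tilde{\tau}\in B\}\cap (A\times \hat{\Omega})$ with $B\in\mathcal{B}(\mathbb{R}_+)$ and $A\in\mathcal{F}_t$. On such a generator the factorization is explicit: the indicator equals $\textbf{1}_B(\tilde{\tau}(\omega,\hat{\omega}))\,\textbf{1}_A(\omega) = \varphi(\tilde{\tau}(\omega,\hat{\omega}),\omega)$ with $\varphi(s,\omega):=\textbf{1}_B(s)\textbf{1}_A(\omega)$, which is clearly $\mathcal{B}(\mathbb{R}_+)\otimes \mathcal{F}_t$-measurable. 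Then I would run the standard measure-theoretic induction: let $\mathcal{H}$ be the class of bounded real-valued $\tilde{X}$ on $\tilde{\Omega}$ admitting a representation $\tilde{X}=\varphi(\tilde{\tau},\cdot)$ with some $\mathcal{B}(\mathbb{R}_+)\otimes\mathcal{F}_t$-measurable $\varphi$. Then $\mathcal{H}$ is a vector space containing the constants and the indicators of the $\pi$-system above, and is closed under bounded monotone limits since pointwise monotone limits of jointly measurable functions are jointly measurable and the factorization passes to such limits. The functional monotone class theorem therefore yields the representation for every bounded $\sigma(\tilde{\tau})\vee\mathcal{F}_t$-measurable $\tilde{X}$; truncation combined with the decomposition $\tilde{X}=\tilde{X}^+-\tilde{X}^-$ extends this to arbitrary real-valued measurable $\tilde{X}$.

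For uniqueness I would exploit the concrete construction of $\tilde{\tau}$ in $(\ref{definitionTau})$. For fixed $\omega\in\Omega$, the map $\hat{\omega}\mapsto \tilde{\tau}(\omega,\hat{\omega})=\inf\{t\geq 0 : \Gamma_t(\omega)\geq -\ln\xi(\hat{\omega})\}$ is surjective onto $\mathbb{R}_+$: since $\xi$ is surjective onto $[0,1]$, $-\ln\xi$ is surjective onto $[0,\infty]$, and the continuity and monotonicity of $\Gamma(\omega)$ with $\Gamma_0(\omega)=0$, $\Gamma_\infty(\omega)=\infty$ guarantee that every finite threshold is attained. Hence, if $\varphi_1,\varphi_2$ both satisfy $(\ref{existenceVarphi})$, then at every $(s,\omega)\in\mathbb{R}_+\times\Omega$ one can produce some $\hat{\omega}$ with $\tilde{\tau}(\omega,\hat{\omega})=s$ and deduce $\varphi_1(s,\omega)=\varphi_2(s,\omega)$.

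I expect the main obstacle to be bookkeeping rather than any deep point: one must verify that the chosen $\pi$-system genuinely generates $\sigma(\tilde{\tau})\vee\mathcal{F}_t$ on the product space (rather than merely its analogue on $\Omega$), and that in the monotone class step both the factorization and the joint $\mathcal{B}(\mathbb{R}_+)\otimes\mathcal{F}_t$-measurability of $\varphi$ pass to the limit. Once these are checked, the argument reduces to the standard measure-theoretic machinery, with the surjectivity observation closing uniqueness cleanly.
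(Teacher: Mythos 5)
Your existence argument is correct and is exactly the route the paper takes: the text says explicitly that the proof of this proposition (Proposition 2.13 of the cited reference) rests on a monotone class argument, and your generating $\pi$-system $\lbrace \tilde{\tau}\in B\rbrace \cap (A\times\hat{\Omega})$ together with the truncation/limit bookkeeping is that argument. The surjectivity of $\xi$, which the setup assumes for a reason, is also the intended tool for uniqueness.

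However, your uniqueness step contains a genuine gap: from ``every finite threshold $c=-\ln\xi(\hat{\omega})$ is attained'' you conclude that for every $(s,\omega)$ there is $\hat{\omega}$ with $\tilde{\tau}(\omega,\hat{\omega})=s$. This is a non sequitur. Indeed $\tilde{\tau}(\omega,\hat{\omega})=\inf\lbrace u\geq 0:\Gamma_u(\omega)\geq c\rbrace$, and this infimum equals a prescribed $s$ only if $\Gamma_u(\omega)<\Gamma_s(\omega)$ for all $u<s$. The standing assumptions only require $\mu\geq 0$, so $\Gamma(\omega)$ may have flat stretches, and no time in the interior or at the right endpoint of a flat stretch is ever a value of $\tilde{\tau}(\omega,\cdot)$: the range of $\tilde{\tau}(\omega,\cdot)$ is $\lbrace 0\rbrace\cup\lbrace s>0:\Gamma_u(\omega)<\Gamma_s(\omega)\ \forall u<s\rbrace\cup\lbrace+\infty\rbrace$, not all of $\mathbb{R}_+$. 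This is not cosmetic. Take $\mu_s:=\textbf{1}_{\lbrace s>T\rbrace}$, which is admissible (nonnegative, progressively measurable, $\Gamma_\infty=+\infty$); then $\Gamma\equiv 0$ on $[0,T]$ and $\tilde{\tau}$ takes no values in $(0,T]$, so for any $t\in(0,T]$ the functions $\varphi$ and $\varphi+\textbf{1}_{(0,t]\times\Omega}$ are two distinct $\mathcal{B}(\mathbb{R}_+)\otimes\mathcal{F}_t$-measurable functions both satisfying $(\ref{existenceVarphi})$. More generally one can add the indicator of the $\mathcal{B}(\mathbb{R}_+)\otimes\mathcal{F}_t$-measurable set $N:=\lbrace (s,\omega): 0<s\leq t,\ \Gamma_u(\omega)=\Gamma_s(\omega)\text{ for some }u<s\rbrace$, which the pair $(\tilde{\tau}(\omega,\hat{\omega}),\omega)$ never hits. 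Your surjectivity claim, and with it uniqueness on all of $\mathbb{R}_+\times\Omega$, holds precisely when $\Gamma(\omega)$ is strictly increasing for every $\omega$; absent that hypothesis you must either impose it, or weaken the uniqueness assertion to equality of $\varphi_1$ and $\varphi_2$ on the attainable set $\lbrace(\tilde{\tau}(\omega,\hat{\omega}),\omega):(\omega,\hat{\omega})\in\tilde{\Omega}\rbrace$, which is what the well-posedness of the operator in $(\ref{BasicDecomposition})$ actually requires.
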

The existence of such a measurable function $\varphi$ does not depend on the structure of the considered family of probability measures on $\tilde{\Omega}$, as the proof is based on a monotone class argument. Moreover, the other crucial point to extend the sublinear operator to the enlarged space is Proposition \ref{SublinearNutz}. Thus, we are able to state a generalized version of Theorem 2.18, Proposition 2.21 in \cite{bz_2019}.

\begin{prop} \label{extendedOperator}
	Let Assumption \ref{assumptionnutzNew} hold for $(\mathcal{P}(t,\omega))_{(t,\omega) \in [0,T] \times \Omega}$ and consider an upper semianalytic function $\tilde{X}$ on $\tilde{\Omega}$ such that $\tilde{X} \in L^1(\tilde{\Omega})$ or $\tilde{X}$ is $\mathcal{G}^{\mathcal{P}}$-measurable and nonnegative. If $t \in [0,T]$, then the following function 
	\begin{equation}
		\tilde{\mathcal{E}}_t(\tilde{X}):=\textbf{1}_{\lbrace \tilde{\tau} \leq t \rbrace} \mathcal{E}_t(\varphi(x,\cdot))\vert_{x=\tilde{\tau}}+ \textbf{1}_{\lbrace \tilde{\tau}>t \rbrace} \mathcal{E}_t(e^{\Gamma_t} E^{\hat{P}}[\textbf{1}_{\lbrace \tilde{\tau} > t \rbrace } \tilde{X}]) \label{BasicDecomposition}
	\end{equation} 
is well-defined, where $\varphi$ is the measurable function 
	\begin{equation}
		\varphi: (\mathbb{R}_+ \times \Omega, \mathcal{B}(\mathbb{R}_+) \otimes \mathcal{F}_{T}^{\mathcal{P}}) \to (\mathbb{R}, \mathcal{B}(\mathbb{R})), \label{varphi}
	\end{equation}
such that
	\begin{equation}
	\tilde{X}(\omega, \hat{\omega})=\varphi(\tilde{\tau}(\omega, \hat{\omega}), \omega), \quad (\omega, \hat{\omega}) \in \tilde{\Omega}. \label{varphi2}
	\end{equation}
Furthermore, for every $t \in [0,T]$ the function $\tilde{\mathcal{E}}_t(\tilde{X})$ is upper semianalytic and measurable with respect to $\mathcal{G}_t^{*}$ and $\mathcal{G}^{\mathcal{P}}$ and satisfies the following consistency condition, i.e., for every $t \in [0,T]$
	\begin{equation}
			\tilde{\mathcal{E}}_t (\tilde{X})=\esssupT_{\tilde{P}' \in \tilde{\mathcal{P}}(t;\tilde{P})} E^{\tilde{P}'}[\tilde{X}| \mathcal{G}_{t}] \quad \tilde{P}\text{-a.s. for all } \tilde{P} \in \tilde{\mathcal{P}}, \label{esssupBiagini}
	\end{equation}
where $\mathcal{\tilde{P}}(t;\tilde{P})=\lbrace \tilde{P}' \in \tilde{\mathcal{P}}: \tilde{P}'=\tilde{P} \text{ on } \mathcal{G}_{t}\rbrace$.
	\end{prop}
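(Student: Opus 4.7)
The plan is to adapt the proofs of Theorem 2.18 and Proposition 2.21 in \cite{bz_2019} to the current time-dependent setting. The key observation is that the $\Omega$-level operator $\mathcal{E}_t$ from Proposition \ref{SublinearNutz} retains exactly the ingredients used in \cite{bz_2019}: $\mathcal{F}_t^*$-measurability, upper semianalyticity of the resulting function and the consistency representation \eqref{repesssup}. Hence the structural arguments carry over, and only the measurability bookkeeping needs extra care.

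First I would check that \eqref{BasicDecomposition} is well-posed. Proposition \ref{ExistenceXi}, extended by the usual monotone class argument to $\mathcal{G}^{\mathcal{P}}$-measurable $\tilde{X}$ with second coordinate in $\mathcal{F}_T^{\mathcal{P}}$, yields $\varphi$ satisfying \eqref{varphi2}. Starting from $\tilde{X}=\textbf{1}_{A \times B}$ with $A \in \mathcal{B}(\mathbb{R}_+)$ and $B \in \mathcal{F}_T^{\mathcal{P}}$ and approximating a general upper semianalytic $\tilde{X}$ from above by Borel functions, one obtains that $\omega \mapsto \varphi(x,\omega)$ is upper semianalytic for every fixed $x \in \mathbb{R}_+$ and that the joint map $(x,\omega) \mapsto \mathcal{E}_t(\varphi(x,\cdot))(\omega)$ is upper semianalytic on $\mathbb{R}_+ \times \Omega$; substituting the $\mathcal{H}_t$-measurable random variable $\tilde{\tau} \wedge t$ therefore produces an upper semianalytic function on $\tilde{\Omega}$. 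For the second summand a Fubini-type argument for analytic sets shows upper semianalyticity of $\omega \mapsto E^{\hat{P}}[\textbf{1}_{\lbrace \tilde{\tau} > t \rbrace} \tilde{X}](\omega)$, and multiplication by the Borel, $\mathbb{F}$-adapted factor $e^{\Gamma_t}$ preserves it. Measurability of $\tilde{\mathcal{E}}_t(\tilde{X})$ with respect to $\mathcal{G}_t^*$ and $\mathcal{G}^{\mathcal{P}}$ is then immediate from the $\mathcal{H}_t$-measurability of $\textbf{1}_{\lbrace \tilde{\tau} \leq t\rbrace}$ and the $\mathcal{F}_t^*$-measurability of the inner $\mathcal{E}_t$-terms guaranteed by Proposition \ref{SublinearNutz}.

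The main content is the consistency identity \eqref{esssupBiagini}. Fix $\tilde{P} = P \otimes \hat{P} \in \tilde{\mathcal{P}}$ and let $\tilde{P}' = P' \otimes \hat{P}$ with $P' \in \mathcal{P}(t;P)$. Using the Cox-time construction \eqref{definitionTau}, the $\hat{P}$-independence of $\xi$ from $\mathcal{F}_T$ and the identity $\hat{P}(\tilde{\tau}>t\,\vert\,\mathcal{F}_t)=e^{-\Gamma_t}$ (which follows because $\Gamma_t$ is $\mathcal{F}_t$-measurable and $\xi$ is uniform), the classical decomposition formula from the theory of progressively enlarged filtrations yields
\begin{equation*}
	E^{\tilde{P}'}[\tilde{X} \vert \mathcal{G}_t] = \textbf{1}_{\lbrace \tilde{\tau} \leq t \rbrace} E^{P'}[\varphi(x,\cdot) \vert \mathcal{F}_t]\bigr\vert_{x=\tilde{\tau}} + \textbf{1}_{\lbrace \tilde{\tau} > t \rbrace} e^{\Gamma_t} E^{P'}\bigl[E^{\hat{P}}[\textbf{1}_{\lbrace \tilde{\tau} > t \rbrace} \tilde{X}] \bigr\vert \mathcal{F}_t\bigr].
\end{equation*}
By \eqref{probExtendedDependence} the set $\tilde{\mathcal{P}}(t;\tilde{P})$ is parametrized by $\mathcal{P}(t;P)$, so taking the $\tilde{P}$-essential supremum distributes across the two disjoint indicators and acts on the two $\mathbb{F}$-conditional expectations on the right. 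Applying the consistency relation \eqref{repesssup} of Proposition \ref{SublinearNutz} to each of them and pulling the $\mathcal{F}_t$-measurable factor $e^{\Gamma_t}$ inside $\mathcal{E}_t$ in the second term produces exactly the right-hand side of \eqref{BasicDecomposition}.

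The main obstacle I expect is the measurability bookkeeping of the first step, in particular the joint upper semianalyticity of $(x,\omega) \mapsto \mathcal{E}_t(\varphi(x,\cdot))(\omega)$ and the legitimacy of its evaluation at $\tilde{\tau}$. In \cite{bz_2019} this was handled with a fixed prior, whereas here the supremum runs over the $\omega$-dependent family $\mathcal{P}(t,\omega)$, so the analytic selection argument underlying Proposition \ref{SublinearNutz} must be combined with the monotone class reduction of \cite{bz_2019}; once this technical point is settled the consistency computation proceeds essentially as in the fixed-prior case.
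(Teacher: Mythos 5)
Your proposal is correct and follows essentially the same route as the paper: the paper justifies Proposition \ref{extendedOperator} precisely by observing that the two ingredients of the proofs of Theorem 2.18 and Proposition 2.21 in \cite{bz_2019} survive the passage to time-dependent families, namely the monotone class construction of $\varphi$ (Proposition \ref{ExistenceXi}), which is insensitive to the structure of the priors, and the measurability, upper semianalyticity and consistency representation \eqref{repesssup} of $\mathcal{E}_t$ supplied by Proposition \ref{SublinearNutz} under Assumption \ref{assumptionnutzNew}. Your additional per-prior decomposition of $E^{\tilde{P}'}[\tilde{X}\vert\mathcal{G}_t]$ and the splitting of the essential supremum over the disjoint events $\lbrace \tilde{\tau}\leq t\rbrace$, $\lbrace\tilde{\tau}>t\rbrace$ is exactly the mechanism (Proposition 2.16 in \cite{bz_2019}) that the paper implicitly reuses, so no genuinely different argument is involved.
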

This family of sublinear conditional expectations $(\tilde{\mathcal{E}}_t)_{t \in [0,T]}$ is called $(\tilde{\mathcal{P}}, \mathbb{G})$-conditional expectation.
In general, the $(\tilde{\mathcal{P}}, \mathbb{G})$-conditional expectation does not satisfy a strong tower property as the $(\mathcal{P}, \mathbb{F})$-conditional expectation does in (\ref{towerNutz}). However, it is shown in Theorem 2.22 in \cite{bz_2019} that $(\tilde{\mathcal{E}}_t(\tilde{X}))_{t \in [0,T]}$ fulfills a weak form of time-consistency, which means
		\begin{equation*}
			\tilde{\mathcal{E}}_s(\tilde{\mathcal{E}}_t(\tilde{X})) \geq \tilde{\mathcal{E}}_s(\tilde{X}) \quad \text{for all } 0 \leq s \leq t \leq T \ \tilde{{P}} \text{-a.s. for all } \tilde{P} \in \tilde{\mathcal{P}}.
		\end{equation*}
Moreover, for the fundamental building blocks of life insurance liabilities, namely term insurance, annuity and pure endowment contract, it is proved in Proposition 2.31 in \cite{bz_2019} that the strong tower property is satisfied.
\begin{remark}
We note that the results in \cite{bz_2019} are also valid if we replace $\Omega$ by $C_x(\mathbb{R}_+,\mathbb{R}^d)$ or the space $D_x(\mathbb{R}_+,\mathbb{R}^d)$ of c\`{a}dl\`{a}g functions in $\mathbb{R}^d$ starting in $x \in \mathbb{R}^d$ equipped with the topology of locally uniform convergence or with the Skorokhod topology, respectively. In the last case a generalized version of Proposition \ref{SublinearNutz} is derived in Theorem 4.29 in \cite{hollender_phd}, where the definition of the concatenation process in $(\ref{concatenation})$ needs to be adapted in the following way
	\begin{equation*}
		(\omega \otimes_{\tau} \tilde{\omega})_t:= \omega_t \textbf{1}_{[0,\tau(\omega))}(t)+(\omega_{\tau(\omega)}+ \tilde{\omega}_{t-\tau(\omega)}-x)\textbf{1}_{[\tau(\omega)),\infty)}(t), \quad t \geq 0
	\end{equation*}
with $\omega, \tilde{\omega} \in \Omega_x$ and $\tau$ stopping time.
\end{remark}

\begin{remark}
In general, we do not need to assume the existence of the intensity process $\mu=(\mu_t)_{t \geq 0}$ as in \cite{bz_2019} in order to define $(\tilde{\mathcal{P}},\mathbb{G})$-conditional expectations, see $(\ref{definitionTau})$. However, with the representation of $\Gamma$ as in $(\ref{representationIntensity})$ we get more tractable results and $\mu$ will also be necessary for using the framework of \cite{fns_2019}. \\
The above construction of the product space $\tilde{\Omega}$ via the stopping time $\tilde{\tau}$ is a special case of the Cox model, see e.g., Remark 2.24 (a) in \cite{jeanblanc_aksamit}, which was suggested for modeling credit risk the first time in \cite{lando}. 
However, the construction of the stopping time $\tilde{\tau}$ in $(\ref{definitionTau})$ can be generalized to include other possible distributions for $\tilde{\tau}$. Note that this will change the definition of $(\tilde{\mathcal{E}}_t)_{t \in [0,T]}$ in $(\ref{BasicDecomposition})$. The Cox model can also be generalized by considering a process $\Gamma$ with  c\`{a}dl\`{a}g instead of continuous paths as done in \cite{jeanblanc_freiburg}. Unfortunately, we cannot transfer this case to the setting of \cite{bz_2019} as the construction of the operator $(\tilde{\mathcal{E}}_t)_{t \in [0,T]}$ requires the continuity of $\Gamma$, see the proofs of Lemma 2.10 and Proposition 2.11 in \cite{bz_2019}.	
\end{remark}

In the framework of insurance modeling we now wish to apply the above results to the valuation of insurance products under model uncertainty. In \cite{bz_2019} it is shown that the conditional sublinear operator $(\tilde{\mathcal{E}}_t)_{t \in [0,T]}$ can be used as pricing operator for life insurance liabilities. For simplicity we focus on endowments, i.e., contracts with payoff $\textbf{1}_{\lbrace \tilde{\tau} > T\rbrace }Y$, where $Y$ is an $\mathcal{F}_T^{*,\mathcal{P}}$-measurable nonnegative upper semianalytic function on $\Omega$ such that $\mathcal{E}(Y):=\sup_{P \in \mathcal{P}} E^P[Y] < \infty$. 
In this case the payment is made at the maturity of the contract only if the default event does not occur before the maturity date. For this contract the following valuation formula is deduced in Lemma 2.26 in \cite{bz_2019}.
\begin{lemma}
Let $Y=Y(\omega)$, $\omega \in \Omega$, be an $\mathcal{F}_T^{*,\mathcal{P}}$-measurable upper semianalytic function such that $\mathcal{E}(\vert Y \vert ) < \infty$. Then for every $t \in [0,T]$,
	\begin{equation*}
		\textbf{1}_{\lbrace \tilde{\tau} > T \rbrace } \quad \text{ and } \quad Ye^{-\int_t^T \mu_u du}
	\end{equation*}
are upper semianalytic functions and belong to $L^1(\tilde{\Omega})$.
Furthermore, if $(\mathcal{P}(t,\omega))_{(t,\omega) \in [0,T] \times \Omega}$ satisfies Assumption \ref{assumptionnutzNew}, the following holds pathwisely for every $t \in [0,T]$
	\begin{equation}\label{valuation}
		\tilde{\mathcal{E}}_t(Y\textbf{1}_{\lbrace \tilde{\tau} > T \rbrace }) = \textbf{1}_{\lbrace \tilde{\tau} > t \rbrace } \mathcal{E}_t(Ye^{-\int_t^T \mu_u du}).
	\end{equation}
\end{lemma}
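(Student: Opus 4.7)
The plan is to verify the measurability/integrability claims first, and then to obtain the identity by plugging $\tilde{X}:=Y\textbf{1}_{\{\tilde{\tau}>T\}}$ directly into the defining formula $(\ref{BasicDecomposition})$ of $\tilde{\mathcal{E}}_t$ from Proposition \ref{extendedOperator}. The indicator $\textbf{1}_{\{\tilde{\tau}>T\}}$ is bounded and $\mathcal{G}$-measurable, hence trivially upper semianalytic and in $L^1(\tilde{\Omega})$. For $Y e^{-\int_t^T\mu_u du}$, the factor $e^{-\int_t^T\mu_u du}$ is nonnegative, bounded by $1$, and $\mathcal{F}_T$-measurable (by Fubini applied to the $\mathbb{F}$-progressive $\mu$), so multiplying it by the upper semianalytic $Y$ preserves upper semianalyticity. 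Since $|Y e^{-\int_t^T\mu_u du}|\leq |Y|$ and $Y$ depends only on $\omega$ (so $\tilde{\mathcal{E}}(|Y|)=\mathcal{E}(|Y|)<\infty$), membership in $L^1(\tilde{\Omega})$ follows.

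For the main identity, I would apply Proposition \ref{ExistenceXi} to $\tilde{X}=Y\textbf{1}_{\{\tilde{\tau}>T\}}$, which is $\sigma(\tilde{\tau})\vee \mathcal{F}_T$-measurable. By inspection the unique function $\varphi$ in $(\ref{existenceVarphi})$ is $\varphi(x,\omega)=Y(\omega)\textbf{1}_{(T,\infty)}(x)$. Substituting into $(\ref{BasicDecomposition})$, the first summand vanishes on $\{\tilde{\tau}\leq t\}\subseteq\{\tilde{\tau}\leq T\}$, since then $\varphi(\tilde{\tau},\cdot)=0$. For the second summand, the Cox construction $(\ref{definitionTau})$ together with $\xi\in\mathcal{U}([0,1])$ yields, for any $s\geq 0$, the pathwise identity $E^{\hat{P}}[\textbf{1}_{\{\tilde{\tau}>s\}}]=\hat{P}(\xi<e^{-\Gamma_s(\omega)})=e^{-\Gamma_s(\omega)}$. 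Because $\{\tilde{\tau}>T\}\subseteq\{\tilde{\tau}>t\}$ and $Y$ does not depend on $\hat{\omega}$, integrating the $\hat{\omega}$-variable out gives
$$E^{\hat{P}}[\textbf{1}_{\{\tilde{\tau}>t\}}Y\textbf{1}_{\{\tilde{\tau}>T\}}]=Y\,E^{\hat{P}}[\textbf{1}_{\{\tilde{\tau}>T\}}]=Y e^{-\Gamma_T}.$$
Multiplying by $e^{\Gamma_t}$ and using $(\ref{representationIntensity})$ produces $Y e^{-\int_t^T\mu_u du}$ inside $\mathcal{E}_t$, which is exactly the right-hand side of $(\ref{valuation})$.

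The only genuine subtlety is ensuring that the argument of $\mathcal{E}_t$ in $(\ref{BasicDecomposition})$ inherits the upper semianalyticity required by Proposition \ref{SublinearNutz}; this is precisely what was checked in the first step, since $e^{\Gamma_t}e^{-\Gamma_T}=e^{-\int_t^T\mu_u du}$ is a nonnegative bounded Borel function and multiplication by such a factor preserves upper semianalyticity. Everything else is an algebraic substitution into $(\ref{BasicDecomposition})$ combined with the one-line Cox-model survival computation, which is immediate from $(\ref{definitionTau})$ and the uniform distribution of $\xi$ under $\hat{P}$.
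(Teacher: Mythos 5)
Your proof is correct, and it follows the route the result is meant to be proved by: the paper itself contains no proof of this lemma (it is imported verbatim from Lemma 2.26 of \cite{bz_2019}), and your argument — checking upper semianalyticity and $L^1(\tilde{\Omega})$-membership, identifying $\varphi(x,\omega)=Y(\omega)\textbf{1}_{(T,\infty)}(x)$ via Proposition \ref{ExistenceXi}, substituting into the defining decomposition $(\ref{BasicDecomposition})$ so that the $\lbrace \tilde{\tau}\leq t\rbrace$-term vanishes, and computing the Cox survival probability $E^{\hat{P}}[\textbf{1}_{\lbrace \tilde{\tau}>s\rbrace}]=e^{-\Gamma_s}$ pathwise from $(\ref{definitionTau})$ and $\xi\in\mathcal{U}([0,1])$ — is exactly the standard derivation in that framework. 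The only cosmetic remark is that $\tilde{\tau}$ may take the value $+\infty$ (on the $\hat{P}$-null set $\lbrace \xi=0\rbrace$), so $\varphi$ should be written with $\textbf{1}_{(T,\infty]}$ to keep the identity genuinely pathwise; this changes nothing in the argument.
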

In order to evaluate $(\ref{valuation})$ we wish to use the results on affine processes under parameter uncertainty from the paper \cite{fns_2019}. However, we need first to embed the framework of \cite{fns_2019} into our setting.  
\end{section}

\begin{section}{Affine processes under parameter uncertainty} \label{SectionAffineProcesses}
We now briefly recall the framework of \cite{fns_2019}. Consider a probability measure $P \in \mathcal{P}(\Omega)$ such that the canonical process $B$ is a continuous $(P,\mathbb{F})$-semimartingale such that $B=B_0 + M^P+ A^P$, where $A^P$ is a stochastic process with continuous paths of finite variation $P$-a.s., $M^P$ is a continuous $(P,\mathbb{F})$-local martingale and $A_0^P=M_0^P=0$. The $(P,\mathbb{F})$-characteristics of the semimartingale $B$ with such a decomposition are then given by the pair $(A^P,C)$ with $C=\langle M^P \rangle$. From now on we only consider semimartingales with absolutely continuous (a.c.) characteristics $(\beta^P,\alpha)$, i.e., with predictable processes $\beta^P$ and $\alpha \geq 0$ such that for all $t \in [0,T]$
\begin{equation*}
	A^P_t=\int_0^{t} \beta^P_s ds, \quad C_t=\int_0^{t} \alpha_s ds.
\end{equation*}
We define the set
\begin{equation*}
	\mathcal{P}_{sem}^{ac}=\lbrace P \in \mathcal{P}(\Omega) | B \text{ is a (} P,\mathbb{F} \text{)-semimartingale with a.c. characteristics} \rbrace.
\end{equation*}
To consider model risk a parameter vector $\theta=(b^0,b^1,a^0,a^1)$ is introduced. For $\underline{b}^i < \overline{b}^i, i=0,1,$ and $\underline{a}^i<\overline{a}^i, i=0,1,$ we define the compact set
\begin{equation} 
	\Theta: = \underbrace{[\underline{b}^0, \overline{b}^0] \times [\underline{b}^1, \overline{b}^1]}_{:=B} \times \underbrace{[\underline{a}^0, \overline{a}^0] \times [\underline{a}^1, \overline{a}^1]}_{:=A} \subset \mathbb{R}^2 \times \mathbb{R}^2_{\geq 0}. \label{Theta}
\end{equation}
Moreover, we define for $x \in \mathbb{R}$ the following set-valued functions for $A$ and $B$
\begin{align} \label{affineBounds}
	\begin{aligned}
	& b^*(x):=\lbrace b^0 + b^1 x: b \in B \rbrace,\\
	& a^*(x):=\lbrace a^0 + a^1 x^+: a \in A \rbrace,
	\end{aligned}
\end{align}
where $a:=(a^0,a^1),b:=(b^0,b^1) \in \mathbb{R}^2$, and $(\cdot)^+:=\max\lbrace \cdot, 0 \rbrace$. As $\Theta$ is an interval, $b^*(x)$ and $a^*(x)$ are intervals and can be described by
\begin{align*}
	& b^*(x)=[\underline{b}^0 + (\underline{b}^1 \textbf{1}_{\lbrace x \geq 0 \rbrace} + \overline{b}^1 \textbf{1}_{\lbrace x < 0 \rbrace})x,\overline{b}^0 + (\overline{b}^1 \textbf{1}_{\lbrace x \geq 0 \rbrace} + \underline{b}^1 \textbf{1}_{\lbrace x < 0 \rbrace})x ], \\
	&a^*(x)=[\underline{a}^0 + \underline{a}^1 x^+, \overline{a}^0+\overline{a}^1 x^+].
\end{align*}
\begin{defi} \cite[Definition 2.1]{fns_2019} \label{defiOrgiAffineDominated}
Let $\Theta$ be a set as in $(\ref{Theta})$ with associated $a^*, b^*$ as in $(\ref{affineBounds})$. Consider $t \in [0,T]$ and $P \in \mathcal{P}_{sem}^{ac}$ be a semimartingale law. We say that $P$ is \emph{affine-dominated on} $(t,T]$ \emph{by} $\Theta$, if $(\beta^P, \alpha)$ satisfy
	\begin{equation*}
		\beta_s^P \in b^*(B_s), \quad \alpha_s \in a^*(B_s),
	\end{equation*}
	for $dP \otimes dt$-almost all $(\omega,s) \in \Omega \times (t,T]$. If $t=0$, we say that $P$ is \emph{affine-dominated by} $\Theta$.
\end{defi}
Let $\mathcal{O}$ be the considered state space, i.e., either $\mathbb{R}$, $\mathbb{R}_{\geq 0}$ or $\mathbb{R}_{>0}$.
\begin{defi} \cite[Definition 2.2]{fns_2019}\label{defiAffineDominated}
Let $\Theta$ be a set as in $(\ref{Theta})$ with associated $a^*, b^*$ as in $(\ref{affineBounds})$. A family of semimartingale laws $P \in \mathcal{P}_{sem}^{ac}$ such that
\begin{enumerate}
\itemsep0pt
	\item $P(B_0=x)=1$,
	\item $P$ is affine dominated by $\Theta$
\end{enumerate}
is called \emph{affine process under parameter uncertainty starting at} $x \in \mathcal{O}$. 
If this holds for $P$, then we use the notation $P \in \mathcal{A}(x,\Theta)$.
Furthermore, for $t \in [0,T]$ we say that $P \in \mathcal{A}(t,x, \Theta)$ if $P \in \mathcal{P}_{sem}^{ac}$ and
\begin{enumerate}
\itemsep0pt
	\item $P(B_t=x)=1$,
	\item $P$ is affine dominated on $(t,T]$ by $\Theta$.
\end{enumerate}
\end{defi}
The state space  $\mathcal{O}$ and the parameter space $\Theta$ cannot be chosen to be completely independent. Otherwise, it can happen that the set $\mathcal{A}(x, \Theta)$ is empty. To avoid this problem, we introduce the definition of proper families of affine processes under parameter uncertainty.
\begin{defi} \label{properFamily}
	The families of non-linear affine process laws $(\mathcal{A}(x,\Theta))_{x \in \mathcal{O}}$ with state space $\mathcal{O}$ are called \emph{proper} if either $\underline{a}^0 >0$ or $\underline{a}^0=\overline{a}^0=0$ and $\underline{b}^0 \geq \frac{\overline{a}^1}{2}>0$ holds.
\end{defi}
\end{section}

\begin{section}{Reduced form setting under model uncertainty with non-linear affine intensities} \label{Section_Combination}
In the sequel, we include affine processes under parameter uncertainty in the setting of Section \ref{Section_Reduced_framework} to obtain analytically tractable models for credit risk/insurance markets under model uncertainty. 

\begin{subsection}{Market model on $\Omega_x$} \label{SubsectionMarketModel}
We consider the space $\Omega_x:=C_x([0,T],\mathbb{R}^2)$ of continuous functions with values in $\mathbb{R}^2$ starting at a fixed point $x \in \mathbb{R}^2$. An element of this space is denoted by $\omega:=(\omega^S, \omega^{\mu})$ and the canonical process $B:=(B^S, B^{\mu})$ given by $B_t(\omega) = (B_t^S(\omega), B_t^{\mu}(\omega))=(\omega^S_t, \omega^{\mu}_t)$, $t \in [0,T]$. We assume to be in the setting introducd in Section \ref{Section_Reduced_framework} applied to $\Omega_x$ and keep the notation as there. On $(\Omega_x, \mathcal{F})$ we consider a financial market model consisting of a riskfree asset $S^0 \equiv 1$ and of a risky asset $S=(S_t)_{t \in [0,T]}$ driven by $B^S$. The mortality intensity $\mu$ in $(\ref{representationIntensity})$ is given by $B^{\mu}$. As the mortality intensity is nonnegative, we assume that $B^{\mu}$ is nonnegative $\mathcal{Z}$-q.s. A sufficient condition that this assumption holds for a non-linear affine process defined as in Definition \ref{defiAffineDominated} is given in Proposition 2.3 in \cite{fns_2019}.

\begin{remark}
 Without loss of generality it is possible to consider a financial market consisting of $d$ risky assets by setting $\Omega_x:=C_x([0,T],\mathbb{R}^{d+1})$ with the canonical process $B_t(\omega):=(B^S_t(\omega),B^{\mu}_t(\omega)):=(\omega_{t}^{S^1},...,\omega_t^{S^d},\omega_t^{\mu})$ for $\omega \in \Omega^d_x, x \in \mathbb{R}^{d+1}, t \in [0,T]$ and $d \in \mathbb{N}$. In this case the $d$ assets $S^1,...,S^d$ are driven by the $d$-dimensional process $B^S$. 
\end{remark}

We define the families of probability measures $(\mathcal{Z}(t,\omega))_{(t,\omega) \in [0,T] \times \Omega_x}$ on $\Omega_x$  by
\begin{equation}
	\mathcal{Z}(t,\omega):=\mathcal{P}^S \cap \mathcal{A}^{\mu}(t,\omega_t^{\mu},\Theta^{\mu}) \label{definitionSets}
\end{equation}
for $(t,\omega) \in [0,T] \times \Omega_x$, where the set $\mathcal{A}^{\mu}(t,\omega_t^{\mu},\Theta^{\mu})$ is introduced next.

\begin{defi}\ \label{IntensityP}
Let $(t,\overline{\omega}) \in [0,T] \times \Omega_x$, $\Theta^{\mu}$ as in $(\ref{Theta})$,  $b^*(B_{\cdot}^{\mu})$ and $a^*(B_{\cdot}^{\mu})$ as in $(\ref{affineBounds})$. Given $P^{\mu} \in \mathcal{P}(\Omega_x)$  we have that $P^{\mu} \in \mathcal{A}^{\mu}(t,\overline{\omega}_t^{\mu},\Theta^{\mu})$ if
\begin{enumerate}
\itemsep0pt
	\item the process $B^{\mu}$ is a one-dimensional $(P^{\mu},\mathbb{F})$-semimartingale with a.c. characteristics with the corresponding predictable processes $\beta^{P^{\mu}}$ and $\alpha \geq 0$,
	\item $P^{\mu}(B_t^{\mu}=\overline{\omega}_t^{\mu})=P^{\mu}(\lbrace \omega \in \Omega_x: B_t^{\mu}(\omega)=\overline{\omega}^{\mu}_t \rbrace)=1$,
	\item $P^{\mu}$ is \emph{affine-dominated by} $(t,T]$ by $\Theta^{\mu}$, i.e., $\beta^{P^{\mu}}_s \in b^*(B_s^{\mu})$ and $\alpha_s \in a^*(B_s^{\mu})$ for $dP^{\mu} \otimes dt$-almost all $(\omega, s) \in \Omega_x \times (t, T]$.
\end{enumerate}

\end{defi}
In addition, the following assumptions hold for the set $\mathcal{P}^S \subseteq \mathcal{P}(\Omega_x)$ in $(\ref{definitionSets})$.
\begin{asum}\ \label{AssumptionMartingale}
\begin{enumerate}
\itemsep0pt
	\item $\mathcal{P}^S$ satisfies Assumption \ref{assumptionnutzNew}, where we define the families $(\mathcal{P}^S(t,\omega))_{(t,\omega) \in [0,T] \times \Omega_x}$ by $\mathcal{P}^S(t,\omega):=\mathcal{P}^S$ for all $(t,\omega) \in [0,T] \times \Omega_x$.
	\item For all $P \in \mathcal{P}^S$ the process $S=(S_t)_{t \in [0,T]}$ is a $(P,\mathbb{F})$-local martingale.
	\end{enumerate}
\end{asum}
Note that by definition the set $\mathcal{Z}(0,\omega)$ is independent of $\omega$, which is crucial for Proposition \ref{SublinearNutz}. From now on, set $\mathcal{Z}:=\mathcal{Z}(0,\omega)$.   

\begin{remark} 
By the definition of the families of probability measures $(\mathcal{Z}(t,\omega))_{(t,\omega) \in [0,T] \times \Omega_x}$ in $(\ref{definitionSets})$ 
the mortality intensity $B^{\mu}$ is a non-linear affine process and the risky asset $S$ is modeled in an arbitrage-free way under model uncertainty. In a more general setting one could work with a set $\mathcal{P}^{S}$ of semimartingale measures for $S$ and follow the approach of \cite{bbkn_2017}. However, in this case the existence of an equivalent local martingale measure can only be guaranteed by considering an additional cemetery state at which the paths jump at the stopping time $\xi$. To avoid further technicalities and to be consistent with classical reduced form models, we directly assume that $\mathcal{P}^S$ are local martingale measures for $S$. Moreover, for our purpose it is sufficient to consider a set of probability measures $\mathcal{P}^S$ instead of the families $(\mathcal{P}^S(t,\omega))_{(t,\omega) \in [0,T] \times \Omega_x}$. However, our results can be easily extended for families by requiring that the local martingale property holds for all $P \in \mathcal{P}^S(0,\omega)$.
\end{remark}

We now show that Assumption \ref{assumptionnutzNew} holds for $(\mathcal{Z}(t,\omega))_{(t,\omega) \in [0,T] \times \Omega_x}$, so that we can define the sublinear operator $(\mathcal{E}_t)_{t \in [0,T]}$ on $\Omega_x$ as in Proposition \ref{SublinearNutz} with respect to $ (\mathcal{Z}(t,\omega))_{(t,\omega) \in [0,T] \times \Omega_x}$ in $(\ref{definitionSets})$.
\begin{prop} \label{intersectionAssumption}
	Let $(\mathcal{Q}(t,{\omega}))_{(t,\omega) \in [0,T] \times \Omega_x}$ and   $(\tilde{\mathcal{Q}}(t,{\omega}))_{(t,\omega) \in [0,T] \times \Omega_x}$ be two families satisfying Assumption \ref{assumptionnutzNew}. Then the families $(\mathcal{P}(t,{\omega}))_{(t,\omega) \in [0,T] \times \Omega_x}$ defined by
	\begin{equation*}
		\mathcal{P}(t,{\omega}):= \mathcal{Q}(t,{\omega}) \cap \mathcal{\tilde{Q}}(t,{\omega}), \quad (t,\omega) \in [0,T] \times \Omega_x
	\end{equation*}
also fulfill Assumption \ref{assumptionnutzNew}.
\end{prop}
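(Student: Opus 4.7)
The plan is to verify the three clauses of Assumption \ref{assumptionnutzNew} for $(\mathcal{P}(t,\omega))_{(t,\omega)}$ one at a time, each time exploiting that membership in the intersection is simply the conjunction of membership in $\mathcal{Q}(t,\omega)$ and in $\tilde{\mathcal{Q}}(t,\omega)$. Throughout, fix $(s,\overline{\omega})$, $P \in \mathcal{P}(s,\overline{\omega})$, a stopping time $\tau \geq s$, and $\eta = \tau^{s,\overline{\omega}} - s$ as in the statement of the assumption.

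For the \emph{measurability} part I would observe that the graph
\begin{equation*}
\{(P',\omega): \omega \in \Omega_x,\ P' \in \mathcal{P}(\tau,\omega)\}
= \{(P',\omega): P' \in \mathcal{Q}(\tau,\omega)\} \cap \{(P',\omega): P' \in \tilde{\mathcal{Q}}(\tau,\omega)\}
\end{equation*}
is the intersection of two analytic subsets of $\mathcal{P}(\Omega_x) \times \Omega_x$, each analytic by hypothesis. Since the class of analytic sets in a Polish space is closed under countable intersections, the graph of $\mathcal{P}$ is itself analytic. This is essentially the only non-algebraic step of the argument.

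For \emph{invariance}, since $P \in \mathcal{Q}(s,\overline{\omega})$, the invariance property for $\mathcal{Q}$ gives a $P$-full set $N_1 \subseteq \Omega_x$ such that $P^{\eta,\omega} \in \mathcal{Q}(\tau,\overline{\omega}\otimes_s \omega)$ for every $\omega \in N_1$. Symmetrically, using $P \in \tilde{\mathcal{Q}}(s,\overline{\omega})$, I obtain a $P$-full set $N_2$ on which $P^{\eta,\omega} \in \tilde{\mathcal{Q}}(\tau,\overline{\omega}\otimes_s \omega)$. On the intersection $N_1 \cap N_2$, which is still $P$-full, both containments hold simultaneously, so $P^{\eta,\omega} \in \mathcal{P}(\tau,\overline{\omega}\otimes_s \omega)$ for $P$-a.e.\ $\omega$. \emph{Stability under pasting} is handled in exactly the same spirit: if $\nu$ is an $\mathcal{F}_\eta$-measurable kernel with $\nu(\omega) \in \mathcal{P}(\tau,\overline{\omega}\otimes_s \omega)$ for $P$-a.e.\ $\omega$, then the inclusions $\nu(\omega) \in \mathcal{Q}(\tau,\overline{\omega}\otimes_s \omega)$ and $\nu(\omega) \in \tilde{\mathcal{Q}}(\tau,\overline{\omega}\otimes_s \omega)$ hold $P$-a.s. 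Applying stability under pasting to each family separately yields that the measure $\overline{P}$ defined by (\ref{OverlineP}) belongs to $\mathcal{Q}(s,\overline{\omega})$ and to $\tilde{\mathcal{Q}}(s,\overline{\omega})$, hence to their intersection $\mathcal{P}(s,\overline{\omega})$.

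The only conceptual subtlety lies in the measurability step, and that reduces to invoking the standard fact that analytic sets form a $\sigma$-closed class; the invariance and pasting clauses are then pure set-theoretic bookkeeping, relying on the fact that the intersection of two $P$-full sets is $P$-full, and that the pasted measure $\overline{P}$ depends only on $P$ and $\nu$, not on which ambient family we view it in.
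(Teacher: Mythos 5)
Your proposal is correct and follows essentially the same route as the paper's own proof: both verify the three clauses of Assumption \ref{assumptionnutzNew} directly, using that the intersection of (finitely many) analytic sets is analytic for the measurability clause, and that the invariance and pasting hypotheses for the intersection family immediately imply the corresponding hypotheses for each of $\mathcal{Q}$ and $\tilde{\mathcal{Q}}$ separately, up to $P$-null sets. Your write-up is in fact slightly more careful than the paper's (explicitly intersecting the two $P$-full sets, and noting that the pasted measure $\overline{P}$ is defined by the same formula irrespective of the ambient family), but there is no substantive difference in approach.
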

\begin{proof}
Let $(s, \overline{\omega}) \in [0,T] \times \Omega_x$, $P \in \mathcal{P}(s,\overline{\omega})$ and $\tau$ be a stopping times such that $ s \leq \tau \leq T$. Set $\eta:=\tau^{s,\overline{\omega}}-s$.\\
	1) \emph{Measurability:} As $\mathcal{Q}(s,\overline{\omega})$ and $\mathcal{\tilde{Q}}(s,\overline{\omega})$ satisfy Assumption \ref{assumptionnutzNew}, we know that the sets
		$\lbrace (P',\omega): \omega \in \Omega, P' \in \mathcal{Q}(\tau,{\omega})\rbrace$ and 
		$\lbrace (P',\omega): \omega \in \Omega, P' \in \mathcal{\tilde{Q}}(\tau,{\omega})\rbrace$
	are analytic. As the countable intersection of analytic sets is again analytic, the property of measurability also holds for $(\mathcal{P}(t,{\omega}))_{(t,\omega) \in [0,T] \times \Omega_x}$.\\
	2) \emph{Invariance:} As $P \in \mathcal{Q}(s,\overline{\omega}) \cap \mathcal{\tilde{Q}}(s,\overline{\omega})$ we get
	\begin{align*}
		&P^{\eta,\omega} \in \mathcal{Q}(\tau,\overline{\omega} \otimes_s \omega) \text{ for } P\text{-a.e. } \omega \in \Omega_x \text{ and}\\
		&P^{\eta,\omega} \in \mathcal{\tilde{Q}}(\tau,\overline{\omega} \otimes_s \omega) \text{ for } P\text{-a.e. } \omega \in \Omega_x
	\end{align*}
	and we can conclude  
	\begin{equation*}
		P^{\eta,\omega} \in \underbrace{\mathcal{Q}(\tau,\overline{\omega} \otimes_s \omega) \cap \mathcal{\tilde{Q}}(\tau,\overline{\omega} \otimes_s \omega)}_{=\mathcal{P}(\tau,\overline{\omega} \otimes_s \omega) }\text{ for } P\text{-a.e. } \omega \in \Omega_x.
	\end{equation*}
	3) \emph{Stability under Pasting:} Let $\kappa: \Omega_x \to \mathcal{P}(\Omega_x)$ be an $\mathcal{F}_{\eta}$-measurable kernel and $\kappa(\omega) \in \mathcal{P}(\tau, \overline{\omega} \otimes_s \omega)$ for $P$-a.e. $\omega \in \Omega_x$. Due to the fact that $P \in \mathcal{Q}(s,\overline{\omega}) \cap \mathcal{\tilde{Q}}(s,\overline{\omega})$, we get for $A \in \mathcal{F}$
	\begin{align*}
		&\overline{P}(A)= \int \int (\textbf{1}_A)^{\eta,\omega}(\omega') \kappa (d \omega', \omega) P(d \omega) \in \mathcal{Q}(s,\overline{\omega}) \text{ and}\\
		&\overline{P}(A)= \int \int (\textbf{1}_A)^{\eta,\omega}(\omega') \kappa (d \omega', \omega) P(d \omega) \in \mathcal{\tilde{Q}}(s, \overline{\omega}). 
	\end{align*} 
So it follows $\overline{P}(A) \in \mathcal{P}(s,\overline{\omega})$. 
\end{proof}
Next we show that $(\mathcal{A}^{\mu}(t,\omega_t^{\mu},\Theta^{\mu}))_{(t,\omega) \in [0,T] \times \Omega_x}$ satisfies Assumption \ref{assumptionnutzNew}. To do so, we first use the results of Lemma 3.1 and 3.2 in \cite{fns_2019} for the set $\mathcal{A}(t,\omega_t,\Theta)$ with $(t,\omega) \in [0,T] \times \Omega_x^1$ defined in Definition \ref{defiAffineDominated}. Note, here we work with the one-dimensional path space $\Omega_x^1=C_x([0,T],\mathbb{R})$ for a fixed $x \in \mathbb{R}$. Then we prove that the families $(\mathcal{A}^{\mu}(t,\omega_t^{\mu},\Theta^{\mu}))_{(t,\omega) \in [0,T] \times \Omega_x}$ in Definition \ref{IntensityP} satisfy Assumptions \ref{assumptionnutzNew}. This requires to transfer the results in \cite{nn_levy_2016} to a two-dimensional setting.

\begin{prop} \label{crucialPoint}
	Let $(\mathcal{A}^{\mu}(s,\omega^{\mu}_s, \Theta^{\mu}))_{(s,\omega) \in [0,T] \times \Omega_x}$ be the families as in Definition \ref{IntensityP} for a fixed set $\Theta^{\mu}$ as in $(\ref{Theta})$. 
Fix $s \in [0,T], \overline{\omega} \in \Omega_x$ and $\tau$ be a stopping time taking values in $[s,T]$. Moreover, let $P \in \mathcal{A}^{\mu}(s,\overline{\omega}^{\mu}(s), \Theta^{\mu})$, then
	\begin{enumerate}
	\itemsep0pt
		\item \emph{Measurability:} The graph $\lbrace (P',\omega): \omega \in \Omega_x, P' \in \mathcal{A}^{\mu}(\tau(\omega),\omega^{\mu}_{\tau(\omega)}, \Theta^{\mu}) \rbrace \subseteq \mathcal{P}(\Omega_x) \times \Omega_x$ is analytic.
		\item \emph{Invariance:} There exists a family of conditional probabilities $(\mathcal{P}^{\tau,\omega})_{\omega \in \Omega_x}$ with respect to $\mathcal{F}_{\tau}$ such that $P^{\tau,\omega} \in \mathcal{A}^{\mu}(\tau(\omega),\omega_{\tau(\omega)}^{\mu}, \Theta^{\mu})$ for $P$-a.e. $\omega \in \Omega_x$.
	 	\item \emph{Stability under Pasting:} Assume that there exists a family of probability measures $(Q^{\omega})_{\omega \in \Omega_x}$ such that $Q^{\omega} \in \mathcal{A}^{\mu}(\tau(\omega),\omega_{\tau(\omega)}^{\mu}, \Theta^{\mu})$ for $P$-a.e. $\omega \in \Omega_x$ and the map $\omega \to Q^{\omega}$ is $\mathcal{F}_{\tau}$ measurable. Then the probability measure $P \otimes Q$ defined by
	  	\begin{equation*}
	 		P \otimes Q (\cdot)=\int_{\Omega_x} Q^{\omega}(\cdot) P(d\omega)
	 	\end{equation*}
	 	is an element of $\mathcal{A}^{\mu}(s,\overline{\omega}^{\mu}(s),\Theta^{\mu})$.
	\end{enumerate} 
\end{prop}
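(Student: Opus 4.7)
The strategy is to lift the one-dimensional results \cite[Lemmas 3.1 and 3.2]{fns_2019}---which establish Assumption \ref{assumptionnutzNew} for the family of affine-dominated semimartingale laws on $\Omega^1_{x^\mu}=C_{x^\mu}([0,T],\mathbb{R})$---to the two-dimensional path space $\Omega_x = C_x([0,T],\mathbb{R}^2)$. The key observation is that the affine-domination condition constrains only the $\mu$-component of the canonical process $B = (B^S, B^\mu)$, while $B^S$ is left completely free. Since the techniques in \cite{fns_2019} are themselves adapted from the non-linear L\'evy framework of \cite{nn_levy_2016}, the only genuinely new step is to check that the analyticity, measurable-selection, and conditioning arguments remain valid when $B^\mu$ is viewed as the second coordinate of the canonical process on $\Omega_x$, rather than the full canonical process on $\Omega^1_{x^\mu}$.

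For (i) \emph{Measurability}, the set of $P \in \mathcal{P}_{sem}^{ac}$ whose a.c.\ characteristics satisfy prescribed Borel constraints is analytic, by the standard representation of characteristics as measurable functionals of $P$; this is exactly the content of \cite[Lemma 3.1]{fns_2019} in one dimension. Since our constraint involves only $\beta^{P^\mu}$ and $\alpha$ attached to the $\mu$-coordinate, the analytic-set argument transfers directly to $\Omega_x$, combined with the Borel measurability of the evaluation map $\omega \mapsto \omega^\mu_{\tau(\omega)}$.

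For (ii) \emph{Invariance}, I would take $P^{\tau,\omega}$ to be the regular conditional probability distribution of $P$ given $\mathcal{F}_\tau$, which exists because $\Omega_x$ is Polish. Standard results on characteristics under conditioning, used analogously in \cite[Lemma 3.2]{fns_2019}, show that $B^\mu$ remains a $(P^{\tau,\omega}, \mathbb{F})$-semimartingale with a.c.\ characteristics coinciding $P$-a.s.\ with the restriction of $(\beta^P, \alpha)$ to $(\tau(\omega), T]$; since the affine bounds $b^*(B^\mu_s), a^*(B^\mu_s)$ are pointwise in the path, affine-domination on $(s, T]$ descends to affine-domination on $(\tau, T]$. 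For (iii) \emph{Stability under Pasting}, I would define $P \otimes Q$ by the given integral and verify, again along the lines of \cite[Lemma 3.2]{fns_2019}, that $\mathcal{F}_\tau$-measurability of $\omega \mapsto Q^\omega$ yields a measure under which $B^\mu$ is an $\mathbb{F}$-semimartingale with a.c.\ characteristics equal to $(\beta^P, \alpha)$ on $[0, \tau]$ and to those of $Q^\omega$ on $(\tau, T]$; both pieces satisfy the affine bounds, so the pasted characteristics do as well, and the initial condition $(P \otimes Q)(B^\mu_s = \overline{\omega}^\mu_s) = 1$ is inherited from $P$.

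The main obstacle I anticipate is technical: verifying that enlarging the filtration from the one generated by $B^\mu$ alone to the two-dimensional $\mathbb{F}$ generated by $(B^S, B^\mu)$ does not spoil the analytic-graph, r.c.p.d., or pasting arguments, since the semimartingale property and its characteristics are filtration-dependent. Because all of these arguments depend only on the Polish structure of $\Omega_x$ and the adaptedness of $B^\mu$ to $\mathbb{F}$---not on $B^\mu$ being the entire canonical process---the one-dimensional proofs of \cite{fns_2019} should carry through essentially verbatim, modulo replacing references to one-dimensional characteristics by the analogous characteristics of $B^\mu$ viewed inside $\Omega_x$.
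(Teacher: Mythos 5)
Your proposal follows essentially the same route as the paper's own proof: the paper likewise reduces to the one-dimensional affine results of Lemmas 3.1--3.2 in \cite{fns_2019} (explicitly re-verifying invariance and pasting via the conditioning and pasting results of \cite{nn_levy_2016}, with the pointwise check that affine domination survives the shift of characteristics, exactly the check you sketch), and then lifts to the two-dimensional space by observing that only the $\mu$-component is constrained, appealing to the $d$-dimensional framework of \cite{nn_measurability_2014} to dispose of precisely the filtration-dependence issue you flag as the main obstacle. Your argument is correct and at the same level of detail as the paper's.
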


\begin{proof}
See Appendix.
\end{proof}
The statements in Proposition \ref{crucialPoint} do not correspond directly to Assumption \ref{assumptionnutzNew} but are in line with the ones used in \cite{el_tan_2013}, \cite{el_tan_2015}. In these papers families of probability measures $(\mathcal{P}(t,\omega))_{(t,\omega) \in [0,T] \times \Omega}$ with support on $\mathcal{D}_{(\omega,t)}:=\lbrace \tilde{\omega}:  \tilde{\omega} \vert_{[0,t]} = \omega \vert_{[0,s]} \rbrace$ are considered. As a consequence the concatenation is defined as 
	\begin{equation}
		(\omega \otimes_t \tilde{\omega})_s:=\omega_s \textbf{1}_{\lbrace s \leq t \rbrace} + \tilde{\omega}_s \textbf{1}_{\lbrace s \geq t \rbrace} \label{concatenationElKaroui}
	\end{equation}
	for $\omega,\tilde{\omega} \in \Omega$, such that $\omega_t=\tilde{\omega}_t$. Note, this is equivalent to $(\ref{concatenation})$ if $\omega_t=\tilde{\omega}_t$, i.e., $(\ref{concatenation})$ extends $(\ref{concatenationElKaroui})$ to the case where $\omega_t \neq \tilde{\omega}_t$. Therefore, the equivalence between Assumption \ref{assumptionnutzNew} and the formulation in  Proposition \ref{crucialPoint} is clear by taking into account the two different conventions in $(\ref{concatenation})$ and $(\ref{concatenationElKaroui})$, respectively.

\begin{remark}
	In \cite{fns_2019} it is stated that the invariance and stability under pasting properties of the families $(\mathcal{A}(t,\omega_t,\Theta))_{(t,\omega) \in [0,T] \times \Omega_x^1}$ in Lemma 3.1 and 3.2 in \cite{fns_2019} follow directly by Theorem 2.1 in \cite{nn_levy_2016}. The latter paper deals with a special case of our setting, as the differential characteristics are assumed to be in a fixed set $\Theta \subset \mathbb{R} \times \mathbb{S}^d_+ \times \mathcal{L}$, where $\mathbb{S}^d_+$ is the family of all positive definite symmetric real-valued $(d \times d)$-matrices and $\mathcal{L}$ the set of L\'evy measures. It is shown in Theorem 2.1 in \cite{nn_levy_2016} that 
\begin{equation}
	\mathcal{P}_{\Theta}=\lbrace P \in \mathcal{P}_{sem}^{ac}: (\beta^P, \alpha^P, F^P) \in \Theta, P\otimes dt \text{-a.e.}\rbrace, \label{ThetaLevy}
\end{equation}
where $\emptyset \neq 
	 \Theta \subseteq \mathbb{R}^d \times \mathbb{S}^d_+ \times \mathcal{L}$ is a Borel measurable subset, satisfies Assumption \ref{assumptionnutzNew}. The main difference with respect to our setting is that $\Theta$ is fixed in \cite{nn_levy_2016} and does not depend on the state of the canonical process $B,$ as in the affine case. In addition, \cite{nn_levy_2016} takes only into account a family $\mathcal{P}$ independent of $(s,\omega) \in \mathbb{R}_+ \times \Omega$. So, we cannot directly conclude that Assumption \ref{assumptionnutzNew} holds for the families $(\mathcal{A}(t,\omega_t,\Theta))_{(t,\omega) \in [0,T] \times \Omega_x^1}$ in our case. 
\end{remark}

In the following we present some examples for the families $(\mathcal{Z}(t,\omega))_{(t,\omega) \in [0,T] \times \Omega_x}$ in $(\ref{definitionSets})$.
\begin{example} \label{Example2RandomG-expectation}
The following sets are considered
	\begin{align*}
		\mathcal{M}&= \lbrace P \in \mathcal{P}(\Omega_x): B \text{ is a local } P \text{-martingale} \rbrace \\
		\mathcal{M}_a&= \lbrace P \in \mathcal{M}: \langle B \rangle^P \text{ is absolutely continuous } P \text{-a.s.} \rbrace,
	\end{align*}
where $B$ is the canonical process and $\langle B \rangle^P$ is the $\mathbb{R}^{2 \times 2}$-valued quadratic variation process of $B$ under $P$. For the sake of simplicity, we here assume $S=B^S$. 
Now, we define a set-valued process $\textbf{D}: \Omega_x \times [0,T] \to 2^{\mathbb{R}^{d \times d}}$ as in \cite{nutz_G-expectation} and consider the following families of probability measures $\mathcal{P}_{\textbf{D}}(s, \overline{\omega})$ for all $(s, \overline{\omega}) \in [0,T] \times \Omega_x$ given by
	\begin{equation}
	\mathcal{P}_{\textbf{D}}(s,\overline{\omega}):= \bigg \lbrace P \in \mathcal{M}_a: \frac{d \langle B \rangle_u^P}{du} (\omega) \in \textbf{D}_{u+s}(\overline{\omega} \otimes_s \omega) \text{ for } du \times P \text{-a.e. } (u,\omega) \in [0,T] \times \Omega_x \bigg \rbrace \label{Set_random_G}
	\end{equation}
	Thereby, it is assumed that the set $\textbf{D}$ satisfies the following regularity assumption.
	\begin{asum} \label{AssumptionG-Expectation}
	For every $t \in [0,T],$
		\begin{equation*}
			\lbrace (s,\omega,M) \in [0,t] \times \Omega_x \times \mathbb{R}^{d \times d}: M\in \textbf{D}_s(\omega) \rbrace \in \mathcal{B}([0,t]) \otimes \mathcal{F}_t \otimes \mathcal{B}(\mathbb{R}^{d \times d}).
		\end{equation*}
	\end{asum}
	Using this condition it is shown in Theorem 4.3 in \cite{nh_2013} that $\mathcal{P}_{\textbf{D}}(\tau,\overline{\omega})$ satisfy Assumption \ref{assumptionnutzNew}, where $\tau$ is a (finite) stopping time and $\overline{\omega} \in \Omega$. \\
	For every $(s,\overline{\omega}) \in [0,T] \times \Omega_x$ set $\mathcal{Z}(s,\overline{\omega}) := P_{\textbf{D}}(s,\overline{\omega})$ and define the process $\textbf{D}=(\textbf{D}_t)_{t \in [0,T]}$ with values in a subset of $\mathbb{R}^{d \times d}$ such that
	\begin{equation*}
		\textbf{D}_t(\omega)=\textbf{D}_t(\omega^{S},\omega^{\mu}) \in \bigg \lbrace \left(
  			 \begin{array}{ccc}
   			 \sigma^S & 0 \\
     		0  & a^{\mu,0} + a^{\mu,1} \omega_t^{\mu}
    		\end{array}
			\right): \sigma^S \in [\underline{\sigma}^S, \overline{\sigma}^S], a^{\mu,0} \in [\underline{a}^0, \overline{a}^0], a^{\mu,1} \in [\underline{a}^1, \overline{a}^1] \bigg \rbrace.
	\end{equation*}
	We first verify that the set $\mathcal{P}_{\textbf{D}}(s,\overline{\omega})$ in $(\ref{Set_random_G})$ with our choice of the process $\textbf{D}$ contains the probability measures we are interested in. To do so, fix $(s,\overline{\omega}) \in [0,T] \times \Omega_x$ and consider $P \in \mathcal{P}_{\textbf{D}}(s,\overline{\omega})$. Then, for $du \times P$-a.e. $(u, \omega) \in [0,T] \times \Omega_x$ we have  
	\begin{equation*}
		\frac{d \langle B \rangle_u^P}{du} (\omega) \in \textbf{D}_{u+s}(\overline{\omega} \otimes_s \omega) \text{ for } \sigma^S \in [\underline{\sigma}^S, \overline{\sigma}^S],  
		a^{\mu,0} \in [\underline{a}^0, \overline{a}^0], a^{\mu,1} \in [\underline{a}^1, \overline{a}^1]  
	\end{equation*}
	if and only if
	\begin{align}
		&\frac{d \langle B^{S}, B^{\mu}\rangle_u^P}{du} (\omega) = 0, \quad \frac{d \langle B^{S} \rangle_u^P}{du} (\omega) \in [\underline{\sigma}^S, \overline{\sigma}^S] , \quad \nonumber \\ 
		&\alpha_u^{P,\mu}(\omega) =\frac{d \langle B^{\mu} \rangle_u^P}{du} (\omega) \in  [\underline{a}^0 + \underline{a}^1 (\overline{\omega^{\mu}} \otimes_s \omega^{\mu})_{u+s} , \overline{a}^0 + \overline{a}^1 (\overline{\omega^{\mu}} \otimes_s \omega^{\mu})_{u+s}] = a^*((\overline{\omega^{\mu}} \otimes_s \omega^{\mu})_{u+s}) \label{Example_G-Expectation}.
	\end{align}
	Note, the statement in $(\ref{Example_G-Expectation})$ is equivalent to 
	\begin{align*}
		& \frac{d \langle B^{S}, B^{\mu}\rangle_u^P}{du} (\omega) = 0, \quad \frac{d \langle B^{S} \rangle_u^{\overline{P}}}{du} (\omega) \in [\underline{\sigma}^S, \overline{\sigma}^S] , \quad  \\ 
		&\alpha_u^{\overline{P},\mu}(\omega)  \in a^*( \omega^{\mu}_u) \text{ for } du \times \overline{P} \text{-a.e. } (u,\omega) \in [s,T] \times \Omega_x, \overline{P} \in \mathcal{P}_{\textbf{D}}(s,\overline{\omega}), \overline{P}(B_s^{\mu}= \overline{\omega}^{\mu}_s)=1,
	\end{align*}
	for $du \times P$-a.e. $(u, \omega) \in [0,T] \times \Omega_x$ with $P \in \mathcal{P}_{\textbf{D}}(s,\overline{\omega})$,
	i.e., $P$ is affine-dominated in the sense of Definition \ref{IntensityP}. 
	Assumption \ref{AssumptionG-Expectation} is obviously satisfied as the set-valued function $\textbf{D}$ is defined by elementary operations. In the outlined setting $B^{\mu}$ is an affine process under parameter uncertainty with $\Theta=\lbrace 0 \rbrace \times \lbrace 0 \rbrace \times [\underline{a}^0, \overline{a}^0] \times [\underline{a}^1, \overline{a}^1]$ and the process $B^S$ is a local martingale for all $P \in \mathcal{P}_{\textbf{D}}(s,\overline{\omega})$.
	In this case the set $\mathcal{Z}$ is saturated. \\
By defining the set-valued process $\textbf{D}$ as a constant process with values in a nonempty, convex and compact set of matrices $\overline{\textbf{D}} \subseteq \mathbb{R}^{2 \times 2}$, we are in the case of the classical $G$-setting. The definition of the families of probability measures $(\mathcal{P}_{\textbf{D}}(s,\overline{\omega}))_{(s,\overline{\omega}) \in [0,T] \times \Omega_x} $ in $(\ref{Set_random_G})$ reduces to
\begin{equation*}
		\mathcal{P}_{\textbf{D}} = \lbrace P \in \mathcal{M}_a: d \langle B \rangle_t^P/dt \in \overline{\textbf{D}} \ P \times dt\text{-a.e.} \rbrace
	\end{equation*}
for all $(s,\overline{\omega}) \in [0,T] \times \Omega_x$. It is shown in Proposition 3.1 in \cite{nh_2013} that $\mathcal{P}_{\textbf{D}}$ satisfies Assumption \ref{assumptionnutzNew}. Assume $\overline{\textbf{D}}$ contains matrices of the form $\tilde{D}=(\tilde{d}_{i,j})_{1 \leq i,j \leq 2}$ with $\tilde{d}_{1,2}=\tilde{d}_{2,1}=0$, $\tilde{d}_{1,1}= \sigma^S$ and $\tilde{d}_{2,2}= \sigma^{\mu}$ with $\sigma^S \in [\underline{\sigma}^S, \overline{\sigma}^S]$ and $\sigma^{\mu} \in [\underline{\sigma}^{\mu}, \overline{\sigma}^{\mu}]$ for $0 < \underline{\sigma}^S \leq  \overline{\sigma}^{S}$, $0 < \underline{\sigma}^{\mu} \leq  \overline{\sigma}^{\mu}$. Then this corresponds to the affine structure of the semimartingale components of $B^{\mu}$ with $\Theta^{\mu}= \lbrace 0 \rbrace \times \lbrace 0 \rbrace \times [\underline{\sigma}^{\mu}, \overline{\sigma}^{\mu}] \times \lbrace 0\rbrace$ and volatility uncertainty for $B^S$. Note, $\overline{\textbf{D}}$ is convex and also compact as the set of diagonalizable matrices with bounded eigenvalues is compact.
\end{example}
	
\begin{example} \label{Example3}
The last example can be generalized by considering an affine structure on the drift of $B^{\mu}$, i.e., $\Theta=[\underline{b}^0, \overline{b}^0] \times [\underline{b}^1, \overline{b}^1] \times [\underline{a}^0, \overline{a}^0] \times [\underline{a}^1, \overline{a}^1]$. Therefore, define the set-valued process $\textbf{L}: \Omega_x \times [0,T] \to 2^{\mathbb{R}^2}$ by
\begin{equation*}
	\textbf{L}_t(\omega^{S},\omega^{\mu}) \in \bigg \lbrace \left(
  		 \begin{array}{ccc}
   		 0 \\
    	 b^{\mu,0} + b^{\mu,1} \omega_t^{\mu}
    	\end{array}
		\right): b^{\mu,0} \in [\underline{b}^0, \overline{b}^0], b^{\mu,1} \in [\underline{b}^1, \overline{b}^1] \bigg \rbrace.
\end{equation*}
For all $(s, \overline{\omega}) \in [0,T] \times \Omega_x$ we set
\begin{align}
	\mathcal{Z}(s,\overline{\omega}): =\bigg \lbrace P \in \mathcal{P}_{sem}^{ac}: \frac{d [ B ]_u^P}{du} (\omega) \in \textbf{D}_{u+s} (\overline{\omega} \otimes_s \omega), \quad \frac{dA^P_u}{du} (\omega) \in \textbf{L}_{u+s} (\overline{\omega} \otimes_s \omega) \\
	\text{ for } du \times P \text{-a.e. } (u,\omega) \in [0,T] \times \Omega_x \bigg \rbrace,  \label{extendedSet}
\end{align}
where $A^P$ denotes the finite variation part of the semimartingale decomposition of $B$. In Proposition 4.3 in \cite{hollender_phd} it is shown that Assumption \ref{assumptionnutzNew} is satisfied under the following condition.
\begin{asum} \label{asumHollender}
For every $t \in [0,T]$
\begin{equation*}
	\lbrace (s,\omega,M,N) \in [0,t] \times \Omega_x \times \mathbb{R}^{2 \times 2} \times \mathbb{R}^{2} : M \in \textbf{D}_s(\omega), N \in \textbf{L}_s(\omega) \rbrace \in \mathcal{B}([0,t]) \otimes \mathcal{F}_t \otimes \mathcal{B}(\mathbb{R}^{2\times 2} \times \mathbb{R}^2).
\end{equation*}
\end{asum}
	In this example the local martingale property of $B^S$ is satisfied on $[t,T]$ for all $P \in \mathcal{Z}(t,\omega)$. 
\end{example}
\end{subsection}

\begin{subsection}{Extended market model on $\Tilde{\Omega}_x$} \label{SubsectionExtendedMarketModel}
Given the market model on $(\Omega_x, \mathcal{F}, \mathbb{F})$, a final time horizon $T>0$ and the families of probability measures $(\mathcal{Z}(t,\omega))_{(t,\omega) \in [0,T] \times \Omega_x}$ in $(\ref{definitionSets})$, we now use the construction described in Section $\ref{Section_Reduced_framework}$ to define an extended market model on $(\tilde{\Omega}_x,\mathcal{G}, \mathbb{G})$. As in $(\ref{probExtendedDependence})$ the families of probability measures $(\tilde{\mathcal{Z}}(t,\omega))_{(t,\omega) \in [0,T] \times \Omega_x}$ are given by
\begin{equation*}
	\tilde{\mathcal{Z}}(t,\omega):=\lbrace \tilde{P} \in \mathcal{P}(\tilde{\Omega}_x): \tilde{P} = P \otimes \hat{P}, P \in \mathcal{Z}(t,\omega) \rbrace 
\end{equation*}
for $(t,\omega) \in [0,T] \times \Omega_x$ with $\tilde{\mathcal{Z}}:=\tilde{\mathcal{Z}}(0,\omega)$. \\
Note, by Assumption \ref{AssumptionMartingale} for all $(t,\omega) \in [0,T] \times \Omega_x$ and $P \in \mathcal{Z}(t,\omega)$ the process $S$ is a local $(P, \mathbb{F})$-martingale. Let $\tilde{P} \in \tilde{\mathcal{Z}}(t,\omega)$, $0 \leq s \leq t \leq T$ and $(\tau_n)_{n \in \mathbb{N}}$ a suitable sequence of stopping times, then
	\begin{equation*}
		E^{\tilde{P}}[S_{t \wedge \tau_n} \vert \mathcal{F}_s] = E^{P \otimes \hat{P}}[S_{t \wedge \tau_n} \vert \mathcal{F}_s]= E^{P}[S_{t \wedge \tau_n} \vert \mathcal{F}_s] =S_{s \wedge \tau_n},
	\end{equation*}
which means that $S$ is also a $(\tilde{P},\mathbb{F})$-local martingale for every $\tilde{P} \in \tilde{\mathcal{Z}}(t,\omega)$. By construction (see Section 6.5 in \cite{bielecki_rutkowski_2004}) we have that in our setting the immersion property holds, i.e., every $\mathbb{F}$-martingale is also a $\mathbb{G}$-martingale. Note that the usual hypotheses on the filtrations are not necessary for the immersion property to hold. Thus, $S$ is also a local $(\tilde{P},\mathbb{G})$-martingale for every $\tilde{P} \in \tilde{\mathcal{Z}}(t,\omega)$, $(t,\omega) \in [0,T] \times \Omega_x$. Assumption \ref{AssumptionMartingale} implies that $S$ is also a local martingale for all $\tilde{P} \in \tilde{\mathcal{Z}}(t, \omega)$ with $(t,\omega) \in [0,T] \times \Omega_x$. 
\end{subsection}
\end{section}

\begin{section}{Longevity bond under model uncertainty} \label{SectionMainLongevity}
In the following we introduce a longevity bond with price process $S^L:=(S^L_t)_{t \in [0,T]}$ and maturity $T>0$ by means of the survivor index as in Section 2.1.2 in \cite{cbd_2006} on the financial market $(\tilde{\Omega}_x,\mathcal{G})$. 
	The survivor index $S^{\text{sur}}=(S_t^{\text{sur}})_{t \in [0,T]}$ is defined by
		\begin{equation} \label{survivor}
		S^{\text{sur}}_t=\exp \bigg( - \int_0^t B^{\mu}_s ds \bigg), \quad t \in [0,T],
		\end{equation}	
	where $B^{\mu}=(B^{\mu}_s)_{s \in [0,T]}$ represents the mortality intensity of a fixed age cohort. 
\begin{defi} \label{defiLongevityClassical}
	A longevity bond with maturity $T$ is a bond paying the amount $S^{\text{sur}}_T$ at time $T$. 	
\end{defi}
In the sequel, our aim is to introduce the price process $S^L:=(S_t^L)_{t \in [0,T]}$ for a longevity bond with maturity $T$ under model uncertainty in a way that the resulting extended market $(S^0,S,S^L)$ is arbitrage-free in the sense of Definition \ref{DefiNA1ExtendedMarket}.\\
In \cite{fns_2019} the upper bond prices under the non-linear affine term structure model $\mathcal{A}(t,x,\Theta), x \in \mathcal{O}$ is defined as \begin{equation} 
\overline{p}(t,T,x):=\sup_{P \in \mathcal{A}(t,x,\Theta)} E^P[e^{-\int_t^T B^{\mu}_s ds}  |B_t^{\mu} = x]	, \quad 0 \leq t \leq T. \label{upperBond}
\end{equation}
This bond price is then given as the solution of generalized Riccati equations in Proposition 6.2 in \cite{fns_2019} and in some important special case leads to a closed-form solution.
Note, that 
	\begin{align*}
		\overline{p}(t,T,x)=&\sup_{P \in \mathcal{A}(t,x,\Theta)} E^P[e^{-\int_t^T B_s^{\mu} ds}  | B_t^{\mu} = x]  
			=\sup_{P \in \mathcal{A}(t,x,\Theta)}  E^P[e^{-\int_t^T B_s^{\mu} ds}  \textbf{1}_{\lbrace B_t^{\mu} = x \rbrace}],
	\end{align*}
where we have used that $P(B_t^{\mu}=x)=1$ for all $P  \in \mathcal{A}(t,x,\Theta)$. Hence $\overline{p}(t,T,x)$ represents the worst case estimation for the bond price given the class of models $\mathcal{A}(t,x,\Theta)$. However, it will not in general coincide with the superreplication price for the defaultable contingent claim $H=\textbf{1}_{\lbrace \tau > t \rbrace}$. We now discuss a more general definition for the longevity bond in our setting. \\
The first candidate for $S^L$ is the process $Y=(Y_t)_{t \in [0,T]}$ given by
\begin{equation}
	Y_t:= \tilde{\mathcal{E}}_t(e^{-\int_0^T B_s^{\mu} ds })= \esssupT_{\tilde{P}' \in \tilde{\mathcal{Z}}(t;P)} E^{\tilde{P}'} [e^{-\int_0^T B_s^{\mu} ds} \vert \mathcal{G}_t ] \quad \tilde{P} \text{-a.s. for all } \tilde{P} \in \tilde{\mathcal{Z}},\label{LongevityEsssupExtended}
\end{equation}
where  $\tilde{\mathcal{Z}}(t;\tilde{P}):=\lbrace \tilde{P}' \in \tilde{\mathcal{Z}}: \tilde{P} = \tilde{P}' \text{ on } \mathcal{G}_t \rbrace$ and $(\tilde{\mathcal{E}}_t)_{t \in [0,T]}$ is the conditional sublinear operator introduced in $(\ref{BasicDecomposition})$. 
By Proposition \ref{extendedOperator} the process $Y$ is $\mathbb{G}^*$-adapted and well-defined as $e^{-\int_0^T B_s^{\mu} ds }$ is a nonnegative Borel-measurable function. Moreover, $\tilde{\mathcal{E}}_t(e^{-\int_0^T B_s^{\mu} ds })$ coincides with $\mathcal{E}_t(e^{-\int_0^T B_s^{\mu} ds })$ by Remark 2.19 1. in \cite{bz_2019}, which means $Y$ is $\mathbb{F}^*$-measurable by Proposition \ref{SublinearNutz}. \\
Unfortunately, the process $Y$ has no  c\`{a}dl\`{a}g paths which is often necessary for standard results in financial mathematics. Motivated by the proof of Theorem 3.2 in \cite{n_2015} we define the value of the longevity bond as c\`{a}dl\`{a}g process as follows.
\begin{defi} \label{defiLongevityBond}
The value process of the longevity bond $S^L:=(S_t^L)_{t \in [0,T]}$ is given by $S^L:=Y' \textbf{1}_{N^c}$, where $N$ belongs to the family $\mathcal{N}^{\tilde{\mathcal{Z}}}_T$ of $(\tilde{P},\mathcal{G}_T)$-null sets for all $\tilde{P} \in \tilde{\mathcal{Z}}$ denoted by $\mathcal{N}^{\tilde{\mathcal{Z}}}_T$ and $Y'$ is given by
\begin{align}
	Y_t' &:= \limsup_{r \downarrow t, r \in \mathbb{Q}} Y_r=\limsup_{r \downarrow t, r \in \mathbb{Q}} \tilde{\mathcal{E}}_r(e^{-\int_0^T B_s^{\mu} ds })=\limsup_{r \downarrow t, r \in \mathbb{Q}} {\mathcal{E}}_r(e^{-\int_0^T B_s^{\mu} ds })  \quad \text{for } t < T \label{limsupI}\\
	Y_T'&:= Y_T =\tilde{\mathcal{E}}_T(e^{-\int_0^T B_s^{\mu} ds })={\mathcal{E}}_T(e^{-\int_0^T B_s^{\mu} ds }). \label{limsupII}
\end{align}
\end{defi}
Under the assumption\footnote{Note, the assumption $\sup_{P \in {\mathcal{Z}}} E^{{P}}[e^{-\int_0^T B_s^{\mu} ds }] < \infty$ is always satisfied for $B^{\mu} > 0$ which is the case for a mortality intensity.}  $\sup_{{P} \in {\mathcal{Z}}} E^{{P}}[e^{-\int_0^T B_s^{\mu} ds }] < \infty$, the process $S^L$ is a c\`{a}dl\`{a}g $({P},\mathbb{F}^{*,{\mathcal{Z}}}_+)$- \linebreak supermartingale for all $P \in \mathcal{Z}$ by the proof of Theorem 3.2 in \cite{n_2015}. By the immersion property $S^L$ is also a $(\tilde{P},\mathbb{G}^{*,\tilde{\mathcal{Z}}}_+)$-supermartingale for all $\tilde{P} \in \tilde{\mathcal{Z}}$. Here, the filtration $\mathbb{G}^{{*,\tilde{\mathcal{Z}}}}:=(\mathcal{G}_t^{*,\tilde{\mathcal{Z}}})_{t \in [0,T]}$ (respectively $\mathbb{F}^{*,\mathcal{Z}}$) is defined similar as in $(\ref{filtrationP})$, i.e.,
\begin{equation}
	\mathcal{G}_t^{*,\tilde{\mathcal{Z}}}:= \mathcal{G}_t^* \vee \mathcal{N}_T^{\tilde{\mathcal{Z}}}, \quad t \in [0,T].
\end{equation}
It is quite standard to use the filtration $\mathbb{G}^{*,\tilde{\mathcal{Z}}}$ in the framework of model uncertainty, see e.g., \cite{n_2015}, \cite{nn_2013}. By considering the right-continuous version $\mathbb{G}^{*,\tilde{\mathcal{Z}}}_+$ of this filtration, we intuitively give the agent a ``little bit'' more information than the one available up to time $t$ at the market. For this reason $\mathbb{G}_+^{*,\tilde{\mathcal{Z}}}$ is not the natural filtration for financial applications. Hence, we discuss here further conditions to achieve more regularity on the paths of $(\tilde{\mathcal{E}}_t)_{t \in [0,T]}$. 
\begin{prop} \label{propModification}
Let Assumption \ref{assumptionnutzNew} hold for the families $(\mathcal{P}(t,\omega))_{(t,\omega) \in [0,T] \times \Omega_x}$ and $X$ be a nonnegative upper semianalytic function on $\Omega_x$. 
Furthermore, assume that for all $P \in \mathcal{P}, t \in [0,T]$ the set $\Phi_t^{P,X}:=\lbrace E^Q [X \vert \mathcal{F}_t]: Q \in \mathcal{P}(t;P) \rbrace$ is upward directed, i.e., for all $E^{Q_1} [X \vert \mathcal{F}_t], E^{Q_2} [X \vert \mathcal{F}_t] \in \Phi_t^{P,X}$ there exists $E^{Q}[X \vert \mathcal{F}_t] \in \Phi_t^{P,X}$ such that $E^{Q}[X \vert \mathcal{F}_t]=E^{Q_1} [X \vert \mathcal{F}_t] \vee E^{Q_2} [X \vert \mathcal{F}_t]$ $P$-a.s. Then the process $(\mathcal{E}_t(X))_{t \in [0,T]}$ has a c\`{a}dl\`{a}g $\mathcal{P}$-modification.
\end{prop}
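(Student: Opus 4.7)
The plan is to apply classical supermartingale path regularization to $(\mathcal{E}_t(X))_{t \in [0,T]}$ separately under each $P \in \mathcal{P}$, and to verify that the universal candidate
\[
Y'_t := \limsup_{r \downarrow t,\ r \in \mathbb{Q} \cap [0,T]} \mathcal{E}_r(X) \quad (t < T), \qquad Y'_T := \mathcal{E}_T(X),
\]
which is defined pathwise, independently of $P$, serves simultaneously as a c\`adl\`ag modification under every $P \in \mathcal{P}$, hence as the desired c\`adl\`ag $\mathcal{P}$-modification.

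First, I would establish the supermartingale property under each $P \in \mathcal{P}$. For $s \le t$, the tower property (\ref{towerNutz}) gives $\mathcal{E}_s(X) = \mathcal{E}_s(\mathcal{E}_t(X))$ pathwise on $\Omega_x$, and the consistency condition (\ref{repesssup}) yields
\[
\mathcal{E}_s(\mathcal{E}_t(X)) = \esssup_{P' \in \mathcal{P}(s;P)} E^{P'}[\mathcal{E}_t(X) \mid \mathcal{F}_s] \ge E^P[\mathcal{E}_t(X) \mid \mathcal{F}_s] \quad P\text{-a.s.},
\]
so that $(\mathcal{E}_t(X))_{t \in [0,T]}$ is a $(P, \mathbb{F}^{*,\mathcal{P}})$-supermartingale (nonnegativity of $X$ controls integrability issues).

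Second, I would verify that $t \mapsto E^P[\mathcal{E}_t(X)]$ is right-continuous. Upward directedness of $\Phi_t^{P,X}$ produces a sequence $(Q_n) \subset \mathcal{P}(t;P)$ with $E^{Q_n}[X \mid \mathcal{F}_t] \uparrow \mathcal{E}_t(X)$ $P$-a.s.; monotone convergence and $Q_n = P$ on $\mathcal{F}_t$ yield
\[
E^P[\mathcal{E}_t(X)] \;=\; \lim_n E^{Q_n}[X] \;=\; \sup_{Q \in \mathcal{P}(t;P)} E^Q[X].
\]
Since $\mathcal{P}(s;P) \subseteq \mathcal{P}(t;P)$ for $s \ge t$, only the bound $\liminf_{s \downarrow t} \sup_{Q \in \mathcal{P}(s;P)} E^Q[X] \ge \sup_{Q \in \mathcal{P}(t;P)} E^Q[X]$ is nontrivial. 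Given $Q \in \mathcal{P}(t;P)$ and $s > t$, I would construct $\tilde Q_s \in \mathcal{P}(s;P)$ by pasting $P$ on $[0,s]$ with the regular conditional $Q^{s,\omega}$ afterwards. By invariance applied to $Q$, the kernel $\omega \mapsto Q^{s,\omega}$ takes values in $\mathcal{P}(s,\omega)$ off an $\mathcal{F}_s$-measurable $Q$-null set; after modifying the kernel by a measurable selection of an element of $\mathcal{P}(s,\omega)$ on that exceptional set, stability under pasting (Assumption \ref{assumptionnutzNew}.3) ensures $\tilde Q_s \in \mathcal{P}$, and $\tilde Q_s = P$ on $\mathcal{F}_s$ by construction. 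A direct computation gives $E^{\tilde Q_s}[X] = E^P[E^Q[X \mid \mathcal{F}_s]]$, and martingale convergence together with $P = Q$ on $\mathcal{F}_t$ yields $E^{\tilde Q_s}[X] \to E^Q[X]$ as $s \downarrow t$, which forces the claimed right-continuity.

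Third, with right-continuity of $t \mapsto E^P[\mathcal{E}_t(X)]$ secured, Doob's supermartingale regularization theorem implies that $Y'$ is a c\`adl\`ag $(P, \mathbb{F}^{*,\mathcal{P}}_+)$-supermartingale with $Y'_t = \mathcal{E}_t(X)$ $P$-a.s.\ at every $t \in [0,T]$. Because $Y'$ is defined pathwise from the countable family $(\mathcal{E}_r(X))_{r \in \mathbb{Q}}$, the same process $Y'$ works for every $P \in \mathcal{P}$, producing a single c\`adl\`ag $\mathcal{P}$-modification. The main obstacle is the pasting step: stability under pasting demands the kernel $Q^{s,\omega}$ to lie in $\mathcal{P}(s,\omega)$ for $P$-a.e.\ $\omega$, whereas invariance supplies this only $Q$-a.s.; resolving this mismatch requires a measurable selection on the $\mathcal{F}_s$-measurable bad set, plus a careful stability argument for the conditional-expectation computation as $s \downarrow t$ despite $P \ne Q$ on $\mathcal{F}_s \setminus \mathcal{F}_t$.
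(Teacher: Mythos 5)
You follow the same architecture as the paper's proof: the supermartingale property of $(\mathcal{E}_t(X))_{t\in[0,T]}$ under each $P\in\mathcal{P}$, the reduction of the existence of a c\`{a}dl\`{a}g $\mathcal{P}$-modification to right-continuity of $t\mapsto E^P[\mathcal{E}_t(X)]$, the identity $E^P[\mathcal{E}_t(X)]=\sup_{Q\in\mathcal{P}(t;P)}E^Q[X]$ obtained from upward directedness (this is literally the paper's step 1), and regularization via the limsup along rationals. The genuine gap is in your pasting step, and it is not the measurable-selection issue you flag (that one is fixable, e.g.\ by using $P^{s,\omega}$ as the kernel on the exceptional set); it is the final claim that ``martingale convergence together with $P=Q$ on $\mathcal{F}_t$ yields $E^{\tilde Q_s}[X]\to E^Q[X]$''. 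Your construction gives $E^{\tilde Q_s}[X]=E^P[g_s]$, where $g_s(\omega):=E^{Q^{s,\omega}}[X^{s,\omega}]$ is a version of $E^Q[X\mid\mathcal{F}_s]$ \emph{up to $Q$-null sets}. Backward martingale convergence controls $E^Q[g_s]\to E^Q[X]$, i.e.\ the integral of $g_s$ \emph{under $Q$}; to pass to $E^P[g_s]$ you would need $P$ and $Q$ to agree on $\mathcal{F}_s$, or at least on $\mathcal{F}_{t+}$, where the limit of $g_s$ lives. They agree only on $\mathcal{F}_t$, and for mutually singular priors on the raw, non-right-continuous filtration, agreement on $\mathcal{F}_t$ gives no control at all over $\mathcal{F}_{t+}$-measurable quantities.

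This is not a reparable technicality, as the quantity you are trying to approximate can genuinely fail to be right-continuous under the stated hypotheses. Work in the classical $G$-setting of Example \ref{Example2RandomG-expectation} with a constant diagonal $\overline{\textbf{D}}$ whose $S$-entry ranges over $[1,4]$; Assumption \ref{assumptionnutzNew} holds there, and by Proposition \ref{G_Expectation_Fork} (condition $(\ref{forkStability})$) the set $\Phi_t^{P,X}$ is upward directed for every bounded measurable $X$. Fix $t<T$ and let $A$ be the Borel set of paths whose $S$-component has dyadic quadratic variation with right derivative equal to $4$ at $t$; membership in $A$ is determined by $\omega\vert_{[t,t+\varepsilon]}$ for every $\varepsilon>0$. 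Put $X:=\textbf{1}_A$, let $P$ have quadratic-variation density $1$, and let $Q$ have density $1$ on $[0,t]$ and $4$ on $(t,T]$, so that $Q\in\mathcal{P}(t;P)$ and $E^Q[X]=1$. Your pasted measure $\tilde Q_s$ has density $1$ on $[0,s]\supset[t,s]$, hence $E^{\tilde Q_s}[X]=0$ for every $s>t$; in fact every $Q'\in\mathcal{P}(s;P)$ satisfies $E^{Q'}[X]=0$, so $\sup_{Q'\in\mathcal{P}(s;P)}E^{Q'}[X]=0$ for all $s>t$ while $\sup_{Q\in\mathcal{P}(t;P)}E^Q[X]=1$. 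Equivalently, $\mathcal{E}_s(X)=\textbf{1}_A$ pathwise for every $s>t$ while $\mathcal{E}_t(X)\equiv 1$, so $E^P[\mathcal{E}_\cdot(X)]$ drops from $1$ to $0$ immediately after $t$ and no c\`{a}dl\`{a}g $P$-modification can exist. You did locate the crux correctly, and for comparison the paper's own proof does not resolve it either: its step 2 starts from an $\varepsilon$-optimizer $P_n$ of the time-$t$ supremum that is \emph{assumed} to lie in $\mathcal{P}(t_n;P)$, which is precisely the statement at issue and fails in this example. Closing the argument requires more than the stated hypotheses --- for instance regularity of $X$ (such as quasi-continuity, which the longevity payoff $e^{-\int_0^T B_s^{\mu}ds}$ enjoys) or additional structure on $\mathcal{P}$ --- rather than a finer pasting or null-set argument.
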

As $\mathcal{E}(X)$ is a $(P,\mathbb{F}^*)$-supermartingale for all $P \in \mathcal{P}$, the existence of a c\`{a}dl\`{a}g $\mathcal{P}$-modification is equivalent to the fact that $t \mapsto E^P[\mathcal{E}_t(X)]$ is right-continuous for all $P \in \mathcal{P}$ \cite[p. 2037]{ns_2012}. Note, this is a generalization of Theorem 7 in \cite{protter} which states that a supermartingale $Y$ has a unique c\`{a}dl\`{a}g modification with respect to $P$ if and only if $(E^P[Y_t])_{t \in [0,T]}$ is right-continuous.

\begin{proof}
Fix $P \in \mathcal{P}$ and $t \in [0,T]$. Choose a sequence of $(t_n)_{n \in \mathbb{N}}$ with $t_n >t$ and $t_n \downarrow t$ for $n \to \infty$. We show that $t \mapsto E^P[\mathcal{E}_t(X)]$ is right-continuous for all $P \in \mathcal{P}$. The proof follows the idea of Proposition 4.3 in \cite{kramkov}.\\
1) In a first step we show that 
	\begin{equation}
		E^P \bigg[\esssup_{P' \in \mathcal{P}(t;P)} E^{P'}[X| \mathcal{F}_t] \bigg] = \sup_{P' \in \mathcal{P}(t;P)} E^P\big[E^{P'} [X \vert \mathcal{F}_t]\big]. \label{supremum-rep}
	\end{equation}
As we assumed that the set $\Phi_t^{P,X}$ is upward directed it follows by Theorem A.33 in \cite{follmer} that there exists an increasing sequence $(E^{Q_n} [X \vert \mathcal{F}_t])_n$ with $Q_n \in \mathcal{P}(t;P)$ in $\Phi_t^{P,X}$ such that $\lim_{n \to \infty} E^{Q_n} [X \vert \mathcal{F}_t]= \esssup_{P' \in \mathcal{P}(t;P)} E^{P'}[X| \mathcal{F}_{t}]$. Then by Fatou's Lemma we get
	\begin{align*}
	E^P \bigg[\esssup_{P' \in \mathcal{P}(t;P)} E^{P'}[X| \mathcal{F}_t] \bigg] &= E^P \bigg[\lim_{n \to \infty} E^{Q_n} [X \vert \mathcal{F}_t]\bigg] \leq \liminf_{n \to \infty} E^P \bigg[ E^{Q_n} [X \vert \mathcal{F}_t]\bigg ]\\ &
	\leq \sup_{P' \in \mathcal{P}(t;P)} E^P\big[E^{P'} [X \vert \mathcal{F}_t]\big].
	\end{align*} 
As it holds
	\begin{equation*}
	E^P \bigg[\esssup_{P' \in \mathcal{P}(t;P)} E^{P'}[X| \mathcal{F}_t] \bigg] \geq \sup_{P' \in \mathcal{P}(t;P)} E^P\big[E^{P'} [X \vert \mathcal{F}_t]\big],
	\end{equation*} 
we obtain the claim in $(\ref{supremum-rep})$.\\
2) As $P' \in \mathcal{P}(t;P)$ it holds for all $A \in \mathcal{F}_t$ that $E^P[\textbf{1}_A] = E^{P'}[\textbf{1}_A]$. Furthermore, every nonnegative $\mathcal{F}_t$-measurable random variable $Y$ can be approximated by simple functions by Sombrero's Lemma. So it follows $E^P[Y]=E^{P'}[Y]$ by monotone convergence.\\
Let $\epsilon > 0$. As $P_n \in \mathcal{P}(t_n;P) \subseteq  \mathcal{P}(t;P)$ for $t_n>t$, we have by $(\ref{supremum-rep})$ that
	\begin{align}
		E^P \bigg[\esssup_{P' \in \mathcal{P}(t;P)} E^{P'}[X| \mathcal{F}_t] \bigg] & < E^P\big[E^{P_n} [X \vert \mathcal{F}_t]\big] + \epsilon		= E^{P_n} \big[E^{P_n} [X \vert \mathcal{F}_t]\big] + \epsilon= E^{P_n}[X]+ \epsilon \nonumber\\
		&= E^{P_n} \big[E^{P_n} [X \vert \mathcal{F}_{t_n}]\big] + \epsilon  
		= E^{P} \big[E^{P_n} [X \vert \mathcal{F}_{t_n}]\big] + \epsilon \nonumber \\
		&\leq \sup_{P' \in \mathcal{P}(t_n;P)} E^{P}\big[ E^{P'} [X \vert \mathcal{F}_{t_n}]\big] + \epsilon  \nonumber \\
		&= E^P \bigg[\esssup_{P' \in \mathcal{P}(t_n;P)} E^{P'}[X| \mathcal{F}_{t_n}] \bigg]  + \epsilon. \nonumber
	\end{align}
3) Due to the supermartingale property of $\mathcal{E}(X)$ the expectation is decreasing such that 
	\begin{equation*}
		E^P \bigg[\esssup_{P' \in \mathcal{P}(t;P)} E^{P'}[X| \mathcal{F}_t] \bigg] \geq \lim_{n \to \infty}	E^P \bigg[\esssup_{P' \in \mathcal{P}(t_n;P)} E^{P'}[X| \mathcal{F}_{t_n}] \bigg],
	\end{equation*}
which finishes the proof.
\end{proof}
In the proof of Lemma 3.4 in \cite{ns_2012} it is used that the set $\Phi_t^{P,X}$ is upward directed for all $t \in [0,T], P \in \mathcal{P}$ and $X$ $\mathcal{F}_T$-measurable with $\sup_{P \in \mathcal{P}} E^P[\vert X \vert] < \infty$ if $\mathcal{P}$ is stable under pasting, i.e., for all $P \in \mathcal{P}, \tau \ \mathbb{F}$-stopping time, $\Lambda \in \mathcal{F}_{\tau}, P_1,P_2 \in \mathcal{P}(\mathcal{F}_{\tau};P)$ the measure 
	\begin{equation}
		\overline{P}(A)=:E^P[P_1(A \vert \mathcal{F}_{\tau})\textbf{1}_{\Lambda} + P_2(A \vert \mathcal{F}_{\tau})\textbf{1}_{\Lambda^c}], \quad A \in \mathcal{F} \label{forkStability}
	\end{equation}
is again an element of $\mathcal{P}$. Note that condition $(\ref{forkStability})$ is not the same concept as the stability under pasting of Assumption \ref{assumptionnutzNew}. 	

\begin{remark} 
In Proposition \ref{propModification} we first choose the contingent claim $X$ in which we are interested in and then assume that only for this fixed $X$ the set $\Phi_t^{P,X}$ is upward directed. However, if $\mathcal{P}$ satisfies the property of stability under pasting in the sense of $(\ref{forkStability})$, it follows that $\Phi_t^{X,P}$ is upward directed for all $P \in \mathcal{P}$ and any $\mathcal{F}_T$-measurable random variables $X$ with $\sup_{P \in \mathcal{P}} E^P[\vert X \vert] < \infty$. Thus for $X$ nonnegative, upper semianaliytic and $\mathcal{F}_T$-measurable the assumptions in Proposition \ref{propModification} are always satisfied if $\mathcal{P}$ is stable under pasting in the sense of $(\ref{forkStability})$.
\end{remark}
We now provide an example of families of priors for which $(\ref{forkStability})$ is satisfied. 
\begin{prop} \label{G_Expectation_Fork}
	Consider the setting of Example $\ref{Example2RandomG-expectation}$ with the families $(\mathcal{P}_{\textbf{D}}(t,\omega))_{(t,\omega) \in [0,T] \times \Omega_x}$ defined as in $(\ref{Set_random_G})$. Furthermore, let Assumption \ref{AssumptionG-Expectation} hold and the process $\textbf{D}=(\textbf{D}_t)_{t \in [0,T]}$ be given for fixed $\underline{a}^0, \overline{a}^0, \underline{a}^1, \overline{a}^1$. Then the set $\mathcal{P}_{\textbf{D}}$ satisfies $(\ref{forkStability})$ and thus the set $\Phi_t^{P,X}$ is upward directed for all $t \in [0,T], P \in \mathcal{P}_{\textbf{D}}$ and $X$ $\mathcal{F}_T$-measurable. 
\end{prop}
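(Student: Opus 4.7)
The plan is to verify \eqref{forkStability} directly via a convenient working identity, and then to deduce the upward-directedness of $\Phi_t^{P,X}$ by the standard pasting construction for essential suprema.

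Fix $P \in \mathcal{P}_{\textbf{D}}$, an $\mathbb{F}$-stopping time $\tau$, a set $\Lambda \in \mathcal{F}_\tau$ and $P_1, P_2 \in \mathcal{P}_{\textbf{D}}(\mathcal{F}_\tau; P)$. Since $\textbf{1}_\Lambda$ is $\mathcal{F}_\tau$-measurable and $P_i=P$ on $\mathcal{F}_\tau$ for $i=1,2$, the identity $E^P[P_i(A\vert\mathcal{F}_\tau)\textbf{1}_\Lambda]=E^{P_i}[P_i(A\cap\Lambda\vert\mathcal{F}_\tau)]=P_i(A\cap\Lambda)$ reduces the pasted measure from \eqref{forkStability} to
\begin{equation*}
    \overline{P}(A) \;=\; P_1(A \cap \Lambda) + P_2(A \cap \Lambda^c), \qquad A \in \mathcal{F}.
\end{equation*}
In particular $\overline{P}$ is a probability measure, $\overline{P}=P$ on $\mathcal{F}_\tau$, and
\begin{equation*}
    E^{\overline{P}}[Y\textbf{1}_\Lambda] = E^{P_1}[Y \textbf{1}_\Lambda], \qquad E^{\overline{P}}[Y \textbf{1}_{\Lambda^c}] = E^{P_2}[Y\textbf{1}_{\Lambda^c}]
\end{equation*}
for every bounded $\mathcal{F}$-measurable $Y$. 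This is the working identity driving the rest of the proof.

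The task is then to show $\overline{P}\in\mathcal{P}_{\textbf{D}}$. For the local martingale property of $B$ under $\overline{P}$ I would use the localizing sequence $\tau_n := \inf\{t \geq 0 : \vert B_t \vert \geq n\}$, which is an $\mathbb{F}$-stopping time by continuity of $B$ and makes $B^{\tau_n}$ bounded. For $s\leq t$ and $C\in\mathcal{F}_s$, I would split $\textbf{1}_C=\textbf{1}_C\textbf{1}_{\{\tau\leq s\}}+\textbf{1}_C\textbf{1}_{\{\tau>s\}}$: on $\{\tau\leq s\}$ the relevant indicator is $\mathcal{F}_s$-measurable and the $P_i$-martingale property transfers immediately via the working identity; on $\{\tau>s\}$ one introduces the bounded $\mathbb{F}$-stopping time $\sigma:=(s\vee\tau)\wedge t$, observes that $C\cap\Lambda\cap\{\tau>s\}\in\mathcal{F}_\sigma$, and applies the optional stopping theorem under $P_1$ and $P_2$ to conclude $E^{\overline{P}}[(B^{\tau_n}_t-B^{\tau_n}_s)\textbf{1}_C]=0$. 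For the affine-domination condition I would invoke Karandikar's pathwise construction of the quadratic variation of a continuous semimartingale, which produces a pathwise process $\langle B\rangle^{*}$ agreeing $P'$-a.s.\ with $\langle B\rangle^{P'}$ for every semimartingale law $P'$. The working identity then forces $\langle B\rangle^{\overline{P}}=\langle B\rangle^{P_1}$ $\overline{P}$-a.s.\ on $\Lambda$ and $\langle B\rangle^{\overline{P}}=\langle B\rangle^{P_2}$ $\overline{P}$-a.s.\ on $\Lambda^c$, so that the pathwise inclusion $d\langle B\rangle^{P_i}_u/du\in\textbf{D}_u$ (which is measurable by Assumption \ref{AssumptionG-Expectation}) transfers to $\overline{P}$. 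Hence $\overline{P}\in\mathcal{P}_{\textbf{D}}$, proving \eqref{forkStability}.

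For the upward-directedness of $\Phi_t^{P,X}$, given $Q_1,Q_2\in\mathcal{P}_{\textbf{D}}(t;P)$ I would set $\Lambda:=\{E^{Q_1}[X\vert\mathcal{F}_t]\geq E^{Q_2}[X\vert\mathcal{F}_t]\}\in\mathcal{F}_t$ and apply \eqref{forkStability} with $\tau=t$ to paste $Q_1,Q_2$ along $\Lambda$. For $A\in\mathcal{F}_t$ one checks $Q_i(A\vert\mathcal{F}_t)=\textbf{1}_A$, so the resulting measure $Q$ coincides with $P$ on $\mathcal{F}_t$ and thus lies in $\mathcal{P}_{\textbf{D}}(t;P)$. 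The working identity then yields
\begin{equation*}
    E^Q[X\vert\mathcal{F}_t]=E^{Q_1}[X\vert\mathcal{F}_t]\textbf{1}_\Lambda+E^{Q_2}[X\vert\mathcal{F}_t]\textbf{1}_{\Lambda^c}=E^{Q_1}[X\vert\mathcal{F}_t]\vee E^{Q_2}[X\vert\mathcal{F}_t] \quad P\text{-a.s.}
\end{equation*}
The main obstacle is the verification of the local martingale property under $\overline{P}$: because $\Lambda\in\mathcal{F}_\tau$ is generally not $\mathcal{F}_s$-measurable, the careful splitting along $\{\tau\leq s\}$ versus $\{\tau>s\}$ combined with optional stopping at $(s\vee\tau)\wedge t$ is essential before the working identity can be exploited; the remaining steps become routine once this technical point is handled.
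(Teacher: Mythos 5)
Your overall architecture matches the paper's: you reduce the pasted measure to the working identity $\overline{P}(A)=P_1(A\cap\Lambda)+P_2(A\cap\Lambda^c)$ (this is exactly the paper's step 1, see $(\ref{Prewritten})$), then verify that $B$ stays a local martingale with absolutely continuous quadratic variation valued in $\textbf{D}$, and finally deduce upward-directedness by pasting along the comparison set $\Lambda=\lbrace E^{Q_1}[X\vert\mathcal{F}_t]\geq E^{Q_2}[X\vert\mathcal{F}_t]\rbrace$ at $\tau=t$. Your treatment of the quadratic variation via Karandikar's pathwise construction is a legitimate alternative to the paper's partition argument for $(\ref{QuadraticVariation})$, and your upward-directedness step is the standard one (which the paper delegates to the cited literature). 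However, the step you yourself single out as the crux, the local martingale property under $\overline{P}$, contains a genuine error.

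The claim $C\cap\Lambda\cap\lbrace\tau>s\rbrace\in\mathcal{F}_\sigma$ for $\sigma=(s\vee\tau)\wedge t$ is false: take $\tau\equiv T$, so that $\sigma=t$, while $C\cap\Lambda\cap\lbrace\tau>s\rbrace=C\cap\Lambda$ with $\Lambda\in\mathcal{F}_\tau=\mathcal{F}_T$ arbitrary, which is in general not $\mathcal{F}_t$-measurable; the set only lies in $\mathcal{F}_\tau$. More importantly, no optional stopping argument applied under $P_1$ and $P_2$ \emph{separately} can close this case: in the same example $P_1=P_2=P$, and the individual term $E^{P_1}[(B^{\tau_n}_t-B^{\tau_n}_s)\textbf{1}_{C\cap\Lambda}]$ does not vanish for general $\Lambda\in\mathcal{F}_T$ (if it vanished for every $\Lambda$ one would get $B^{\tau_n}_t=B^{\tau_n}_s$ a.s.); only the \emph{sum} of the $\Lambda$- and $\Lambda^c$-terms vanishes. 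What is missing is precisely the mechanism the paper uses in $(\ref{Splitting})$: on $\lbrace\tau>s\rbrace$ split the increment at $\tau\wedge t$; the increment after $\tau\wedge t$ is killed under each $P_i$ by optional sampling conditioned on $\mathcal{F}_\tau$, using that $C\cap\Lambda\cap\lbrace\tau>s\rbrace\in\mathcal{F}_\tau$; the remaining integrand $(B^{\tau_n}_{\tau\wedge t}-B^{\tau_n}_s)\textbf{1}_{C\cap\Lambda\cap\lbrace\tau>s\rbrace}$ is $\mathcal{F}_\tau$-measurable, so the hypothesis $P_1=P_2=P$ on $\mathcal{F}_\tau$ lets you replace both $P_1$ and $P_2$ by $P$, \emph{recombine} the $\Lambda$- and $\Lambda^c$-pieces into $\textbf{1}_{C\cap\lbrace\tau>s\rbrace}$, and only then apply optional sampling under $P$. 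Your sketch never uses the agreement of $P_1,P_2$ with $P$ on $\mathcal{F}_\tau$ after establishing the working identity, yet that agreement is exactly what makes this step work. With the recombination inserted, your proof goes through and coincides in substance with the paper's.
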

\begin{proof}
	We prove this result in several steps. From now on let $P \in \mathcal{P}_{\textbf{D}}, \tau \in [0,T]$ a $\mathbb{F}$-stopping time, $\Lambda \in \mathcal{F}_{\tau}$ and $P_1,P_2 \in \mathcal{P}(\tau;P)$. In addition, consider $\overline{P}$ defined as in $(\ref{forkStability})$.  \\
1) We first show that 
	\begin{equation}
			\overline{P}=P \text{ on } \mathcal{F}_{\tau}. \label{Step1}
	\end{equation}
Let $A \in \mathcal{F}_{\tau}$, then
	\begin{align*}
		\overline{P}(A)= E^P[P_1(A \vert \mathcal{F}_{\tau})\textbf{1}_{\Lambda} + P_2(A \vert \mathcal{F}_{\tau})\textbf{1}_{\Lambda^c}] =E^P [\textbf{1}_A \textbf{1}_{\Lambda} + \textbf{1}_A \textbf{1}_{\Lambda^c}] = P(A).
	\end{align*}
Moreover, for $A \in \mathcal{F}$ we have 
	\begin{align}
		\overline{P}(A) &= E^P[P_1(A \vert \mathcal{F}_{\tau})\textbf{1}_{\Lambda} + P_2(A \vert \mathcal{F}_{\tau})\textbf{1}_{\Lambda^c}] \
		= E^{P_1}[E^{P_1}[\textbf{1}_{A \cap \Lambda} \vert \mathcal{F}_{\tau}]] + E^{P_2}[E^{P_2}[\textbf{1}_{A \cap \Lambda^c} \vert \mathcal{F}_{\tau}]] \nonumber  \\
		&= E^{P_1}[\textbf{1}_{A \cap \Lambda}] + E^{P_2}[\textbf{1}_{A \cap \Lambda^c}]= P_1(A \cap \Lambda) + P_2 (A \cap \Lambda^c). \label{Prewritten}
	\end{align}	
2) By using $(\ref{Prewritten})$ for a random variable $X$ on $\Omega_x$ and $t \in [0,T]$ we have
\begin{equation}
	E^{\overline{P}}[X \textbf{1}_{\Lambda} \vert  \mathcal{F}_t]=E^{P_1}[X \textbf{1}_{\Lambda} \vert \mathcal{F}_t] \quad \text{and} \quad 
	E^{\overline{P}}[X \textbf{1}_{\Lambda^c} \vert \mathcal{F}_t]=E^{P_2}[X \textbf{1}_{\Lambda^c} \vert \mathcal{F}_t] \label{Part4II}.
\end{equation}

Next we show that $B^{S}, B^{\mu}$ are local $(\overline{P},\mathbb{F})$-martingales. We prove it for a (local) $(Q,\mathbb{F})$-martingale $X=(X_t)_{t \in [0,T]}$ for $Q \in \lbrace P,P_1, P_2 \rbrace$. We have to distinguish two cases.
Combining the two equations in $(\ref{Part4II})$ we get for $0 \leq \tau \leq t \leq T$ and $t \leq s$
\begin{align}
	E^{\overline{P}}[X_s \vert \mathcal{F}_t]=& E^{\overline{P}}[X_s \textbf{1}_{\Lambda} \vert \mathcal{F}_t] + E^{\overline{P}}[X_s \textbf{1}_{\Lambda^c} \vert \mathcal{F}_t] 
	=E^{{P_1}}[X_s \textbf{1}_{\Lambda} \vert \mathcal{F}_t] + E^{{P_2}}[X_s \textbf{1}_{\Lambda^c} \vert \mathcal{F}_t] \nonumber \\
	=& E^{{P_1}}[X_s  \vert \mathcal{F}_t]\textbf{1}_{\Lambda} + E^{{P_2}}[X_s \vert \mathcal{F}_t]\textbf{1}_{\Lambda^c}=X_t. \label{MartingaleI}
\end{align}
For the case $0 \leq t \leq \tau \leq T$ and $t \leq s$ it holds
\begin{align}
	E^{\overline{P}}[X_s \textbf{1}_{\Lambda} \vert  \mathcal{F}_t]&= E^{P_1}[X_s\textbf{1}_{\Lambda} \vert  \mathcal{F}_t]= E^{P_1}[E^{P_1}[X_s \textbf{1}_{\Lambda} \vert \mathcal{F}_{\tau}] \vert \mathcal{F}_{t}]= E^{P_1}[\textbf{1}_{\Lambda}E^{P_1}[X_s \vert \mathcal{F}_{\tau}] \vert \mathcal{F}_{t}] \nonumber \\
	&=E^{P_1}[\textbf{1}_{\Lambda} X_{\tau} \vert \mathcal{F}_t]=E^{P}[X_{\tau} \textbf{1}_{\Lambda} \vert \mathcal{F}_t]. \label{Splitting}
\end{align}
Here, we used in the last step that for $t \leq \tau$ and $X \in \mathcal{F}_{\tau}$ we have $E^{P}[X \textbf{1}_{\Lambda} \vert \mathcal{F}_t]= E^{P_1}[X \textbf{1}_{\Lambda} \vert \mathcal{F}_t]$. In the same way as in $(\ref{Splitting})$ we can derive
\begin{equation*}
	E^{P_2}[X_s \textbf{1}_{\Lambda^c} \vert \mathcal{F}_{t}]= E^{P}[X_{\tau} \textbf{1}_{\Lambda^c} \vert \mathcal{F}_t], 
\end{equation*}
which implies with $(\ref{Part4II})$ and $(\ref{Splitting})$ that
\begin{equation*}
	E^{\overline{P}}[X_s \vert \mathcal{F}_t]= E^{{P}_1}[X_s \textbf{1}_{\Lambda} \vert  \mathcal{F}_t]+E^{P_2}[X_s \textbf{1}_{\Lambda^c} \vert \mathcal{F}_{t}]= E^{{P}}[X_s \textbf{1}_{\Lambda} \vert  \mathcal{F}_t]+E^{P}[X_s \textbf{1}_{\Lambda^c} \vert \mathcal{F}_{t}]= X_t.
\end{equation*}
Thus, $B^{\mu}$ and $B^{S}$ are local $(\overline{P},\mathbb{F})$-martingales. \\
3) We show
	\begin{equation}
		\alpha^{\overline{P}}_t(\omega):= d \langle B \rangle_t^{\overline{P}} / dt(\omega) = \textbf{1}_{[0, \tau]}(t) \alpha^P_t(\omega) + \textbf{1}_{]\tau,T]}(t)( \alpha^{P_1}_t(\omega) \textbf{1}_{\Lambda}(\omega) + \alpha^{P_2}_t(\omega)\textbf{1}_{\Lambda^c}(\omega)) \label{FormulaQuadraticVariation}
	\end{equation}
with $\alpha^Q_t:=d \langle B \rangle_t^{Q} / dt$ for $Q \in \lbrace P, P_1,P_2 \rbrace $. 
First, we prove
	\begin{equation}
		\langle B \rangle^{\overline{P}}_t(\omega)= \langle B \rangle^{P_1}_{t}(\omega) \textbf{1}_{\Lambda}(\omega) + \langle B \rangle^{P_2}_{t}(\omega) \textbf{1}_{\Lambda^c}(\omega). \label{QuadraticVariation}
	\end{equation}
Consider a partition $\pi: 0=t_0 < t_1 <...<  t_n=t$ of the interval $[0,t]$ with mesh size $\| \pi \|:= \max \lbrace \vert t_k - t_{k-1} \vert: k = 1,...,n \rbrace$. 
Set $\Delta_{t_k}^2:=(B_{t_k}-B_{t_{k-1}})^2$ for $k \in \mathbb{N}$. Then it holds
	\begin{align*}
		&\overline{P} \big( \lbrace \omega \in \Omega: \vert \sum_{k=0}^n \Delta_{t_k}^2(\omega)-\langle B \rangle^{P_1}_{t}(\omega) \textbf{1}_{\Lambda}(\omega) - \langle B \rangle^{P_2}_{{t}}(\omega) \textbf{1}_{\Lambda^c}(\omega) \vert > \epsilon \rbrace \big) = \\
		&\overline{P} \big( \lbrace \omega \in \Lambda: \vert \sum_{k=0}^n \Delta_{t_k}^2(\omega)-\langle B \rangle^{P_1}_{{t}}(\omega) \vert  > \epsilon \rbrace \big) + 
				\overline{P} \big( \lbrace \omega \in \Lambda^c: \vert \sum_{k=0}^n\Delta_{t_k}^2(\omega)-\langle B \rangle^{P_2}_{{t}}(\omega)   \vert > \epsilon \rbrace \big) = \\
		&P_1 \big( \lbrace \omega \in \Lambda: \vert \sum_{k=0}^n \Delta_{t_k}^2(\omega)-\langle B \rangle^{P_1}_{{t}}(\omega)\vert > \epsilon \rbrace \big) + 
				P_2 \big( \lbrace \omega \in \Lambda^c: \vert \sum_{k=0}^n \Delta_{t_k}^2(\omega)-\langle B \rangle^{P_2}_{{t}}(\omega)\vert > \epsilon \rbrace \big) \leq \\
		&P_1 \big( \lbrace \omega \in \Omega: \vert \sum_{k=0}^n \Delta_{t_k}^2(\omega)-\langle B \rangle^{P_1}_{{t}}(\omega)\vert > \epsilon \rbrace \big) + 
				P_2 \big( \lbrace \omega \in \Omega: \vert \sum_{k=0}^n\Delta_{t_k}^2-\langle B \rangle^{P_2}_{{t}}(\omega)\vert > \epsilon \rbrace \big) \longrightarrow_{\| \pi \| \to 0} 0,
	\end{align*}
where we used $(\ref{Prewritten})$. Thus $(\ref{QuadraticVariation})$ follows as the limit of convergence in probability is almost surely unique. 
If the quadratic variation for the process $X$ with respect to $P$ exists, then $X$ has the same quadratic variation with respect to all probability measures $Q \sim P$ \cite[p. 15]{rheinlaender}. As $P=\overline{P}$ on $\mathcal{F}_{\tau}$, it follows from step 1) that $\langle B \rangle^{\overline{P}}_t= \langle B \rangle^{{P}}_t$ for $t \in [0,\tau]$. By putting all these facts together we can conclude that $(\ref{FormulaQuadraticVariation})$ holds. Furthermore, as $\alpha^{P}, \alpha^{P_i}, i=1,2$ take values in $\textbf{D}$, it follows that also $\alpha^{\overline{P}}$ take values in this set, i.e., $\overline{P} \in \mathcal{P}_{\textbf{D}}$.   
\end{proof}
%*************************Thesis**************************(Okay from Francesca)
%The next result is a straight forward consequence of Lemma \ref{G_Expectation_Fork}.  
%\begin{cor}
	%The set $\mathcal{P}_{\textbf{D}}$ defined in Example \ref{Example1G-expectation} satisfies $(\ref{forkStability})$.
%\end{cor}
%*********************************************************\\
Next, we show that the property of $\Phi_t^{P,X}$ being upward directed, which is a property on $(\Omega_x, \mathcal{F})$, can be transferred to the extended space $(\tilde{\Omega}_x, \mathcal{G})$.
		
\begin{prop} \label{heritage}
	Let Assumption \ref{assumptionnutzNew} hold for the families $(\mathcal{P}(t,\omega))_{(t,\omega) \in [0,T] \times \Omega_x}$. Assume that for every nonnegative upper semianalytic function $X$ on $\Omega_x$, $t \in [0,T]$ and $P \in \mathcal{P}$ the set $\Phi_t^{P,X}:=\lbrace E^Q[X \vert \mathcal{F}_t]: Q \in {\mathcal{P}}(t;P) \rbrace $ is upward directed. Then for every nonnegative upper semianalytic function $\tilde{X}$ on $\tilde{\Omega}_x$ which is $\mathcal{G}_T^{\mathcal{P}}$-measurable, $t \in [0,T]$ and $\tilde{P} \in \tilde{\mathcal{P}}$ the set $\tilde{\Phi}_t^{\tilde{P},X}:=\lbrace E^{\tilde{Q}} [\tilde{X} \vert \mathcal{G}_t]: \tilde{Q} \in \tilde{\mathcal{P}}(t;\tilde{P}) \rbrace \rbrace$ is upward directed with $\tilde{P}:=P \otimes \hat{P}$.
\end{prop}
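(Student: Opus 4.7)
The plan is to reduce the upward-directedness on $\tilde{\Omega}_x$ to the one assumed on $\Omega_x$ by exploiting the product structure of $\tilde{P}=P \otimes \hat{P}$. I would first observe that the $\mathcal{F}_t$-measurability of $\Gamma_t$ implies $Q \otimes \hat{P} = P \otimes \hat{P}$ on $\mathcal{G}_t$ if and only if $Q = P$ on $\mathcal{F}_t$, so every $\tilde{Q}_i \in \tilde{\mathcal{P}}(t;\tilde{P})$ has the form $\tilde{Q}_i = Q_i \otimes \hat{P}$ for some $Q_i \in \mathcal{P}(t;P)$. Proposition \ref{ExistenceXi} then yields a $\mathcal{B}(\mathbb{R}_+)\otimes \mathcal{F}_T^{\mathcal{P}}$-measurable $\varphi$ with $\tilde{X}(\omega,\hat\omega)=\varphi(\tilde{\tau}(\omega,\hat\omega),\omega)$, and applying the single-measure analogue of the decomposition from Proposition \ref{extendedOperator} to each $\tilde{Q}_i$ gives
\begin{equation*}
E^{\tilde{Q}_i}[\tilde{X}\vert\mathcal{G}_t] = \textbf{1}_{\lbrace \tilde{\tau}\leq t \rbrace} E^{Q_i}[\varphi(\tilde{\tau},\cdot)\vert\mathcal{F}_t] + \textbf{1}_{\lbrace \tilde{\tau}>t \rbrace} e^{\Gamma_t} E^{Q_i}[\Psi\vert\mathcal{F}_t] \quad \tilde{P}\text{-a.s.},
\end{equation*}
where $\Psi(\omega) := E^{\hat{P}}[\textbf{1}_{\lbrace \tilde{\tau}>t \rbrace}\tilde{X}(\omega,\cdot)]$ is nonnegative, upper semianalytic and $\mathcal{F}_T^{\mathcal{P}}$-measurable on $\Omega_x$.

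Taking the pointwise maximum in this decomposition and matching it with $E^{\tilde{Q}}[\tilde{X}|\mathcal{G}_t]$ for a candidate $\tilde{Q}=Q \otimes \hat{P}$ splits the target identity into two pieces: on $\lbrace \tilde{\tau}>t \rbrace$ one needs (b) $E^Q[\Psi|\mathcal{F}_t] = E^{Q_1}[\Psi|\mathcal{F}_t]\vee E^{Q_2}[\Psi|\mathcal{F}_t]$ $P$-a.s., and on $\lbrace \tilde{\tau}\leq t \rbrace$ one needs (a) $E^Q[\varphi(s,\cdot)|\mathcal{F}_t](\omega) = E^{Q_1}[\varphi(s,\cdot)|\mathcal{F}_t](\omega) \vee E^{Q_2}[\varphi(s,\cdot)|\mathcal{F}_t](\omega)$ for $P\otimes ds$-almost every $(\omega,s)\in\Omega_x\times[0,t]$ with $\mu_s(\omega)>0$, since on this region the joint law of $(\omega,\tilde{\tau})$ under $\tilde{P}$ has density $\mu_s(\omega)e^{-\Gamma_s(\omega)}$ with respect to $P(d\omega)\otimes ds$. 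Condition (b) is immediate from the upward-directedness hypothesis on $\Omega_x$ applied to the single function $\Psi$.

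The hard part will be condition (a): it is a \emph{simultaneous} equality indexed by the continuum $s\in[0,t]$ that has to be achieved by one and the same $Q \in \mathcal{P}(t;P)$, whereas the hypothesis delivers such an identity one function at a time. My approach would be to construct $Q$ as the $\mathcal{F}_t$-level pasting of $Q_1,Q_2$ via Assumption \ref{assumptionnutzNew}(3) applied at the stopping time $\tau = t$ — this automatically places $Q$ in $\mathcal{P}(t;P)$ and yields a regular conditional kernel $Q(\cdot|\mathcal{F}_t) = \textbf{1}_\Lambda Q_1(\cdot|\mathcal{F}_t) + \textbf{1}_{\Lambda^c} Q_2(\cdot|\mathcal{F}_t)$ for a splitting set $\Lambda \in \mathcal{F}_t$ — with $\Lambda$ dictated by the upward-directedness applied to the aggregated upper semianalytic function
\begin{equation*}
Z(\omega) := E^{\hat{P}}[\tilde{X}(\omega,\cdot)] = \int_0^\infty \varphi(s,\omega)\mu_s(\omega)e^{-\Gamma_s(\omega)}\,ds,
\end{equation*}
namely $\Lambda := \lbrace E^{Q_1}[Z|\mathcal{F}_t] \ge E^{Q_2}[Z|\mathcal{F}_t] \rbrace$, so that $E^Q[Z|\mathcal{F}_t] = E^{Q_1}[Z|\mathcal{F}_t]\vee E^{Q_2}[Z|\mathcal{F}_t]$ $P$-a.s.; the nontrivial step is to verify that this $\Lambda$ simultaneously orients the parametrised family $\lbrace \varphi(s,\cdot)\rbrace_{s\in[0,t]}$ and $\Psi$ so that (a) and (b) follow pointwise, which is where the assumption that upward-directedness holds for \emph{every} nonnegative upper semianalytic $X$ (not just $Z$) has to be invoked. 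Setting $\tilde{Q}:=Q\otimes \hat{P} \in \tilde{\mathcal{P}}(t;\tilde{P})$ and reassembling the decomposition finally yields $E^{\tilde{Q}}[\tilde{X}|\mathcal{G}_t] = E^{\tilde{Q}_1}[\tilde{X}|\mathcal{G}_t]\vee E^{\tilde{Q}_2}[\tilde{X}|\mathcal{G}_t]$ $\tilde{P}$-a.s., the required upward-directedness of $\tilde{\Phi}_t^{\tilde{P},\tilde{X}}$.
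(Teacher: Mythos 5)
Your reduction is exactly the paper's: you establish the product-form correspondence $\tilde{Q}=Q\otimes\hat{P}$ with $\tilde{Q}\in\tilde{\mathcal{P}}(t;\tilde{P})\Leftrightarrow Q\in\mathcal{P}(t;P)$ (the paper's claim $(\ref{claim})$, proved there via Lemma 2.12 in \cite{bz_2019}), you decompose $E^{\tilde{Q}_i}[\tilde{X}\vert\mathcal{G}_t]$ on $\lbrace\tilde{\tau}\leq t\rbrace$ and $\lbrace\tilde{\tau}>t\rbrace$ via Proposition 2.16 in \cite{bz_2019}, and you reduce the claim to your conditions (a) and (b). You have also put your finger on the real subtlety: condition (a) requires \emph{one and the same} $Q\in\mathcal{P}(t;P)$ to realize the maximum simultaneously for the continuum of functions $\varphi(s,\cdot)$, $s\in[0,t]$, and for $\Psi$, whereas the hypothesis hands you a $Q$ one function at a time. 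The paper does not construct such a $Q$ either: in $(\ref{PhiUpper})$ it invokes upward-directedness as furnishing a single $Q$ valid ``for any $X$ nonnegative and upper semianalytic'', and then applies this uniform $Q$ to $\varphi(x,\cdot)$ for every $x$ and to $E^{\hat{P}}[\textbf{1}_{\lbrace\tilde{\tau}>t\rbrace}\tilde{X}]$ at once; under that (stronger) reading of the hypothesis, the computation you wrote down closes immediately and nothing further is needed.

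Your attempt to close the quantifier gap constructively, however, fails, so the proposal is incomplete precisely at the step you yourself flagged. First, the two-measure pasting $Q(\cdot\vert\mathcal{F}_t)=\textbf{1}_{\Lambda}Q_1(\cdot\vert\mathcal{F}_t)+\textbf{1}_{\Lambda^c}Q_2(\cdot\vert\mathcal{F}_t)$ over a set $\Lambda\in\mathcal{F}_t$ is the stability property $(\ref{forkStability})$, not part 3 of Assumption \ref{assumptionnutzNew}; the paper explicitly notes these are different concepts, and $(\ref{forkStability})$ is not among the hypotheses of Proposition \ref{heritage} (it is verified only for the concrete families of Proposition \ref{G_Expectation_Fork}). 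Second, and decisively, even granting such a pasted $Q$, the choice $\Lambda=\lbrace E^{Q_1}[Z\vert\mathcal{F}_t]\geq E^{Q_2}[Z\vert\mathcal{F}_t]\rbrace$ cannot deliver (a): dominance of the $s$-integrated quantities says nothing about dominance of $E^{Q_i}[\varphi(s,\cdot)\vert\mathcal{F}_t]$ for individual $s$. If on a set of $\omega$ of positive $P$-measure $Q_1$ dominates strictly for $s$ in some $A\subseteq[0,t]$ of positive Lebesgue measure while $Q_2$ dominates strictly for $s$ in some $B$ of positive measure, then $E^{Q}[\varphi(s,\cdot)\vert\mathcal{F}_t]=\textbf{1}_{\Lambda}E^{Q_1}[\varphi(s,\cdot)\vert\mathcal{F}_t]+\textbf{1}_{\Lambda^c}E^{Q_2}[\varphi(s,\cdot)\vert\mathcal{F}_t]$ falls strictly below the required maximum on a non-null set of $(\omega,s)$, no matter which single splitting set $\Lambda$ is chosen. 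So the sentence ``the nontrivial step is to verify that this $\Lambda$ simultaneously orients the parametrised family'' is not a verification but the open point itself, and it cannot be closed along this route: to match the paper you must either adopt its uniform-in-$X$ reading of the upward-directedness hypothesis, or impose an assumption that directly provides one measure achieving the maximum for a parametrized family of functions simultaneously.
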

		
\begin{proof}
Let $P \in \mathcal{P}$, $t \in [0,T]$ and $X$ nonnegative upper semianalytic such that the corresponding set $\Phi_t^{P,X}$ is upward directed. Consider $\tilde{P}= {P} \otimes \hat{P} \in \tilde{\mathcal{P}}$ and $\tilde{Q}_1 \in \tilde{\mathcal{P}}(t;\tilde{P})$. As $\tilde{Q}_1 \in \tilde{\mathcal{P}}$ there exists $Q_1 \in \mathcal{P}$ such that $\tilde{Q}_1 = Q_1 \otimes \hat{P}$. Next, we prove the following statement \\
	\begin{equation}
		\tilde{Q}_1 \in \tilde{\mathcal{P}}(t;\tilde{P}) \text{ if and only if } Q_1 \in \mathcal{P}(t;P), \label{claim}
	\end{equation}
	is equivalent to show that
	\begin{equation*}
	\tilde{Q}_1(\tilde{A})=\tilde{P}(\tilde{A}) \quad \forall \tilde{A} \in \mathcal{G}_t \Longleftrightarrow {Q}_1({A})={P}({A}) \quad \forall {A} \in \mathcal{F}_t.
	\end{equation*}
	Let $A \in \mathcal{F}_t \subseteq \mathcal{G}_t$ then we have $Q_1(A)=Q_1 \otimes \hat{P}(A)= \tilde{Q}_1(A)= \tilde{P}(A) =P \otimes \hat{P}(A) =P(A)$, which shows the first implication. For the other direction take $\tilde{A} \in \mathcal{G}_t$ and use Lemma 2.12 in \cite{bz_2019} such that we have
	\begin{equation*}
		\tilde{Q}_1(\tilde{A})= E^{\tilde{Q}_1}[\textbf{1}_{\tilde{A}}]=E^{Q_1}[E^{\hat{P}} [\textbf{1}_{\tilde{A}}]] = E^{P}[E^{\hat{P}} [\textbf{1}_{\tilde{A}}]] = E^{\tilde{P}}[\textbf{1}_{\tilde{A}}]= \tilde{P}(\tilde{A}),
	\end{equation*}
	which proves the claim in $(\ref{claim})$. 
	Now consider also $\tilde{Q}_2 \in \mathcal{P}(t;\tilde{P})$, i.e., $\tilde{Q}_2 = Q_2 \otimes \hat{P}$. 
	As $\Phi_t^{P,X}$ is upward directed we know that there exists $Q \in \mathcal{P}(t;P)$ such that
	\begin{equation}
		E^{Q}[X \vert \mathcal{F}_t]=E^{Q_1} [X \vert \mathcal{F}_t] \vee E^{Q_2} [X \vert \mathcal{F}_t] \quad P\text{-a.s.}\label{PhiUpper}
	\end{equation} for any $X$ nonnegative and upper semianalytic.
 	Set $\tilde{Q}:=Q \otimes \hat{P}$. We now show that $\tilde{Q} \in \tilde{\mathcal{P}}(t; \tilde{P})$ and $E^{\tilde{Q}} [\tilde{X} \vert \mathcal{G}_t]=E^{\tilde{Q}_1} [\tilde{X} \vert \mathcal{G}_t] \vee E^{\tilde{Q}_2} [\tilde{X} \vert \mathcal{G}_t]$ with $\tilde{Q}_i:= Q_i \otimes \hat{P}$ for $i=1,2$. The first property follows by $(\ref{claim})$. By Proposition 2.16 in \cite{bz_2019} we know that for $t \geq 0, \tilde{Q}=Q \otimes \hat{P}$ and $\tilde{X}$ nonnegative and $\mathcal{G}^{\mathcal{P}}_T$-measurable
	\begin{align*}
		E^{\tilde{Q}}[\tilde{X} \vert \mathcal{G}_t] &= \textbf{1}_{\lbrace \tilde{\tau} \leq t \rbrace} E^Q[\varphi(x, \cdot) \vert \mathcal{F}_t] \big \vert_{x = \tilde{\tau}} + \textbf{1}_{\lbrace \tilde{\tau} > t \rbrace} e^{\Gamma_t} E^Q [E^{\hat{P}} [\textbf{1}_{\lbrace \tilde{\tau} > t \rbrace} \tilde{X}] \vert \mathcal{F}_t]   \\
		&=	\textbf{1}_{\lbrace \tilde{\tau} \leq t \rbrace} \big( E^{Q_1}[\varphi(x, \cdot) \vert \mathcal{F}_t] \vee E^{Q_2}[\varphi(x, \cdot) \vert \mathcal{F}_t] \big \vert_{x = \tilde{\tau}} \big) + 
				\\ 
		& +\textbf{1}_{\lbrace \tilde{\tau} > t \rbrace} e^{\Gamma_t} \big(E^{Q_1} [E^{\hat{P}} [\textbf{1}_{\lbrace \tilde{\tau} > t \rbrace} \tilde{X}] \vert \mathcal{F}_t] \vee E^{Q_2} [E^{\hat{P}} [\textbf{1}_{\lbrace \tilde{\tau} > t \rbrace} \tilde{X}] \vert \mathcal{F}_t]\big) \\		
		&= E^{\tilde{Q}_1}[\tilde{X} \vert \mathcal{G}_t]  \vee E^{\tilde{Q}_2}[\tilde{X} \vert \mathcal{G}_t] \quad \tilde{Q} \text{-a.s.}
	\end{align*}
	with $\varphi$ as in $(\ref{varphi})$. Here, we used $(\ref{PhiUpper})$ which is possible as $E^{\hat{P}} [\textbf{1}_{\lbrace \tilde{\tau} > t \rbrace} \tilde{X}]$ is nonnegative.    
\end{proof}
		
\begin{lemma}\label{lemmaModification}
	Let Assumption \ref{assumptionnutzNew} hold for the families $(\mathcal{P}(t,\omega))_{(t,\omega) \in [0,T] \times \Omega_x}$ and $\tilde{X}$ be a nonnegative upper semianalytic function on $\tilde{\Omega}_x$ which is $\mathcal{G}_T^{\mathcal{P}}$-measurable.	 Furthermore, assume that for every nonnegative upper semianalytic function $X$ on $\Omega_x$, $t \in [0,T]$ and $P \in \mathcal{P}$ the set $\Phi_t^{P,X}:=\lbrace E^Q[X \vert \mathcal{F}_t]: Q \in {\mathcal{P}}(t;P) \rbrace $ is upward directed. Then the process $(\tilde{\mathcal{E}}_t(X))_{t \in [0,T]}$ has a c\`{a}dl\`{a}g $\tilde{\mathcal{P}}$-modification.
\end{lemma}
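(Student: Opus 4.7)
The plan is to reduce the statement to Proposition \ref{propModification} applied on the enlarged space $\tilde{\Omega}_x$, by transferring the relevant hypotheses via Proposition \ref{heritage}. First, Proposition \ref{extendedOperator} guarantees that $\tilde{\mathcal{E}}_t(\tilde{X})$ is well-defined for every $t \in [0,T]$ and admits the consistency representation $(\ref{esssupBiagini})$. Second, Proposition \ref{heritage} converts the directedness hypothesis of the lemma on $(\Omega_x, \mathbb{F})$ into its analogue on the enlarged space: the family $\tilde{\Phi}_t^{\tilde{P},\tilde{X}}=\lbrace E^{\tilde{Q}}[\tilde{X}|\mathcal{G}_t]: \tilde{Q} \in \tilde{\mathcal{P}}(t;\tilde{P}) \rbrace$ is upward directed for every $t \in [0,T]$ and every $\tilde{P} \in \tilde{\mathcal{P}}$.

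Next I would reproduce, with the substitutions $\Omega_x \to \tilde{\Omega}_x$, $\mathbb{F} \to \mathbb{G}$, $\mathcal{P} \to \tilde{\mathcal{P}}$ and $X \to \tilde{X}$, the three-step argument of the proof of Proposition \ref{propModification}. That argument uses only (i) the consistency representation of $\mathcal{E}_t(X)$ as an essential supremum; (ii) the inclusion $\mathcal{P}(t;P) \subseteq \mathcal{P}(s;P)$ for $s \leq t$; (iii) the tower property for the ordinary conditional expectations $E^{Q}[\cdot|\mathcal{F}_s]$ together with the matching $P=Q$ on $\mathcal{F}_s$ whenever $Q \in \mathcal{P}(s;P)$; and (iv) the upward-directedness of $\Phi_t^{P,X}$. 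All four items transfer verbatim: (i) is $(\ref{esssupBiagini})$, (ii) is immediate from $\mathcal{G}_s \subseteq \mathcal{G}_t$, (iii) is classical, and (iv) is supplied by Proposition \ref{heritage}. This yields both the $(\tilde{P},\mathbb{G})$-supermartingale property of $(\tilde{\mathcal{E}}_t(\tilde{X}))_{t \in [0,T]}$ for every $\tilde{P} \in \tilde{\mathcal{P}}$ and the right-continuity of $t \mapsto E^{\tilde{P}}[\tilde{\mathcal{E}}_t(\tilde{X})]$. A $\tilde{P}$-wise application of the supermartingale-modification criterion recalled after Proposition \ref{propModification} then produces a c\`{a}dl\`{a}g $\tilde{P}$-modification for every $\tilde{P} \in \tilde{\mathcal{P}}$, i.e.\ a c\`{a}dl\`{a}g $\tilde{\mathcal{P}}$-modification.

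The main subtlety I anticipate is that $\tilde{\mathcal{E}}$ does not satisfy the strong tower property, only the weak time-consistency recalled after Proposition \ref{extendedOperator}. The argument nonetheless goes through because the right-continuity proof of Proposition \ref{propModification} never invokes a sublinear tower: it uses only the tower for ordinary $\tilde{Q}$-conditional expectations together with the matching of $\tilde{P}$ and $\tilde{Q}$ on $\mathcal{G}_t$ whenever $\tilde{Q} \in \tilde{\mathcal{P}}(t;\tilde{P})$. Proposition \ref{heritage} is thus the precise bridge needed to carry the only non-trivial hypothesis of Proposition \ref{propModification} across to $\tilde{\Omega}_x$, and no further genuinely new ingredient is required.
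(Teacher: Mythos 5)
Your proposal is correct and follows essentially the same route as the paper: the paper's proof likewise invokes Proposition \ref{heritage} to transfer upward-directedness to $\tilde{\Phi}_t^{\tilde{P},\tilde{X}}$ and then reruns the argument of Proposition \ref{propModification} verbatim, justified by the essential-supremum representation $(\ref{esssupBiagini})$. Your additional observation that the merely weak tower property of $\tilde{\mathcal{E}}$ is no obstacle --- because Proposition \ref{propModification} only ever uses the tower property of ordinary conditional expectations under single priors --- is precisely the implicit point the paper's two-line proof relies on.
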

		
\begin{proof}
	Proposition $\ref{heritage}$ allows us to transfer the property of upward directed from the set $\Phi_t^{P,X}$ to $\tilde{\Phi}_t^{\tilde{P}, \tilde{X}}$. Then we can use exactly the same arguments in Proposition \ref{propModification} which is possible as the sublinear conditional operator $\tilde{\mathcal{E}}$ admits the representation as essential supremum in $(\ref{esssupBiagini})$ under Assumption \ref{assumptionnutzNew}.
\end{proof}

The results in Proposition \ref{propModification}, \ref{heritage} and \ref{lemmaModification} are also valid by replacing $\Omega_x$ defined as in Subsection \ref{SubsectionMarketModel} by $\Omega=C_0(\mathbb{R}_+,\mathbb{R}^d)$ or $\Omega=D_0(\mathbb{R}_+,\mathbb{R}^{d})$ for $d \in \mathbb{N}$. \\
One advantage by working with the c\`{a}dl\`{a}g $\mathcal{P}$-modification compared with the Definition \ref{defiLongevityBond} is that we get the path regularity without being forced to consider a process adapted to the right-continuous version of a filtration. Nevertheless, if $\mathcal{P}$ does not allow the existence of c\`{a}dl\`{a}g $\mathcal{P}$-modification, the approach in Definition \ref{defiLongevityBond} always guarantees c\`{a}dl\`{a}g paths.

\begin{subsection}{Numerical valuation}
In the sequel we derive a numerical representation of the longevity bond $S^L$ introduced in Definition \ref{defiLongevityBond} by using the affine structure with parameter uncertainty of the underlying intensity. This is possible by generalizing Theorem 6.2 in \cite{fns_2019}.\\
We define the upper bounds for the intervals $a^*(x)$ and $b^*(x)$ in $(\ref{affineBounds})$ which are given by
\begin{align}
	\overline{a}(x) =\overline{a}^0 + \overline{a}^1 x^+ \quad \text{ and } \quad \overline{b}(x) =\overline{b}^0 + \underbrace{\underline{b}^1\textbf{1}_{\lbrace x <0 \rbrace} + \overline{b}^1 \textbf{1}_{\lbrace x \geq 0 \rbrace}}_{:=\overline{B}^{1,x}}.
\end{align}
\begin{prop} \label{GeneralizedNumericalRepr}
Assume that for all $P \in \mathcal{Z}$
	\begin{equation*}
		\beta_t^P \leq \overline{b}^0 + \overline{B}^{1,x} B_t^{\mu},
	\end{equation*}
	$dP \otimes dt$-almost everywhere for $0 \leq t \leq T$. Moreover, assume either that $\underline{a}^1=\overline{a}^1=0$ or that for all $P \in \mathcal{Z}$, $B_t^{\mu} \geq 0$ $P \otimes dt$-a.e. Furthermore, there exists $\overline{P} \in \mathcal{Z}$ and a one-dimensional $(\overline{P}$,$\mathbb{F})$-Brownian motion $W$ such that the componentwise canonical process $B^{\mu}$ under $\overline{P}$ is the unique strong solution of 
	\begin{equation} \label{affineMarkovian}
		dB_t^{\mu} = (\overline{b}^0 + \overline{B}^{1,x} B_t^{\mu})dt + \sqrt{ \overline{a}(B_t^{\mu})}dW_t, \quad B_0^{\mu}=\omega_0^{\mu}.
	\end{equation}
	Then, for all $u \geq 0$ and $0 \leq t \leq T$ 
	\begin{equation}
		\mathcal{E}_t(e^{-\int_t^T B_s^{\mu}ds})=\esssupO_{P' \in \mathcal{Z}(t;\overline{P})} E^{P'}\big[e^{-\int_t^T B_s^{\mu} ds}\big\vert \mathcal{F}_t]= \exp(\phi(T-t,0) + \psi(T-t,0)B_t^{\mu}) \quad \overline{P} \text{-a.s.}, \label{OperatorcondSolution}
	\end{equation}
	where $\phi$ and $\psi$ solve the Riccati equations
	\begin{align*}
	\partial_t \phi(t,u)=\frac{1}{2} \overline{a}^0 \psi(t,u)^2 + \overline{b}^0 \psi(t,u) \quad \phi(0,u)=0 \\
	\partial_t \psi(t,u)=\frac{1}{2} \overline{a}^1 \phi(t,u)^2 + \overline{B}^{1,x} \phi(t,u)-1 \quad \psi(0,u)=u.
	\end{align*}
\end{prop}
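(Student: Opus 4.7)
The plan is to prove \eqref{OperatorcondSolution} via a standard supermartingale/martingale argument driven by the Riccati representation, in the spirit of Theorem 6.2 in \cite{fns_2019}, but now carried out inside the intersected family $\mathcal{Z}(t;\overline{P})$. Using the consistency property \eqref{repesssup} at the reference measure $\overline{P}$, the identity to prove reduces to
\begin{equation*}
\esssupO_{P' \in \mathcal{Z}(t;\overline{P})} E^{P'}[e^{-\int_t^T B_s^{\mu}\,ds} \mid \mathcal{F}_t] = \exp\!\big(\phi(T-t,0) + \psi(T-t,0)\,B_t^{\mu}\big) \quad \overline{P}\text{-a.s.},
\end{equation*}
which I will establish by proving the upper estimate for every $P' \in \mathcal{Z}(t;\overline{P})$ together with the matching lower estimate attained at $P' = \overline{P}$.

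The natural candidate is the auxiliary process
\begin{equation*}
Y_s := \exp\!\Big(\phi(T-s,0) + \psi(T-s,0)\,B_s^{\mu} - \int_t^s B_u^{\mu}\,du\Big), \qquad s \in [t,T],
\end{equation*}
for which $Y_t = \exp(\phi(T-t,0)+\psi(T-t,0)B_t^{\mu})$ and $Y_T = e^{-\int_t^T B_s^{\mu}\,ds}$. Applying Itô's formula to $Y$ under an arbitrary $P' \in \mathcal{Z}(t;\overline{P})$, and substituting the Riccati ODEs so as to cancel the deterministic-time derivatives of $\phi$ and $\psi$ against the terms $\overline{b}(B_s^{\mu})\psi$, $\tfrac{1}{2}\overline{a}(B_s^{\mu})\psi^2$ and the running-discount $B_s^{\mu}\,ds$, the drift of $Y$ collapses to
\begin{equation*}
Y_s\,\Big[\psi(T-s,0)\big(\beta_s^{P'} - \overline{b}(B_s^{\mu})\big) + \tfrac{1}{2}\psi(T-s,0)^2\big(\alpha_s^{P'} - \overline{a}(B_s^{\mu})\big)\Big]\,ds,
\end{equation*}
while its stochastic part is $Y_s\,\psi(T-s,0)\,dM_s^{P'}$, with $M^{P'}$ the $(P',\mathbb{F})$-local martingale part of $B^{\mu}$.

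By the affine-domination in Definition \ref{IntensityP} one has $\alpha_s^{P'} \le \overline{a}(B_s^{\mu})$, so the $\psi^2$-bracket is non-positive; combined with the standing hypothesis $\beta_s^{P'} \le \overline{b}(B_s^{\mu})$ and a sign analysis of $\psi(\cdot,0)$ derived from its Riccati equation (starting from $\psi(0,0)=0$ with initial velocity $-1$), the $\psi$-bracket has the correct sign as well. The two regimes imposed in the hypotheses, namely either $\underline{a}^1 = \overline{a}^1 = 0$ (which linearises the $\psi$-equation and makes $\overline{a}$ constant) or $B_s^{\mu} \ge 0$ quasi-surely (which pins down $\overline{B}^{1,x}$ and the sign of $\overline{b}(B_s^{\mu})-\beta^{P'}_s$), are precisely what is needed to make the whole bracket non-positive $dP' \otimes dt$-a.e. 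Thus $Y$ is a non-negative $(P',\mathbb{F})$-local supermartingale; the moment bounds for $\Theta^{\mu}$-dominated affine semimartingales from \cite{fns_2019}, together with the properness condition of Definition \ref{properFamily} and the compactness of $\Theta^{\mu}$, upgrade $Y$ to a true supermartingale, and taking $\mathcal{F}_t$-conditional expectations gives the upper inequality.

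For the matching lower estimate, observe that $\beta^{\overline{P}}_s = \overline{b}(B_s^{\mu})$ and $\alpha^{\overline{P}}_s = \overline{a}(B_s^{\mu})$ identically by construction of $\overline{P}$ via \eqref{affineMarkovian}, so both differences inside the bracket vanish, $Y$ is a true non-negative $(\overline{P},\mathbb{F})$-martingale, and $E^{\overline{P}}[Y_T \mid \mathcal{F}_t] = Y_t$ yields equality. The principal obstacle I foresee is the sign control of $\psi(\cdot,0)$ over the full horizon $[0,T]$, which is exactly what the dichotomy in the hypotheses is designed to guarantee; a secondary technicality is the uniform integrability of $Y$ across all $P' \in \mathcal{Z}(t;\overline{P})$, which should follow from the compactness of $\Theta^{\mu}$ and the moment estimates for affine dominated semimartingales already developed in \cite{fns_2019}.
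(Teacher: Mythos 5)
You take a genuinely different route from the paper, so let me first record what the paper does: it never performs an It\^o computation. Step 1 of the paper's proof imports the \emph{unconditional} comparison $E^{P}[e^{-\int_0^T B_s^{\mu}ds}] \le E^{\overline{P}}[e^{-\int_0^T B_s^{\mu}ds}]$ for all $P \in \mathcal{Z}$ by citing ``the same arguments as in Proposition 6.2 in \cite{fns_2019}''; Step 2 upgrades this to the conditional identity $E^{\overline{P}}[\,\cdot\,\vert \mathcal{F}_t]=\esssupO_{P' \in \mathcal{Z}(t;\overline{P})}E^{P'}[\,\cdot\,\vert\mathcal{F}_t]$ $\overline{P}$-a.s.\ by a pure expectation argument exploiting only that $P'=\overline{P}$ on $\mathcal{F}_t$; Step 3 then obtains the exponential-affine formula from the classical affine transform theorem (Theorem 10.14 in \cite{f_2009}) applied to $B^{\mu}$ under the single measure $\overline{P}$. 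Your plan instead re-proves the comparison directly at the conditional level via It\^o and the Riccati system; had it closed, it would be more self-contained, since it would subsume the cited comparison rather than outsource it. Your drift formula for $Y$ is correct, the reduction via $(\ref{repesssup})$ is fine, and the identification of the drift as zero under $\overline{P}$ is fine. (Minor remark: for the upper estimate you need no moment bounds at all --- a nonnegative local supermartingale is a true supermartingale by Fatou; integrability only matters for the martingale property under $\overline{P}$.)

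The genuine gap is the sign of the drift bracket, and it is fatal as written. You yourself note that $\psi(0,0)=0$ with $\partial_t\psi(0,0)=-1$, so $\psi(T-s,0)\le 0$, strictly negative for $s<T$. The standing hypothesis gives $\beta_s^{P'}-\overline{b}(B_s^{\mu})\le 0$. Hence $\psi(T-s,0)\big(\beta_s^{P'}-\overline{b}(B_s^{\mu})\big)\ge 0$: it is a product of two non-positive quantities, i.e.\ it has the \emph{wrong} sign for a supermartingale. The dichotomy in the hypotheses cannot repair this: $\underline{a}^1=\overline{a}^1=0$ or $B^{\mu}\ge 0$ only guarantees $\alpha_s^{P'}\le\overline{a}(B_s^{\mu})$ (the $x^+$ issue) and that the reference SDE is genuinely affine; it has no bearing on the drift term. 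So $Y$ is in general a local \emph{sub}martingale in the drift direction. Concretely, in a Vasicek-type setting ($\underline{a}^1=\overline{a}^1=0$, $\underline{a}^0=\overline{a}^0$, $\underline{b}^0<\overline{b}^0$), a prior $P'$ agreeing with $\overline{P}$ on $[0,t]$ but with drift $\underline{b}^0+\overline{B}^{1,x}B^{\mu}$ on $(t,T]$ makes $\int_t^T B_s^{\mu}ds$ conditionally Gaussian with smaller mean and the same variance as under $\overline{P}$, so $E^{P'}[e^{-\int_t^T B_s^{\mu}ds}\vert\mathcal{F}_t]$ strictly exceeds $\exp(\phi(T-t,0)+\psi(T-t,0)B_t^{\mu})$, contradicting the upper estimate you claim. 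Your argument closes only when the reference carries the \emph{lower} affine drift bound and the hypothesis bounds $\beta^{P'}$ from \emph{below} by it (then both brackets are $\le 0$; financially, the highest survival-bond price comes from the lowest drift and highest volatility), or when there is no drift uncertainty at all, as in Example \ref{Example2RandomG-expectation} where all priors are martingale laws. In short, carrying out the It\^o computation honestly exposes a sign tension that the paper's own proof never confronts because it delegates the comparison step to \cite{fns_2019}; your proposal covers exactly this decisive step with the unverified assertion that ``the $\psi$-bracket has the correct sign as well'', and under the stated hypotheses it does not.
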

\begin{proof}
1) Let $P \in \mathcal{Z}$. With the same arguments as in Proposition 6.2 in \cite{fns_2019} we have 
	\begin{equation*}
		E^P[e^{-\int_0^t B_s^{\mu}ds}] \leq E^{\overline{P}}[e^{-\int_0^t B_s^{\mu}ds}],
	\end{equation*}
	where $\overline{P}$ is given by the assumptions of the Proposition. 
As $P \in \mathcal{Z}$ is arbitrary and $\overline{P} \in \mathcal{Z}$ it follows
	\begin{equation} \label{comparison}
		E^{\overline{P}}[e^{-\int_0^T B_s^{\mu}ds}] = \sup_{P \in \mathcal{Z}}E^{P}[e^{-\int_0^T B_s^{\mu}ds}]. 
		\end{equation}
2) We now show that
	\begin{equation} \label{comparisonIntermediate}
		E^{\overline{P}}[e^{-\int_0^T B_s^{\mu}ds}\vert \mathcal{F}_t] = \esssupO_{P' \in \mathcal{Z}(t;\overline{P})} E^{P'}\big[e^{-\int_0^T B_s^{\mu} ds}\big\vert \mathcal{F}_t] \quad \overline{P}\text{-a.s.} 
	\end{equation}
As $\overline{P} \in \mathcal{Z}(t;\overline{P}) \subseteq \mathcal{Z}$, the inequality $ E^{\overline{P}}[e^{-\int_0^T B_s^{\mu}ds}\vert \mathcal{F}_t] \leq \esssupO_{P' \in \mathcal{Z}(t,\overline{P})} E^{P'}\big[e^{-\int_t^T B_s^{\mu} ds}\big\vert \mathcal{F}_t]$ follows directly.
For the other direction we show that for all $P' \in \mathcal{Z}(t;\overline{P})$ 
	\begin{equation*}
	E^{\overline{P}}[E^{\overline{P}}[e^{-\int_0^T B_s^{\mu}ds}\vert \mathcal{F}_t]] \geq E^{\overline{P}}[E^{P'}[e^{-\int_0^T B_s^{\mu}ds}\vert \mathcal{F}_t]].
	\end{equation*}
Fix $P' \in \mathcal{Z}(t;\overline{P})$, then by $(\ref{comparison})$ we have
\begin{align*}
E^{\overline{P}}[E^{\overline{P}}[e^{-\int_0^T B_s^{\mu}ds}\vert \mathcal{F}_t]]&=E^{\overline{P}}[e^{-\int_0^T B_s^{\mu}ds}]= \sup_{P \in \mathcal{Z}} E^P[e^{-\int_0^T B_s^{\mu}ds}] \geq  \sup_{P' \in \mathcal{Z}(t;\overline{P})} E^{P'}[e^{-\int_0^T B_s^{\mu}ds}] \\
&\geq E^{P'}[e^{-\int_0^T B_s^{\mu}ds}]= E^{P'}[E^{P'}[e^{-\int_0^T B_s^{\mu}ds}\vert \mathcal{F}_t]]=E^{\overline{P}}[E^{P'}[e^{-\int_0^T B_s^{\mu}ds}\vert \mathcal{F}_t]],
\end{align*}
where we use in the last equality that $P'=\overline{P}$ on $\mathcal{F}_t$. \\
3) As $B^{\mu}$ is an affine process in the classical sense, we get by Theorem 10.14 in \cite{f_2009} and $(\ref{comparisonIntermediate})$ the representation via Riccati equations as in $(\ref{OperatorcondSolution})$. 
\end{proof}

By Definition \ref{defiLongevityBond} the value process of the longevity bond $S^L_t=\limsup_{r \downarrow t, r \in \mathbb{Q}} Y_r \textbf{1}_{N^c}$, $t \in [0,T)$ can be rewritten by $(\ref{OperatorcondSolution})$ and the corresponding Riccati equations as
\begin{align}
	Y_r=& \tilde{\mathcal{E}}_r(e^{-\int_0^T B_s^{\mu}ds})=\mathcal{E}_r(e^{-\int_0^T B_s^{\mu}ds})= \esssup_{P' \in \mathcal{Z}(r;P)} E^{P'}\big[e^{-\int_0^T B_s^{\mu} ds}\big\vert \mathcal{F}_r] \nonumber \\
	=&(e^{-\int_0^r B_s^{\mu} ds} )\esssup_{P' \in \mathcal{Z}(r;P)} E^{P'}\big[e^{-\int_r^T B_s^{\mu} ds}\big\vert \mathcal{F}_r] \nonumber \\
	=&(e^{-\int_0^r B_s^{\mu} ds} )\exp(\phi(T-r,0) + \psi(T-r,0)B_r^{\mu}). \label{LongevityRiccati}
\end{align}

As already mentioned in Remark 6.3 in \cite{fns_2019} there are two important cases in which the assumptions of Proposition \ref{GeneralizedNumericalRepr} are satisfied.
\begin{enumerate}
\itemsep0pt
	\item Non-linear Vasicek model with state space $\mathcal{O}=\mathbb{R}$, i.e., $\Theta=[\underline{b}^0,\overline{b}^0] \times \lbrace \underline{b}^1 \rbrace \times [\underline{a}^0,\overline{a}^0] \times \lbrace 0 \rbrace$ with $\underline{b}^1=\overline{b}^1$. 
	\item Non-linear CIR model with state space $\mathcal{O}=\mathbb{R}_{>0}$, i.e., $\Theta=[\underline{b}^0,\overline{b}^0] \times [\underline{b}^1,\overline{b}^1] \times \lbrace 0 \rbrace \times [\underline{a}^1,\overline{a}^1]$ with $\underline{b}^0 \geq \overline{a}^1/2$.
\end{enumerate}

\begin{remark}
	In \cite{luciano_vigna_2005} non-mean reverting processes are suggested as they better fit observed data on mortality intensity. As the Vasicek and the CIR model also belong to classes of Ornstein-Uhlenbeck and Feller processes, we can include this non-mean reverting property to the correspondent non-linear cases by setting $\underline{b}^0=\overline{b}^0=0$. 
\end{remark}

We now consider the valuation of a contingent claim $f(S_T)$ in a setting for given families of probability measures $(\tilde{\mathcal{A}}(t,y))_{(t,y) \in [0,T] \times \mathbb{R}^2} \subseteq \mathcal{P}(\Omega_x)$ with $y=(y^{\mu},y^S)$ and a Lipschitz function $f: \mathcal{O}^S \to \mathbb{R}_+$, $\mathcal{O}^S \subseteq \mathbb{R}$.
We want to find a way to numerically compute the following value function $v: [0,T] \times \mathcal{O} \to \mathbb{R}$, $\mathcal{O} \subseteq \mathbb{R}^2$
\begin{align}
	v(t,y)&:=\sup_{P \in \tilde{\mathcal{A}}(t,y)} E^P[e^{-\int_t^T B^{\mu}_s ds} f(S_T) \vert B_t^{\mu}=y^{\mu}, S_t=y^S]. \label{defiValueFuct}
\end{align}
We construct an example for a space $\Omega_x$ and families of probability measures $(\tilde{\mathcal{A}}(t,y))_{(t,y) \in [0,T] \times \mathbb{R}^2}$ such that the value function $v(t,y)$ in $(\ref{defiValueFuct})$ can be explicitly computed via generalized Riccati equations and PDEs. More generalized cases are studied in \cite{bahar_francesca_katharina}. 
\begin{example}
Set $(\Omega_x, \mathcal{F}):=(\Omega^{\mu} \times \Omega^{S}, \mathcal{F}^{\mu} \otimes \mathcal{F}^S)$ with $x:=(x^{\mu}, x^S) \in \mathbb{R}^2$, $\Omega^{\mu}:=C_{x^{\mu}}([0,T], \mathbb{R})$ and $\Omega^{S}:=C_{x^{S}}([0,T], \mathbb{R})$ equipped with the Borel $\sigma$-algebra $\mathcal{F}^{\mu}:=\mathcal{B}(\Omega^{\mu})$ and $\mathcal{F}^{S}:=\mathcal{B}(\Omega^{S})$ respectively. The canonical processes on $\Omega^{\mu}$ and $\Omega^S$ are denoted by $B^{\mu}$ and $B^S$, respectively. For $t \in [0,T]$, $y=(y^{\mu},y^S) \in \mathcal{O}^{\mu} \times \mathbb{R}$, we consider on $\Omega_x$ the following family of probability measures 
	\begin{equation}
		\tilde{\mathcal{A}}(t,y):=\lbrace P=P^{\mu} \otimes P^{S}: P^{\mu} \in \mathcal{A}(t,y^{\mu},\Theta^{\mu}), P^{S} \in \mathcal{P}^S \rbrace \subseteq \mathcal{P}(\Omega_x),  \label{ProductSet}
	\end{equation}
	where $\mathcal{P}^S$ is the weakly compact set of probability measures representing the $G$-expectation as an upper expectation on $\Omega^S$ as in Theorem 2.5, Chapter VI in \cite{shige_script}. Let $(\mathcal{A}(t,y^{\mu},\Theta^{\mu}))_{(t,y^{\mu}) \in [0,T] \times \mathcal{O}^{\mu}}$ be proper families of probability measures on $\Omega^{\mu}$ with state space $\mathcal{O}^{\mu} \subseteq \mathbb{R}$ as in Definition \ref{defiAffineDominated}. Moreover, assume that the asset price $S=(S_s)_{s \in [t,T]}$ on $\Omega^S$ satisfies the following SDE
\begin{align*}
	dS_s&=b(S_s)d_s + h(S_s)d\langle {B^S} \rangle_s + \sigma(S_s)d B^S, \quad s \in [t,T]\\
	S_t&=y^{S},
\end{align*}
where the canonical process $B^S$ is a one-dimensional $G$-Brownian motion on $\Omega_S$ due to the definition of $\mathcal{P}^S$ and $b,h,\sigma: \mathbb{R} \to \mathbb{R}$ are Lipschitz continuous functions. Then $v^S(t,y^S)$ is the unique viscosity solution of the following PDE
\begin{align*}
	&\partial_t v^{S} + F(D^2 v^S, D v^S, v^S, y^S)=0 \\
	&v^S(T,y^S)=f(y^S)
\end{align*}
with 
\begin{equation*}
	F(D^2 v^S, D v^S, v^S, y^S)= G(\sigma(y^S)^2 D^2 v^S+ h(y^S) D v^S)+b(y^S) D v^{S}
\end{equation*}
by Theorem 3.7, Chapter V in \cite{shige_script}.
	By construction the canonical processes $B^S$ and $B^{\mu}$ are independent under all $P \in \tilde{\mathcal{A}}(t,y), (t,y) \in [0,T] \times \mathbb{R}^2$. We have
	\begin{align}
		v(t,y)&= \sup_{P \in \tilde{\mathcal{A}}(t,y)} \frac{ E^P[e^{-\int_t^T B^{\mu}_s ds} f(S_T)  \textbf{1}_{\lbrace B_t^{\mu}=y^{\mu} \rbrace}\textbf{1}_{\lbrace S_t=y^{S} \rbrace}]}{P(B_t^{\mu}=y^{\mu}, S_t=y^{S})} \nonumber \\
		&= \sup_{P \in \tilde{\mathcal{A}}(t,y)} \frac{ E^P[e^{-\int_t^T B^{\mu}_s ds}  \textbf{1}_{\lbrace B_t^{\mu}=y^{\mu} \rbrace}]E^P[ f(S_T) \textbf{1}_{\lbrace S_t=y^{S} \rbrace}]}{P(B_t^{\mu}=y^{\mu})P(S_t=y^{S})}  \label{IndependenceI}\\
		&=\sup_{P^{\mu} \in \mathcal{A}(t,y^{\mu}, \Theta^{\mu})} \frac{ E^{P^{\mu}}[e^{-\int_t^T B^{\mu}_s ds}  \textbf{1}_{\lbrace B_t^{\mu}=y^{\mu} \rbrace}]}{P(B_t^{\mu}=y^{\mu})} \sup_{P^S \in \mathcal{P}^S} \frac{ E^{P^{S}}[f(S_T)  \textbf{1}_{\lbrace S_t=y^{S} \rbrace}]}{P(S_t=y^{S})} \label{IndependeceII} \\
		&=\sup_{P^{\mu} \in \mathcal{A}(t,y^{\mu}, \Theta^{\mu})} E^{P^{\mu}}[e^{-\int_t^T B^{\mu}_s ds}  \textbf{1}_{\lbrace B_t^{\mu}=y^{\mu} \rbrace}] \sup_{P^S \in \mathcal{P}^S} \frac{ E^{P^{S}}[f(S_T)  \textbf{1}_{\lbrace S_t=y^{S} \rbrace}]}{P(S_t=y^{S})} \label{IndependenceIII} \\
		&=:v^{\mu}(t,y^{\mu}) v^{S}(t,y^{S}). \nonumber
\end{align}
In $(\ref{IndependenceI})$ we used the independence of $B^S$ and $B^{\mu}$. Moreover, by Definition \ref{defiAffineDominated} of $\mathcal{A}(t,y^{\mu},\Theta^{\mu})$ it holds $P^{\mu}(B_t^{\mu}=y^{\mu})=1$ for all $P^{\mu} \in \mathcal{A}(t,y^{\mu},\Theta^{\mu})$ which implies $(\ref{IndependeceII})$ and $(\ref{IndependenceIII})$.
If $\mathcal{A}^{\mu}(y^{\mu}, \Theta^x)$ satisfies the assumptions in Proposition 6.2 in \cite{fns_2019} (which corresponds to conditions in Proposition \ref{GeneralizedNumericalRepr} in a one-dimensional setting), then the function $v^{\mu}:[0,T] \times \mathcal{O}^{\mu} \to \mathbb{R}$ can be expressed via generalized Riccati equations in Proposition \ref{GeneralizedNumericalRepr}. 
\end{example}

\end{subsection}
\end{section}

\begin{section}{Pricing under model uncertainty in an arbitrage-free setting} \label{SectionPricing}

We now wish to show how the extended market model on $\tilde{\Omega}_x$ introduced in Subsection \ref{SubsectionExtendedMarketModel} containing the riskfree asset $S^0$, the risky asset $S$ and the longevity bond $S^L$ is arbitrage-free. More in general, we allow the trading of a contingent claim represented by a $\mathcal{G}_T^{{{\mathcal{Z}}}}$-measurable random variable $Y$. We price this contingent claim with the sublinear conditional operator $(\tilde{\mathcal{E}}_t)_{t \in [0,T]}$ introduced in Proposition \ref{extendedOperator}, i.e., we set $S^Y_t:=\tilde{\mathcal{E}}_t(Y)$ for $t \in [0,T]$. To guarantee that $S^Y$ is well-defined we assume from now on that $Y$ is upper semianalytic on $\tilde{\Omega}_x$ and nonnegative. We then show that the extended market model $(S^0,S,S^Y)$ is arbitrage-free. Setting $Y:=e^{-\int_0^T B_s^{\mu}ds}$ we obtain the desired result for the market model extended with the longevity bond.\\
We now consider the concept of ``absence of arbitrage of the first kind'' NA$_1(\mathcal{\tilde{\mathcal{Z}}})$ under model uncertainty introduced in \cite{bbkn_2017} and directly apply it to our market model on $\tilde{\Omega}_x$. For a $\sigma$-field $\mathcal{A} \subseteq \mathcal{G}$ the set of all $[0,\infty]$-valued, $\mathcal{G}$-measurable random variables that are $\tilde{\mathcal{Z}}$-q.s. finite is denoted by $L_+^0(\mathcal{A}, \tilde{\mathcal{Z}})$. A trading strategy $H$ is given by a simple predictable processes $H=\sum_{i=1}^n h_i \textbf{1}_{]\tau_{i-1},\tau_i]}$, where $h_i=(h_i^j)$, $j \in \lbrace S,Y\rbrace$ is $\mathcal{G}_{\tau_{i-1}}$-measurable for all $i \leq n$ and $(\tau_i)_{i \leq n}$ is a nondecreasing sequence of $\mathbb{G}$-stopping times with $\tau_0=0$. The set of possible strategies for a given initial wealth $x \in \mathbb{R}_+$ is given by
\begin{equation}
	\mathcal{H}^{\text{simp}}(x)= \lbrace H: \text{ simple predictable process such that } X^{x,H}\geq 0 \ \mathcal{\tilde{Z}} \text{-q.s.}  \rbrace, \label{StrategiesSimple}
\end{equation}
where $X^{x,H}$ is the associated wealth process of the form
\begin{equation}
	X^{x,H}_t=\underbrace{x+ \sum_{i=1}^n h_i^S (S_{\tau_i \wedge t}-S_{\tau_{i -1} \wedge t})}_{:=X^{x,H,1}_t} + \underbrace{\sum_{i=1}^n h_i^Y (S_{\tau_i \wedge t}^Y-S_{\tau_{i -1} \wedge t}^Y )}_{:=X^{x,H,2}_t}. \label{specialFormS}
\end{equation}
A simple strategy $H$ is in $\mathcal{H}^{\text{simp}}(x)$ if $X^{x,H}$ stays nonnegative $\tilde{\mathcal{Z}}$-q.s.
We introduce the set
\begin{equation}
	\mathcal{X}^{\text{simp}} = \lbrace X^{x,H} : x \in \mathbb{R}_+, \ H \in \mathcal{H}^{\text{simp}}(x) \rbrace.
\end{equation}
For $T \in \mathbb{R}_+$ and $f \in L_+^0(\mathcal{G}_T,\tilde{\mathcal{Z}})$ the superhedging price of the claim $f$ is defined by
\begin{equation}
	\nu^{\text{simp}}(T,f):= \inf \lbrace x \in \mathbb{R}_+: \exists H \in \mathcal{H}^{\text{simp}}(x) \text{ with } X_T^{x,H} \geq f \ \tilde{\mathcal{Z}} \text{-q.s.} \rbrace. \label{superhedgingSimple}
\end{equation}
\begin{defi}{\cite[Definition 2.1]{bbkn_2017}}\label{DefiNA1ExtendedMarket}
The market model $(S,S^Y)$ on $\tilde{\Omega}_x$ presents no arbitrage of first kind with respect to $\tilde{\mathcal{Z}}$, (NA$_1(\tilde{\mathcal{Z}})$) if 
\begin{equation}
	\forall s \in [0,T] \text{ and } f \in L_+^0(\mathcal{G}_s,\tilde{\mathcal{Z}}), \quad \nu^{\text{simp}}(s,f)=0 \Longrightarrow f=0 \quad \tilde{\mathcal{Z}}\text{-q.s.,} \label{extendedMarketNA}
\end{equation}
where the wealth process $X^{x,H}$ given as in $(\ref{specialFormS})$ and $\tilde{\mathcal{Z}}:=\tilde{\mathcal{Z}}(0,\omega)$ defined in $(\ref{extendedSet})$.
\end{defi}

The arbitrage condition in Definition \ref{DefiNA1ExtendedMarket} takes only into account the set $\tilde{\mathcal{Z}}$ and not the families $(\tilde{\mathcal{Z}}(t,\omega))_{(t,\omega) \in (0,T] \times \Omega_x}$ which is in line with the assumptions in Theorem 3.2 in \cite{n_2015}. This is motivated by the fact that $\tilde{\mathcal{Z}}$ is the set of probability measures we are really interested in and the families $(\tilde{\mathcal{Z}}(t,\omega))_{(t,\omega) \in (0,T] \times \Omega_x}$ are auxiliary constructions. In addition, in our setting the set $\tilde{\mathcal{Z}}(t,\omega)$ intuitively considers the market on the interval $[t,T]$ instead from time zero. 

\begin{remark} \label{RemarkFiltrationsEqual}
In contrast to \cite{bbkn_2017} we do not assume the asset $S$ to have $\tilde{\mathcal{Z}}$-q.s. continuous paths which is crucial for proving the fundamental theorem of asset pricing in Theorem 3.4 in \cite{bbkn_2017}. Here, we only require paths to be c\`{a}dl\`{a}g as in the classical case, e.g. \cite{kardaras_2010}, or without any assumptions regarding regularity. Another difference to \cite{bbkn_2017} is that the simple predictable strategies $H$ are defined with respect to the filtration $\mathbb{G}$ and not with respect to the right-continuous filtration $\mathbb{G}_+$. As already mentioned in \cite{bbkn_2017} this is not a problem as the set of predictable processes on $(\tilde{\Omega}, \mathbb{G}_+)$ coincides with the class of predictable processes on $(\tilde{\Omega}, \mathbb{G})$. Furthermore, the set of local martingale measures in Definition 3.3 in \cite{bbkn_2017} is also defined by the local martingale property with respect to $\mathbb{G}_+$. By Proposition 2.2 in \cite{nn_measurability_2014} it holds that for any right-continuous $\mathbb{G}$-adapted process it is equivalent to be a $(\tilde{P},\mathbb{G})$-semimartingale or a $(\tilde{P},\mathbb{G}_+^P)$-semimartingale or a $(\tilde{P},\mathbb{G}_+)$-semimartingale and the semimartingale characteristics are the same. Thus, it is also possible to consider local $(\tilde{P},\mathbb{G})$-martingales instead of local $(\tilde{P},\mathbb{G}_+)$-martingales.
\end{remark}
We now introduce the weaker notion NA$_1(\tilde{P}):=$ NA$_1(\lbrace \tilde{P} \rbrace)$ for $\tilde{P} \in \tilde{\mathcal{Z}}$ which is used in the proof of \cite[Theorem 3.4]{bbkn_2017}. This condition means that
\begin{equation*}
	\forall T \in \mathbb{R}_+ \text{ and } f \in L_+^0(\mathcal{G}_T,\tilde{P}), \quad \nu^{\text{simp},\tilde{P}}(T,f)=0 \Longrightarrow f=0 \ \tilde{P}\text{-a.s.},
\end{equation*}
where 
\begin{equation*}
	\nu^{\text{simp},\tilde{P}}(T,f):= \inf \lbrace x \in \mathbb{R}_+: \exists H \in \mathcal{H}^{\text{simp},\tilde{P}}(x) \text{ with } X_T^{x,H} \geq f \ \tilde{P} \text{-a.s.} \rbrace
\end{equation*}
and $\mathcal{H}^{\text{simp},\tilde{P}}(x)$ is the class of all simple predictable processes such that $X^{x,H}$ is non-negative $\tilde{P}$-a.s. We have the following useful relation between NA$_1(\tilde{\mathcal{P}})$ and NA$_1(\tilde{P})$. 
	
\begin{prop} \label{NA_prop_equivalence}
Assume that $S$ has $\tilde{\mathcal{Z}}$-q.s. continuous paths. Then
\begin{equation} \label{NA_equivalence}
	\text{NA}_1(\tilde{\mathcal{Z}}) \text{ holds if and only if } \text{NA}_1(\tilde{P}) \text{ holds for all } \tilde{P} \in \tilde{\mathcal{Z}}. 
\end{equation}
If $S$ has c\`{a}dl\`{a}g paths, then
\begin{equation} \label{NA_one_direction}
	 \text{NA}_1(\tilde{P}) \text{ holds for all } \tilde{P} \in \tilde{\mathcal{Z}} \text{ implies } \text{NA}_1(\tilde{\mathcal{Z}}). 
\end{equation}
\end{prop}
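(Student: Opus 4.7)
The plan is to split the proposition into the two implications and note that only the harder direction requires path continuity.

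For the direction $(\ref{NA_one_direction})$, which holds in the merely c\`{a}dl\`{a}g case, I would argue as follows. Any simple predictable strategy $H$ with $X^{x,H}\ge 0$ $\tilde{\mathcal{Z}}$-q.s.\ automatically satisfies $X^{x,H}\ge 0$ $\tilde{P}$-a.s.\ for every $\tilde{P}\in\tilde{\mathcal{Z}}$, and the superhedging inequality $X_T^{x,H}\ge f$ $\tilde{\mathcal{Z}}$-q.s.\ transfers to the $\tilde{P}$-a.s.\ version in the same way. Hence $\mathcal{H}^{\text{simp}}(x)\subseteq \mathcal{H}^{\text{simp},\tilde{P}}(x)$ and the feasible set in the infimum defining $\nu^{\text{simp},\tilde{P}}$ contains the one defining $\nu^{\text{simp}}$, so
\begin{equation*}
\nu^{\text{simp},\tilde{P}}(s,f)\ \le\ \nu^{\text{simp}}(s,f) \qquad \text{for every } \tilde{P}\in\tilde{\mathcal{Z}}.
\end{equation*}
If $f\in L^0_+(\mathcal{G}_s,\tilde{\mathcal{Z}})$ satisfies $\nu^{\text{simp}}(s,f)=0$, then $f$ lies in $L^0_+(\mathcal{G}_s,\tilde{P})$ (since $\tilde{\mathcal{Z}}$-q.s.\ finiteness is stronger than $\tilde{P}$-a.s.\ finiteness) and $\nu^{\text{simp},\tilde{P}}(s,f)=0$; NA$_1(\tilde{P})$ then forces $f=0$ $\tilde{P}$-a.s., and intersecting over $\tilde{P}\in\tilde{\mathcal{Z}}$ gives $f=0$ $\tilde{\mathcal{Z}}$-q.s. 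This proves NA$_1(\tilde{\mathcal{Z}})$ and uses neither continuity nor any finite-variation structure of $S$.

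For the reverse implication in $(\ref{NA_equivalence})$, under $\tilde{\mathcal{Z}}$-q.s.\ continuity of $S$, I would invoke the robust fundamental theorem of asset pricing \cite[Theorem~3.4]{bbkn_2017}: NA$_1(\tilde{\mathcal{Z}})$ guarantees, for every $\tilde{P}\in\tilde{\mathcal{Z}}$, the existence of an equivalent measure $\tilde{Q}\sim\tilde{P}$ under which $S$ is a local $(\tilde{Q},\mathbb{G})$-martingale. Specialising to the single prior $\tilde{P}$, the classical single-prior FTAP of \cite{kardaras_2010} (which applies since $S$ is a c\`{a}dl\`{a}g $(\tilde{P},\mathbb{G})$-semimartingale, see Remark \ref{RemarkFiltrationsEqual}) then yields NA$_1(\tilde{P})$. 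Combined with the easy direction this gives the equivalence $(\ref{NA_equivalence})$.

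The main obstacle is the continuity hypothesis itself: the robust FTAP of \cite{bbkn_2017} relies essentially on $\tilde{\mathcal{Z}}$-q.s.\ continuity of $S$ (as noted in Remark \ref{RemarkFiltrationsEqual}) in order to produce the family of equivalent local martingale measures in a way that is compatible with $\tilde{\mathcal{Z}}$. Without it, the aggregation of single-prior deflators into a $\tilde{\mathcal{Z}}$-q.s.\ statement breaks down, which is precisely why the c\`{a}dl\`{a}g case only yields the one-way implication $(\ref{NA_one_direction})$. The remaining checks in the argument—translating $\tilde{\mathcal{Z}}$-q.s.\ nonnegativity and superhedging into $\tilde{P}$-a.s.\ versions, and verifying $L^0_+(\mathcal{G}_s,\tilde{\mathcal{Z}})\subseteq L^0_+(\mathcal{G}_s,\tilde{P})$—are routine.
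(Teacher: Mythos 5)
Your proof is correct and takes essentially the same approach as the paper: the c\`{a}dl\`{a}g implication $(\ref{NA_one_direction})$ follows from the inclusion $\mathcal{H}^{\text{simp}}(x)\subseteq\mathcal{H}^{\text{simp},\tilde{P}}(x)$ (hence $\nu^{\text{simp},\tilde{P}}(s,f)\le\nu^{\text{simp}}(s,f)$, so a vanishing robust superhedging price forces $f=0$ $\tilde{P}$-a.s.\ for every $\tilde{P}$), and the equivalence $(\ref{NA_equivalence})$ under $\tilde{\mathcal{Z}}$-q.s.\ continuity rests on Theorem 3.4 of \cite{bbkn_2017}. The only difference is cosmetic: the paper reads the equivalence directly off that theorem, whereas you reconstruct the hard direction by extracting the dual (local martingale) measures it provides and then invoking the classical single-prior FTAP to recover NA$_1(\tilde{P})$ — same citation, same substance.
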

\begin{proof}
Equivalence $(\ref{NA_equivalence})$ follows by Theorem 3.4 in \cite{bbkn_2017}. As $\mathcal{H}^{\text{simp}} (x) \subseteq \mathcal{H}^{\text{simp},P}(x)$ we have that $(\ref{NA_one_direction})$ holds.
\end{proof}
\begin{remark}
The other direction in $(\ref{NA_one_direction})$ relies on the property of $\tilde{\mathcal{Z}}$-q.s. continuous paths of $S$ and does not hold in general. 
\end{remark}	
The following lemma shows that the original market model $(S^0,S)$ on $(\Omega_x,\mathcal{F}_T)$ satisfies NA$_1(\mathcal{Z})$ in the sense of Definition $\ref{DefiNA1ExtendedMarket}$ by considering $\mathcal{Z}$ and the filtration $\mathbb{F}$ instead of $\tilde{\mathcal{Z}}$ and $\mathbb{G}$ respectively. Moreover, the wealth process $X^{x,H}$ defined in $(\ref{specialFormS})$ consists only of $X_t^{x,H,1}$, i.e., $S^Y \equiv 0$. 
\begin{lemma} \label{StartMarketNA}
	Under Assumption \ref{AssumptionMartingale} the condition NA$_1(\mathcal{Z})$ is satisfied for the market model $(S^0,S)$ on $(\Omega_x, \mathcal{F}_T)$ defined in Subsection \ref{SubsectionMarketModel}.
\end{lemma}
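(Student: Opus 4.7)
The plan is to reduce the ``quasi-sure'' statement $\text{NA}_1(\mathcal{Z})$ to the corresponding classical statement $\text{NA}_1(P)$ for each individual $P \in \mathcal{Z}$, and then apply the classical Kardaras-type FTAP under each prior. The bridge is exactly Proposition \ref{NA_prop_equivalence}: since $S$ is driven by the continuous canonical component $B^S$ (and in our setup has continuous, hence c\`{a}dl\`{a}g, paths), implication $(\ref{NA_one_direction})$ applies, so it suffices to verify that $\text{NA}_1(P)$ holds for every $P \in \mathcal{Z}$.

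First I would fix an arbitrary $P \in \mathcal{Z}$. By definition $\mathcal{Z} \subseteq \mathcal{P}^S$, so by part 2 of Assumption \ref{AssumptionMartingale} the process $S$ is a $(P,\mathbb{F})$-local martingale. Next I would show that any simple strategy $H \in \mathcal{H}^{\text{simp},P}(x)$ gives rise to a wealth process $X^{x,H,1}$ which is a nonnegative $(P,\mathbb{F})$-local martingale: this is immediate since $H$ is simple predictable and each stochastic integral $h_i^S(S_{\tau_i\wedge\cdot}-S_{\tau_{i-1}\wedge\cdot})$ inherits the local martingale property from $S$. By Fatou's lemma, a nonnegative local martingale is a supermartingale under $P$, so $E^P[X_T^{x,H,1}] \leq x$.

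Now suppose $f \in L^0_+(\mathcal{F}_s, P)$ satisfies $\nu^{\text{simp},P}(s,f) = 0$. By the definition of superhedging price one can choose, for every $n \in \mathbb{N}$, an initial capital $x_n \leq 1/n$ and a strategy $H_n \in \mathcal{H}^{\text{simp},P}(x_n)$ with $X_s^{x_n,H_n} \geq f$ $P$-a.s. The supermartingale inequality yields
\begin{equation*}
E^P[f] \leq E^P[X_s^{x_n,H_n}] \leq x_n \leq \tfrac{1}{n},
\end{equation*}
and letting $n \to \infty$ gives $E^P[f] = 0$, hence $f = 0$ $P$-a.s. This is precisely $\text{NA}_1(P)$.

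Finally, since $P \in \mathcal{Z}$ was arbitrary, $\text{NA}_1(P)$ holds for every $P \in \mathcal{Z}$, and invoking implication $(\ref{NA_one_direction})$ of Proposition \ref{NA_prop_equivalence} (which applies because $S$ has continuous, in particular c\`{a}dl\`{a}g, paths on $\Omega_x = C_x([0,T],\mathbb{R}^2)$) gives $\text{NA}_1(\mathcal{Z})$ as claimed. There is no real obstacle here: the only subtlety is to notice that the pathwise continuity of $S$ inherited from $\Omega_x$ is exactly what is needed to invoke the one-sided version $(\ref{NA_one_direction})$ rather than the stronger equivalence $(\ref{NA_equivalence})$, so no additional regularity hypothesis on the set $\mathcal{P}^S$ beyond Assumption \ref{AssumptionMartingale} is required.
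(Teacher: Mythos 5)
Your proof is correct, and its overall skeleton coincides with the paper's: both reduce NA$_1(\mathcal{Z})$ to the single-prior statement NA$_1(P)$ for each $P \in \mathcal{Z}$ and then pass to the quasi-sure statement via implication $(\ref{NA_one_direction})$ of Proposition \ref{NA_prop_equivalence}. The difference is in how the single-prior step is handled. The paper outsources it to the classical literature: since $S$ is a $(P,\mathbb{F})$-local martingale, $P$ is itself an equivalent local martingale measure, so the $P$-market satisfies NFLVR, and NFLVR implies NA$_1(P)$ by Lemma A.2 in \cite{kabanov_kardaras_song_2016}. You instead prove NA$_1(P)$ from scratch: the wealth process of any simple admissible strategy is a $(P,\mathbb{F})$-local martingale, nonnegativity plus Fatou makes it a $P$-supermartingale, and a zero superhedging price then forces $E^P[f]=0$, i.e.\ $f=0$ $P$-a.s. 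Your route is more elementary and self-contained, and it has a structural advantage: it never leaves the simple-strategy framework in which NA$_1$ is defined in this paper, whereas the citation-based route implicitly uses that the classical notions (formulated with general admissible integrands) imply the simple-strategy version --- true, since shrinking the strategy set can only raise the superhedging price, but a compatibility point the paper glosses over. What the paper's route buys in exchange is brevity and a direct link to the standard FTAP results. Two minor points on your write-up: (i) the claim that each term $h_i^S(S_{\tau_i \wedge \cdot}-S_{\tau_{i-1}\wedge \cdot})$ inherits the local martingale property is not completely immediate when $h_i^S$ is unbounded; it follows because adapted left-continuous step processes are locally bounded (truncate $h_i^S$ on the $\mathcal{F}_{\tau_{i-1}}$-measurable sets $\lbrace \vert h_i^S \vert > k \rbrace$ and localize $S$), so one line of justification is warranted; (ii) the paper never actually asserts that $S$ has continuous paths --- it is only ``driven by'' $B^S$, and Remark \ref{RemarkFiltrationsEqual} explicitly declines to assume continuity --- but this does not damage your argument, since the direction $(\ref{NA_one_direction})$ you invoke rests solely on the inclusion $\mathcal{H}^{\text{simp}}(x) \subseteq \mathcal{H}^{\text{simp},P}(x)$, which needs no path regularity at all.
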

	
\begin{proof}
 Assumption \ref{AssumptionMartingale} ensures that for every $Q \in \mathcal{Z}$ the NFLVR-condition holds for the $Q$-market. This implies that NA$_1(Q)$ holds for all $Q \in \mathcal{Z}$ and by Proposition \ref{NA_prop_equivalence} we can conclude the NA$_1(\mathcal{Z})$ holds. Here, we used that in the classical case, i.e., when the set of priors consists only of one single probability measure, it holds that NFLVR implies NA$_1$ by Lemma A.2 in \cite{kabanov_kardaras_song_2016}.
\end{proof}

In the sequel, we prove that no arbitrage of first kind under model uncertainty also holds for extended models $(S^0,S,S^Y)$ on $(\tilde{\Omega}_x, \mathcal{G}_T)$. 

\begin{asum} \label{asumExpectationD}
	Let $\tilde{P} \in \tilde{\mathcal{Z}}$ and $t \in [0,T]$. Then for all $X^{x,H} \in \mathcal{X}^{\text{simp}}$ we have
	\begin{equation}
		E^{\tilde{P}} [X_t^{x,H}] \leq E^{\tilde{P}} [X_0^{x,H}]\label{ExpectationD}.
	\end{equation}
\end{asum}

\begin{prop} \label{NADecrasing} Let $Y$ be an upper semianalytic, $\mathcal{G}_T^{{{\mathcal{Z}}}}$-measurable and nonnegative random variable. Set $S_t^Y:= \tilde{\mathcal{E}}_t(Y)$ for $t \in [0,T]$. Under Assumption \ref{asumExpectationD} the extended market model $(S^0,S,S^Y)$ on $\tilde{\Omega}_x$ satisfies NA$_1(\tilde{\mathcal{Z}})$. 
\end{prop}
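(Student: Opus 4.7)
The strategy is to reduce the quasi-sure statement NA$_1(\tilde{\mathcal{Z}})$ to a per-prior expectation argument that Assumption \ref{asumExpectationD} furnishes directly. Since the assumption already encodes the content that ``admissible wealth processes decrease in expectation under every prior'', the argument reduces to a one-line bound combined with an $n\to\infty$ limit, carried out separately for each $\tilde{P}\in\tilde{\mathcal{Z}}$ and then recombined via the very definition of $\tilde{\mathcal{Z}}$-quasi-sure equality.

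More precisely, first fix $s \in [0,T]$ and $f \in L^0_+(\mathcal{G}_s, \tilde{\mathcal{Z}})$ with $\nu^{\text{simp}}(s,f) = 0$. By the definition of the infimum in $(\ref{superhedgingSimple})$, for every $n \in \mathbb{N}$ there exists $H_n \in \mathcal{H}^{\text{simp}}(1/n)$ with
\[
X_s^{1/n, H_n} \geq f \quad \tilde{\mathcal{Z}}\text{-q.s.,}
\]
so in particular $X_s^{1/n, H_n} \geq f$ $\tilde{P}$-a.s.\ for every $\tilde{P} \in \tilde{\mathcal{Z}}$, while $X^{1/n, H_n} \in \mathcal{X}^{\text{simp}}$ by construction.

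Second, fix an arbitrary $\tilde{P} \in \tilde{\mathcal{Z}}$. The telescoping of the sums in $(\ref{specialFormS})$ at $t=0$ gives $X_0^{1/n, H_n} = 1/n$, so Assumption \ref{asumExpectationD} yields
\[
0 \leq E^{\tilde{P}}[f] \leq E^{\tilde{P}}\bigl[X_s^{1/n, H_n}\bigr] \leq E^{\tilde{P}}\bigl[X_0^{1/n, H_n}\bigr] = \frac{1}{n}.
\]
Letting $n \to \infty$ forces $E^{\tilde{P}}[f] = 0$, and combining with $f \geq 0$ $\tilde{P}$-a.s.\ gives $f = 0$ $\tilde{P}$-a.s. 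Since $\tilde{P} \in \tilde{\mathcal{Z}}$ was arbitrary, $f = 0$ $\tilde{\mathcal{Z}}$-q.s., proving NA$_1(\tilde{\mathcal{Z}})$.

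The only real subtlety of this plan is that essentially everything is packed into Assumption \ref{asumExpectationD}, so there is nothing non-routine to verify in the proof itself. The genuine obstacle, if one wished to dispense with $(\ref{ExpectationD})$, would be to show that every admissible wealth process is a true $(\tilde{P},\mathbb{G})$-supermartingale for each $\tilde{P} \in \tilde{\mathcal{Z}}$: the part $X^{x,H,1}$ is only a local $(\tilde{P},\mathbb{G})$-martingale, since $S$ is a local $(\tilde{P},\mathbb{G})$-martingale by Subsection \ref{SubsectionExtendedMarketModel}, while $X^{x,H,2}$ is an integral against the sublinear value process $S^{Y}=\tilde{\mathcal{E}}_\cdot(Y)$, which is a $(\tilde{P},\mathbb{G})$-supermartingale via the essential supremum representation $(\ref{esssupBiagini})$. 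Upgrading a local (super)martingale to a genuine supermartingale in expectation typically requires the admissibility constraint $X^{x,H} \geq 0$ $\tilde{\mathcal{Z}}$-q.s.\ together with Fatou's lemma along a localizing sequence, and this is exactly the content that Assumption \ref{asumExpectationD} isolates once and for all.
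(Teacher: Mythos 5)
Your proof is correct and follows essentially the same route as the paper: the paper's proof is a one-line appeal to the arguments of Theorem 3.5 in \cite{bbkn_2017}, and those arguments are exactly your reduction — near-zero superhedging capital together with the decreasing-expectation property of Assumption \ref{asumExpectationD} gives $E^{\tilde{P}}[f] \leq 1/n$ for every $n$ and every $\tilde{P} \in \tilde{\mathcal{Z}}$, hence $f = 0$ $\tilde{\mathcal{Z}}$-q.s. Your spelled-out argument is thus a self-contained substitute for the citation, and your closing observation that all the substantive work (the local-martingale/supermartingale and Fatou arguments) is deliberately packed into Assumption \ref{asumExpectationD} matches the paper's design, which verifies that assumption separately in Lemma \ref{LemmaExpectationSatisfied}.
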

\begin{proof}
	This follows by the arguments in the proof of Theorem 3.5 in \cite{bbkn_2017}.
\end{proof}
Assumption \ref{asumExpectationD} may appear restrictive. However, it is satisfied in many cases, as we now show below. 
\begin{lemma} \label{LemmaExpectationSatisfied}
If one of the following properties holds for every $X^{x,H} \in \mathcal{X}^{\text{simp}}$, then condition $(\ref{ExpectationD})$ is satisfied. 
	\begin{enumerate}
	\itemsep0pt
		\item $X^{x,H}$ is a $\tilde{P}$-supermartingale for all $\tilde{P} \in \tilde{\mathcal{Z}}$. 
		\item $X^{x,H,1}$ and $X^{x,H,2}$ are $\tilde{P}$-supermartingales for all $\tilde{P} \in \tilde{\mathcal{Z}}$. 
		\item $X^{x,H,1} \geq 0$ $\tilde{\mathcal{Z}}$-q.s. and $ E^{\tilde{P}} [ X^{x,H,2}_{t}] \leq 0$ for all $\tilde{P} \in \tilde{\mathcal{Z}}$ and $t \in [0,T]$.
		\item $X^{x,H,1} \geq 0$ $\tilde{\mathcal{Z}}$-q.s. and we do not allow short-selling for $S^Y$, i.e., $h_i^{Y} \geq 0$ for $i=1,..,n$.
		\item $S \geq 0$ $\tilde{\mathcal{Z}}$-q.s. and we do not allow short-selling for $S,S^Y$, i.e., $h_i^{j} \geq 0$ for $i=1,..,n$ and $j \in \lbrace S,Y \rbrace$.
	\end{enumerate}
Clearly, $2. \Rightarrow 1.$ and $5. \Rightarrow 4. \Rightarrow 3.$. 
\end{lemma}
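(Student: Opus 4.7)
The plan is to verify each of the five conditions implies $(\ref{ExpectationD})$, using the implication chains ``$2 \Rightarrow 1$'' and ``$5 \Rightarrow 4 \Rightarrow 3$'' to reduce most of the work to items 1 and 3. Throughout I rely on two facts already established in the paper: (a) by Assumption \ref{AssumptionMartingale} combined with the immersion argument of Subsection \ref{SubsectionExtendedMarketModel}, $S$ is a local $(\tilde{P}, \mathbb{G})$-martingale under every $\tilde{P} \in \tilde{\mathcal{Z}}$; and (b) by the same reasoning used for $S^L$ after Definition \ref{defiLongevityBond}, applied with $Y$ in place of $e^{-\int_0^T B_s^{\mu} ds}$, the process $S^Y_t = \tilde{\mathcal{E}}_t(Y)$ is a $\tilde{P}$-supermartingale under every $\tilde{P} \in \tilde{\mathcal{Z}}$.

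Item 1 is literally the supermartingale inequality between times $0$ and $t$, and item 2 reduces to item 1 because the sum $X^{x,H} = X^{x,H,1} + X^{x,H,2}$ of two $\tilde{P}$-supermartingales is itself a $\tilde{P}$-supermartingale. For item 3, fact (a) makes the simple stochastic integral $X^{x,H,1}$ a local $(\tilde{P}, \mathbb{G})$-martingale starting at $x$; nonnegativity together with Fatou's lemma upgrades it to a genuine $\tilde{P}$-supermartingale, yielding $E^{\tilde{P}}[X_t^{x,H,1}] \leq x$. Adding the assumed bound $E^{\tilde{P}}[X_t^{x,H,2}] \leq 0 = X_0^{x,H,2}$ gives $(\ref{ExpectationD})$.

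For the reduction ``$4 \Rightarrow 3$'', fact (b) together with the nonnegativity and $\mathcal{G}_{\tau_{i-1}}$-measurability of each $h_i^Y$ lets one apply optional sampling termwise to the simple sum defining $X^{x,H,2}$: $E^{\tilde{P}}[h_i^Y(S^Y_{\tau_i \wedge t} - S^Y_{\tau_{i-1} \wedge t})] \leq 0$ for each $i$, and summing delivers $E^{\tilde{P}}[X_t^{x,H,2}] \leq 0$, which is exactly the hypothesis of item 3. For ``$5 \Rightarrow 4$'' (and hence for item 5 itself), the hypothesis $S \geq 0$ turns the local martingale $S$ into a $\tilde{P}$-supermartingale, so the analogous optional-sampling argument applied with $h_i^S \geq 0$ shows $X^{x,H,1}$ is a $\tilde{P}$-supermartingale; combined with the parallel argument for $X^{x,H,2}$ (valid because $h_i^Y \geq 0$ is also part of item 5), one obtains directly $E^{\tilde{P}}[X_t^{x,H}] \leq x$.

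The only technicality is the integrability needed to apply optional sampling to the simple integrals against supermartingales, which is routine because on each stochastic interval $(\tau_{i-1}, \tau_i]$ the integrand $h_i^j$ is $\mathcal{G}_{\tau_{i-1}}$-measurable and, by the standard simple-predictable convention, bounded. No single step stands out as difficult: the lemma is really a compilation of natural sufficient conditions tied together by elementary supermartingale manipulations applied to the two building blocks (a) and (b) above.
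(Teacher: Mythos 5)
Your proof is correct and, for items 1--4, follows essentially the same route as the paper: items 1 and 2 are the plain supermartingale inequality (plus closure of supermartingales under sums), item 3 is the localization-plus-Fatou upgrade of the nonnegative local martingale $X^{x,H,1}$ to a genuine $\tilde{P}$-supermartingale followed by adding $E^{\tilde{P}}[X_t^{x,H,2}]\leq 0$, and item 4 is reduced to item 3 through the supermartingale property of $S^Y$ and the no-short-selling constraint. There are two places where you genuinely diverge, both to your advantage. First, in item 4 you condition on $\mathcal{G}_{\tau_{i-1}}$ before invoking optional sampling; the paper instead pulls the random weight $h_i^Y$ outside the expectation as if it were a constant, which is only formal --- your version is the rigorous one. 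Second, for item 5 the paper argues the chain $5 \Rightarrow 4$ by asserting that $S \geq 0$ together with $h_i^S \geq 0$ forces $X^{x,H,1} \geq 0$ $\tilde{\mathcal{Z}}$-q.s.; this is doubtful as stated, since the gains process $x + \sum_i h_i^S (S_{\tau_i \wedge t} - S_{\tau_{i-1}\wedge t})$ can become negative even with long-only positions in a nonnegative asset (the loss on each interval is only bounded by $h_i^S S_{\tau_{i-1}\wedge t}$, not by $x$, as nothing prevents implicit cash borrowing). You avoid this issue entirely by proving item 5 directly: $S \geq 0$ plus the local martingale property makes $S$ a $\tilde{P}$-supermartingale via Fatou, and the termwise optional-sampling argument applied to both components yields $(\ref{ExpectationD})$. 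Note, however, that what you actually establish is ``item 5 implies $(\ref{ExpectationD})$'', not the literal implication $5 \Rightarrow 4$ under which you announce it (that implication is precisely the questionable nonnegativity claim); since the substance of the lemma is that each condition implies $(\ref{ExpectationD})$, this relabeling is harmless. One small caveat: your integrability fix appeals to the $h_i$ being bounded ``by convention'', whereas the paper's definition of simple strategies only requires $\mathcal{G}_{\tau_{i-1}}$-measurability; for unbounded nonnegative $h_i$ the conditioning argument still goes through for the nonnegative supermartingales involved, but that should be said explicitly rather than assumed.
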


\begin{proof}
Since $(\ref{ExpectationD})$ obviously holds under conditions 1. and 2., we start with 3. Let $X^{x,H} \in \mathcal{X}^{\text{simp}}$. 
As for any $\tilde{P} \in \tilde{\mathcal{Z}}$ the asset $S$ is a $(\tilde{P}, \mathbb{G})$-local martingale, there exists an increasing sequence $(\tilde{\tau}_n)_{n \in \mathbb{N}}$ of $\mathbb{G}$-stopping times with $\tilde{\tau}_n \uparrow \infty$ $\tilde{P}$-a.s. such that $(S_{\tilde{\tau}_n \wedge t})_{t \geq 0}$ is a $\tilde{P}$-martingale for all $\tilde{P} \in \tilde{\mathcal{Z}}, n \in \mathbb{N}$. It follows that for $ x \in \mathbb{R}_+,  H \in \mathcal{H}^{\text{simp}}(x), n \in \mathbb{N}$
\begin{equation} 
		X^{x,H,1}_{\cdot \wedge \tilde{\tau}_n} \text{ is a local } (\tilde{P}, \mathbb{G}) \text{-martingale for all } \tilde{P} \in \tilde{\mathcal{Z}}. \label{localmartingale}
\end{equation}
By $X^{x,H,1} >0$ $\tilde{\mathcal{Z}}$-q.s., $X^{x,H,1}_{\cdot \wedge \tilde{\tau}_n}$ is a $\tilde{P}$-supermartingale for all $\tilde{P} \in \tilde{\mathcal{Z}}$ and $n \in \mathbb{N}$. Thus, by Fatou's Lemma we get for $0 \leq s \leq t$ and $\tilde{P} \in \tilde{\mathcal{Z}}$
\begin{equation*}
	 E^{\tilde{P}}[X_t^{x,H,1} \vert \mathcal{G}_s] = E^{\tilde{P}}\big[\lim_{n \to \infty} X_{t \wedge \tilde{\tau}_n}^{x,H,1} \vert \mathcal{G}_s\big] \leq \liminf_{n \to \infty} E^{\tilde{P}} \big[X_{t \wedge \tilde{\tau}_n}^{x,H,1} \vert \mathcal{G}_s \big] \leq \liminf_{n \to \infty} X_{s \wedge \tilde{\tau}_n}^{x,H,1}=X_s^{x,H,1},
	 \end{equation*}
i.e., $X^{x,H,1}$ is a $\tilde{P}$-supermartingale for all $\tilde{P} \in \tilde{\mathcal{Z}}$.	
As $E^{\tilde{P}} [ X^{x,H,2}_{t}] \leq 0$ for all $\tilde{P} \in \tilde{\mathcal{Z}}$ and $t \in [0,T]$ it follows for $\tilde{P} \in \tilde{\mathcal{Z}}$
\begin{align}
	E^{\tilde{P}}[X_t^{x,H}] &=  E^{\tilde{P}} [ X^{x,H,1}_{t}]  +E^{\tilde{P}}[ X^{x,H,2}_{t}] \nonumber \leq E^{\tilde{P}} [X^{x,H,1}_0] \nonumber \\
	&= E^{\tilde{P}} \bigg[X^{x,H,1}_0 + \sum_{i=1}^n h_i^{Y} \underbrace{(S_{\tau_i \wedge 0}^{Y}-S_{\tau_{i-1} \wedge 0}^{Y})}_{=0} \bigg] = E^{\tilde{P}}[X_0^{x,H}]. \nonumber 
\end{align}
For condition 4. it is enough to observe that
\begin{equation*}
	E^{\tilde{P}} [ X^{x,H,2}_{t} ]= \sum_{i=1}^n h_i^{Y} \underbrace{\big(E^{\tilde{P}} [S_{\tau_i \wedge t }^{Y}]-E^{\tilde{P}} [S_{\tau_{i-1} \wedge t }^{Y}] \big)}_{\leq 0}\leq 0,
\end{equation*}
by using the no short-sale constraint and the fact that $S^Y$ is a supermartingale for all $\tilde{P} \in \tilde{\mathcal{Z}}$. \\
Furthermore, it is obvious that the nonnegativity of $S$ and the additional short-sale constraint guarantee that $X^{x,H,1} \geq 0$ $ \tilde{\mathcal{Z}}$-q.s..
\end{proof}
\begin{remark}
In condition 1. and 2. in Lemma \ref{LemmaExpectationSatisfied} the chosen filtrations play no role as $(\ref{ExpectationD})$ involves only the expectation. 
\end{remark}
The results in Lemma \ref{LemmaExpectationSatisfied} show that the sublinear conditional operator $(\tilde{\mathcal{E}}_t)_{t \in [0,T]}$ allows to price a European contingent claim in a way that the extended market is arbitrage-free way under some additional assumptions. These supplementary constraints can be regarded as the price we pay for considering a setting under model uncertainty.
On the one hand, allowing only strategies $H$ such that the wealth process $X^{x,H}$ is a supermartingale for all $\tilde{P} \in \tilde{\mathcal{Z}}$ is in line with the definition of admissible strategies under model uncertainty in \cite[p. 4450]{n_2015} with the difference that there not only simple strategies are considered. 
On the other hand, the supermartingale assumption seems too strong due to Assumption \ref{asumExpectationD} which only requires decreasing expectation. Conditions 3. and 4. in Lemma \ref{LemmaExpectationSatisfied} could be regarded as restrictive in an economical sense. However, in an insurance context constraints as no short-selling or a positive wealth-process are often required by the regulatory framework.\\
For the next result we consider general families of probability measures $({\mathcal{P}}(t,\omega))_{(t,\omega) \in [0,T] \times \Omega}$ on $\Omega$ satisfying Assumption \ref{assumptionnutzNew} in the setting of Section \ref{Section_Reduced_framework}.
\begin{lemma} \label{lemmaAnotherExample}
Let Assumption \ref{assumptionnutzNew} hold for $({\mathcal{P}}(t,\omega))_{(t,\omega) \in [0,T] \times \Omega}$ and $Y$ be an upper semianalytic, $\mathcal{G}_T^{{{\mathcal{P}}}}$-measurable and nonnegative function on $\tilde{\Omega}$. Set $S_t^Y:= \textbf{1}_{N^c} \limsup_{r \downarrow t, r \in \mathbb{Q}} \tilde{\mathcal{E}}_r(Y)$ for $t \in [0,T)$ and $S_T^Y:=\tilde{\mathcal{E}}_T(Y)$ with $N \in \mathcal{N}_T^{\tilde{P}}$. Let $S$ be an $\mathbb{G}^{*,{\tilde{\mathcal{P}}}}$-adapted continuous $(\tilde{P},\mathbb{G}^{*,{\tilde{\mathcal{P}}}})$-semimartingale\footnote{By the same arguments regarding the filtration as in Remark \ref{RemarkFiltrationsEqual} $S$ is also $(\tilde{P},\mathbb{G}^{*,{\tilde{\mathcal{P}}}}_+)$-semimartingale for all $\tilde{P} \in \tilde{\mathcal{P}}$.} for all $\tilde{P} \in \tilde{\mathcal{P}}$.
Assume that $\tilde{\mathcal{P}}$ is a non-empty saturated set of sigma-martingale measures\footnote{The sigma-martingale property holds with respect to the filtration $\mathbb{G}^{*,{\tilde{\mathcal{P}}}}_+$ for all $\tilde{P} \in \tilde{\mathcal{P}}$.} for $S$.	
Under the assumption $h_i^Y \geq 0, i=1,...n,$ the extended market model $(S^0,S,S^Y)$ satisfies $NA(\mathcal{\tilde{P}})$ for $\tilde{\mathcal{Z}}=\tilde{\mathcal{P}}$ in $(\ref{extendedMarketNA})$ .
\end{lemma}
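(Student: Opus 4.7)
The plan is to verify Assumption \ref{asumExpectationD} for the choice $\tilde{\mathcal{Z}}:=\tilde{\mathcal{P}}$, after which the conclusion follows by repeating the arguments in the proof of Proposition \ref{NADecrasing} (i.e.\ of Theorem 3.5 in \cite{bbkn_2017}), whose validity does not rely on the specific structure of $\tilde{\mathcal{Z}}$ assumed in Subsection \ref{SubsectionExtendedMarketModel}. Fix $\tilde{P}\in\tilde{\mathcal{P}}$ throughout. Since $S$ is continuous and $\tilde{\mathcal{P}}$ consists of sigma-martingale measures, $S$ is a continuous $(\tilde{P},\mathbb{G}_{+}^{*,\tilde{\mathcal{P}}})$-local martingale.

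First I would argue that the process $t\mapsto\tilde{\mathcal{E}}_t(Y)$ is a $(\tilde{P},\mathbb{G})$-supermartingale: the consistency relation \eqref{esssupBiagini} from Proposition \ref{extendedOperator} represents it as an essential supremum of conditional expectations of the fixed nonnegative random variable $Y$, which decreases in expectation under conditioning. The standard c\`{a}dl\`{a}g regularisation argument used in the proof of Theorem 3.2 in \cite{n_2015} (and already invoked for the longevity bond in Definition \ref{defiLongevityBond}) then shows that the $\limsup$-modification $S^Y$ is itself a c\`{a}dl\`{a}g $(\tilde{P},\mathbb{G}_{+}^{*,\tilde{\mathcal{P}}})$-supermartingale, for every $\tilde{P}\in\tilde{\mathcal{P}}$.

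Next, take a simple strategy $H=(H^S,H^Y)\in\mathcal{H}^{\text{simp}}(x)$ with $h_i^Y\geq 0$ and decompose the wealth process as in \eqref{specialFormS}, $X^{x,H}=X^{x,H,1}+X^{x,H,2}$. Since $S$ is a continuous local $\tilde{P}$-martingale, the simple integral $X^{x,H,1}$ is a local $\tilde{P}$-martingale by a standard localisation of the $h_i^S$. For $X^{x,H,2}$, the tower property together with the supermartingale property of $S^Y$ and $h_i^Y\geq 0$ yields
\begin{equation*}
E^{\tilde{P}}\bigl[h_i^Y(S^Y_{\tau_i\wedge t}-S^Y_{\tau_{i-1}\wedge t})\,\big|\,\mathcal{G}_{\tau_{i-1}}\bigr]=h_i^Y\bigl(E^{\tilde{P}}[S^Y_{\tau_i\wedge t}\mid\mathcal{G}_{\tau_{i-1}}]-S^Y_{\tau_{i-1}\wedge t}\bigr)\leq 0,
\end{equation*}
so that $X^{x,H,2}$ is a $\tilde{P}$-supermartingale (the same computation appears in the proof of case 4.\ of Lemma \ref{LemmaExpectationSatisfied}). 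Adding the two, $X^{x,H}$ is a local $\tilde{P}$-supermartingale; since $X^{x,H}\geq 0$ $\tilde{\mathcal{P}}$-q.s., Fatou's lemma along a localising sequence upgrades this to a true $\tilde{P}$-supermartingale. Hence $E^{\tilde{P}}[X_t^{x,H}]\leq E^{\tilde{P}}[X_0^{x,H}]$ for every $t\in[0,T]$, which is Assumption \ref{asumExpectationD}. Applying Proposition \ref{NADecrasing} (or equivalently reproducing its proof with $\tilde{\mathcal{Z}}=\tilde{\mathcal{P}}$, using the saturation assumption exactly as in Theorem 3.5 of \cite{bbkn_2017}) gives NA$_1(\tilde{\mathcal{P}})$.

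The main obstacle I anticipate is technical rather than conceptual: namely, ensuring that the localisation arguments are carried out in the right filtration (recall $S$ is only assumed $\mathbb{G}^{*,\tilde{\mathcal{P}}}$-adapted, and $S^Y$ is right-continuous only with respect to $\mathbb{G}_{+}^{*,\tilde{\mathcal{P}}}$), and that the c\`{a}dl\`{a}g modification $S^Y$ inherits the supermartingale property uniformly in $\tilde{P}\in\tilde{\mathcal{P}}$ despite being defined via a $\limsup$ outside a $\tilde{\mathcal{P}}$-polar set $N\in\mathcal{N}_T^{\tilde{\mathcal{P}}}$. Both issues are handled by the same reasoning already developed for the longevity bond in Section \ref{SectionMainLongevity}, with $Y=e^{-\int_0^T B_s^\mu ds}$ replaced by the general nonnegative upper semianalytic claim $Y$.
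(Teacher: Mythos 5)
Your overall strategy (verify Assumption \ref{asumExpectationD} for $\tilde{\mathcal{Z}}=\tilde{\mathcal{P}}$ and then invoke Proposition \ref{NADecrasing}) coincides with the paper's, but the step on which everything hinges is different, and it contains a genuine gap. You claim that $(\tilde{\mathcal{E}}_t(Y))_{t \in [0,T]}$ is a $(\tilde{P},\mathbb{G})$-supermartingale for every $\tilde{P}\in\tilde{\mathcal{P}}$ because the consistency relation \eqref{esssupBiagini} represents it as an essential supremum of conditional expectations of $Y$. This does not follow: to get $E^{\tilde{P}}[\tilde{\mathcal{E}}_t(Y)\,\vert\,\mathcal{G}_s]\leq\tilde{\mathcal{E}}_s(Y)$ you must interchange $E^{\tilde{P}}[\,\cdot\,\vert\,\mathcal{G}_s]$ with the essential supremum over $\tilde{\mathcal{P}}(t;\tilde{P})$, and the conditional expectation of an essential supremum only \emph{dominates} the essential supremum of the conditional expectations, i.e.\ the inequality goes the wrong way. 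The interchange is legitimate exactly when the family $\lbrace E^{\tilde{Q}}[Y\,\vert\,\mathcal{G}_t]:\tilde{Q}\in\tilde{\mathcal{P}}(t;\tilde{P})\rbrace$ is upward directed, which is the hypothesis the paper imposes in Proposition \ref{propModification}, Proposition \ref{heritage} and Lemma \ref{lemmaModification} (obtained, e.g., from pasting stability \eqref{forkStability}); no such hypothesis appears in Lemma \ref{lemmaAnotherExample}. Nor can you appeal to a tower property: on the enlarged filtration $\mathbb{G}$ the operator $\tilde{\mathcal{E}}$ satisfies only the weak time-consistency $\tilde{\mathcal{E}}_s(\tilde{\mathcal{E}}_t(\tilde{X}))\geq\tilde{\mathcal{E}}_s(\tilde{X})$, which is the useless direction here. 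Your fallback, ``the same reasoning as for the longevity bond,'' also fails: there $Y=e^{-\int_0^T B_s^{\mu}ds}$ does not depend on $\hat{\omega}$, so $\tilde{\mathcal{E}}_t(Y)$ collapses to the $(\mathcal{P},\mathbb{F})$-operator $\mathcal{E}_t(Y)$ (Remark 2.19 in \cite{bz_2019}), for which the strong tower property \eqref{towerNutz} holds and the supermartingale property is automatic; for a general $\mathcal{G}_T^{\mathcal{P}}$-measurable $Y$ depending on $\tilde{\tau}$ no such reduction exists.

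The tell-tale sign of the gap is that your argument never uses the hypotheses that $S$ is continuous and that $\tilde{\mathcal{P}}$ is a non-empty \emph{saturated} set of sigma-martingale measures; if your proof were valid, the lemma would hold without them. The paper's proof uses precisely these hypotheses to bypass the supermartingale property of $S^Y$: by the optional decomposition theorem under model uncertainty (Theorem 2.4 in \cite{n_2015}) one writes $S^Y_t=D_t+\int_0^{t,(\tilde{P})}H\,dS$ with $D$ nonincreasing $\tilde{P}$-a.s.\ for all $\tilde{P}\in\tilde{\mathcal{P}}$. Then, using $h_i^Y\geq 0$, the wealth process \eqref{specialFormS} is dominated by
\begin{equation*}
	\tilde{X}^{x,H}_t:=X^{x,H,1}_t+\sum_{i=1}^n h_i^Y\int_{\tau_{i-1}\wedge t}^{\tau_i\wedge t,(\tilde{P})}H\,dS,
\end{equation*}
which is a nonnegative local $\tilde{P}$-martingale (continuity of $S$ enters here, since a continuous sigma-martingale is a local martingale by \cite{protter}), hence a $\tilde{P}$-supermartingale by Fatou; this yields $E^{\tilde{P}}[X_t^{x,H}]\leq E^{\tilde{P}}[X_0^{x,H}]$, i.e.\ Assumption \ref{asumExpectationD}, and then Proposition \ref{NADecrasing} concludes. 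So the increments of $S^Y$ are controlled through a tradeable dominating process produced by saturation, not through a supermartingale property of $S^Y$ itself. To repair your argument you would have to either add an upward-directedness (pasting) hypothesis that actually delivers the supermartingale property of $S^Y$ under every $\tilde{P}\in\tilde{\mathcal{P}}$, or follow the paper's route through the optional decomposition.
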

	
\begin{proof} 
By applying Theorem 2.4 in \cite{n_2015}, a version of the optional decomposition theorem under model uncertainty, for every $\tilde{P} \in \tilde{\mathcal{P}}$ there exists a $\mathbb{G}_+^{*,\tilde{\mathcal{P}}}$-predictable process $\tilde{H}$ which is $S$-integrable for all $\tilde{P} \in \tilde{\mathcal{P}}$ such that 
\begin{equation}
	D_t:=S_t^Y - \int_0^{t,(\tilde{P})} H  dS \text{ is nonincreasing } \tilde{P} \text{-a.s. for all } \tilde{P} \in \tilde{\mathcal{P}}. \label{optionalDecomp}
\end{equation}
In this case $\int_0^{t,(\tilde{P})} H  dS$ is the Itô-integral under the fixed measure $\tilde{P} \in \tilde{\mathcal{P}}$. As $S$ is a continuous local $(\tilde{P},\mathbb{G}^{*,{\tilde{\mathcal{P}}}}_+)$-martingale for all $\tilde{P} \in \tilde{\mathcal{P}}$ and $H$ is integrable, it follows that also $\int_0^{t,(\tilde{P})} H  dS$ is a continuous local $(\tilde{P},\mathbb{G}^{*,{\tilde{\mathcal{P}}}}_+)$-martingale for all $\tilde{P} \in \tilde{\mathcal{P}}$.
Here, we use that a sigma-martingale with continuous paths is a local martingale by Theorem 91 (IV.9) in \cite{protter}.
Consider now $X^{x,H} \in \mathcal{X}^{\text{simp}}$. By $(\ref{optionalDecomp})$ we get 
\begin{align*}
	0 \leq X^{x,H}_t &= X_t^{x,H,1}+ \sum_{i=1}^n h_i^Y (S_{\tau_i \wedge t}^Y-S_{\tau_{i -1} \wedge t}^Y ) \\
	&= X_t^{x,H,1}+ \sum_{i=1}^n h_i^Y \bigg(D_{\tau_i \wedge t}+ \int_0^{\tau_i \wedge t,(\tilde{P})} H  dS-D_{\tau_{i-1} \wedge t} - \int_0^{\tau_{i-1} \wedge t,(\tilde{P})} H  dS \bigg)\\
	&=  X_t^{x,H,1}+ \sum_{i=1}^n h_i^Y \underbrace{(D_{\tau_i \wedge t}-D_{\tau_{i-1} \wedge t})}_{\leq 0} + \sum_{i=1}^n h_i^Y \bigg ( \int_{\tau_{i-1} \wedge t}^{\tau_i \wedge t,(\tilde{P})} H  dS \bigg) \\
	& \leq  X_t^{x,H,1} + \sum_{i=1}^n h_i^Y \bigg ( \int_{\tau_{i-1} \wedge t}^{\tau_i \wedge t,(\tilde{P})} H  dS \bigg) :=\tilde{X}_t^{x,H}.
\end{align*}
	Note, $\tilde{X}^{x,H}_{\cdot \wedge \tilde{\tau}_m}$ is a $(\tilde{P},\mathbb{G}^{*,{\tilde{\mathcal{P}}}}_+)$-supermartingale for all $\tilde{P} \in \tilde{\mathcal{P}}$, $m \in \mathbb{N}$. 
	It follows by Fatou's Lemma that also $\tilde{X}^{x,H}$ is a $(\tilde{P},\mathbb{G}^{*,{\tilde{\mathcal{P}}}}_+)$-supermartingale for all $\tilde{P} \in \tilde{\mathcal{P}}$. Thus, for all $t \in [0,T]$
	\begin{align*}
		E^{\tilde{P}}[X_t^{x,H}] \leq E^{\tilde{P}}[\tilde{X}^{x,H}_t] \leq E^{\tilde{P}}[\tilde{X}^{x,H}_0] = E^{\tilde{P}}\big[X_0^{x,H}+\sum_{i=1}^n h_i^Y \underbrace{(D_{\tau_i \wedge 0}-D_{\tau_{i-1} \wedge 0})}_{= 0} \big] = E^{\tilde{P}}[X_0^{x,H}] 
	\end{align*}
	for $\tilde{P} \in \tilde{\mathcal{P}}$ which implies Assumption \ref{asumExpectationD}.
\end{proof}		

Here, we only assume a no short-selling constraint for the strategies. The price we pay for this are more assumptions on $\mathcal{P}$. An example for a set of probability measures satisfying these conditions is given in Lemma 4.2 in \cite{n_2015}. However, the set $\tilde{\mathcal{Z}}$ does not satisfy these assumptions as already a set of affine processes is not saturated even under one single prior. In general, the optional decomposition theorem under model uncertainty in \cite{n_2015} also requires that $S$ has non-dominating diffusions under each $\tilde{P} \in \tilde{\mathcal{P}}$. However, this property is always satisfied if $S$ is continuous, see Example 2.3 ii) in \cite{n_2015}.\\
As already mentioned the set $\mathcal{Z}$ is in general not saturated due to the affine structure, as it is outlined in the following. Let $P \in \mathcal{Z}$ such that $S$ is a positive local $(P,\mathbb{F}_+^{*,\mathcal{Z}})$-martingale. Consider $P' \in \mathcal{P}(\Omega_x)$ such that $P \sim P'$ and $B^{S}$ is a local $(P',\mathbb{F}_+^{*,\mathcal{Z}})$-martingale. 
By the definition of $\mathcal{Z}$, $B^{S}$ is $\mathbb{F}$-adapted and thus a local $(P',\mathbb{F})$-martingale by Theorem 10 in \cite{foellmer_protter_2010}. Furthermore, as $B^{\mu}$ is a $(P, \mathbb{F})$-semimartingale, it follows that $B^{\mu}$ is a $(P'$,$\mathbb{F})$-semimartingale due to $P \sim P'$ by Theorem III.3.13 in \cite{js_2013}. By applying Girsanov's theorem for semimartingales in Proposition III.3.24 in \cite{js_2013} to $B^{\mu}$, there exists a predictable process $b$ satisfying
	\begin{equation*}
		\int |\alpha_s b_s| ds < \infty \text{ and } \int b_s^2 \alpha_s ds < \infty \ P' \text{-a.s. for } t \in [0,T]
	\end{equation*}
	and such that a version of the characteristics of $B^{\mu}$ relative to $P'$ is given by
	\begin{equation}
		A^{P'} = A^{P} + \int \alpha_s b_s ds =\int \underbrace{\big (\beta^P_s + \alpha_s b_s \big)}_{:=\beta^{P'}_s} ds,  \quad  C^{P'}=C \label{girsanov}
	\end{equation}
	up to a $P'$-null set. By $(\ref{girsanov})$ we can see why the saturation property is not satisfied for an arbitrary affine structure in Definition \ref{defiAffineDominated}, as we can not guarantee $\beta_s^{P'} \in b^*(B^{\mu}_s)$ for $dP' \otimes dt$-almost all $(s,\omega) \in \Omega_x \times [t,T]$. However, by considering only an affine structure on the volatility of the mortality intensity, as it is the case in Example \ref{Example2RandomG-expectation}, the set $\mathcal{Z}$ is saturated.\\		
We now compare the price process $(\tilde{\mathcal{E}}_t(Y))_{t \in [0,T]}$ of the contingent claim $Y$ with its corresponding superhedging price. In this setting Theorem 3.11 and 3.12 in \cite{bz_2019} can be reformulated if 
$\tilde{\mathcal{P}}$ satisfies Assumption 3.1 in \cite{bz_2019}.
\begin{asum} \label{sigmaSaturated} \
\begin{enumerate} 
\itemsep0pt
	\item $\tilde{\mathcal{P}}$ is a set of sigma martingale measures for $S$, i.e., $S$ is a $(\tilde{P}, \mathbb{G}_+^{*,\mathcal{\tilde{P}}})$-sigma-martingale for all $\tilde{P} \in \tilde{\mathcal{P}}$;
	\item $\tilde{\mathcal{P}}$ is saturated: all equivalent sigma-martingale measures of its elements still belong to $\tilde{\mathcal{P}}$;
	\item $S$ has dominating diffusion under every $\tilde{P} \in \tilde{\mathcal{P}}$.
\end{enumerate}
\end{asum}
Here, $S$ is assumed to be a $d$-dimensional $\mathbb{G}^{*,\tilde{\mathcal{P}}}$-adapted process with c\`{a}d\l\`{a}g paths such that $S$ is a $(\tilde{P}, \mathbb{G}^{*,\tilde{\mathcal{P}}})$-semimartingale for every $\tilde{P} \in \tilde{\mathcal{P}}$. 
Furthermore, the set of $d$-dimensional $\mathbb{G}^{*,\tilde{\mathcal{P}}}$-predictable processes which are $S$-integrable for all $\tilde{P} \in \tilde{\mathcal{P}}$ is denoted by $\tilde{L}(S,\tilde{\mathcal{P}})$ and the admissible strategies on $\tilde{\Omega}$ are given by
\begin{equation*}
\tilde{\bigtriangleup}:=\bigg\lbrace \tilde{\delta} \in \tilde{L}(S,\tilde{\mathcal{P}}): \int^{(\tilde{P})} \tilde{\delta} dS \text{ is a } (\tilde{P},\mathbb{G}_+^{*,\tilde{\mathcal{P}}}) \text{-supermartingale for all } \tilde{P} \in \tilde{\mathcal{P}} \bigg\rbrace.
\end{equation*}
In this case the notation $\int^{(\tilde{P})} \tilde{\delta} dS := (\int^{(\tilde{P}),t} \tilde{\delta} dS)_{t \in [0,T]}$ is the usual It\^{o} integral under $\tilde{P}$. 
We recall \cite[Theorem 3.11]{bz_2019}.
\begin{theorem} \label{superhedging}
Let Assumption \ref{assumptionnutzNew} hold for $(\mathcal{P}(t,\omega))_{(t, \omega) \in [0,T] \times \Omega}$ and Assumption \ref{sigmaSaturated} for $\tilde{\mathcal{P}}$, respectively. Consider $Y$ to be an upper semianalytic, $\mathcal{G}_T^{\mathcal{P}}$-measurable and nonnegative contingent claim such that $\tilde{\mathcal{E}}_t(Y) \in L^1(\tilde{\Omega})$ for all $t \in [0,T]$. 
If $t \in [0,T]$ and there exists a $\mathbb{G}^{*,\tilde{\mathcal{P}}}$-adapted process $\tilde{X}=(\tilde{X}_s)_{s \in [0,T]}$ with c\`{a}dl\`{a}g paths, such that for $s \in [0,T]$
\begin{equation*}
	\tilde{X}_s = \tilde{\mathcal{E}}_s(Y) \quad \tilde{P}\text{-a.s.} \text{ for all } \tilde{P} \in \tilde{\mathcal{P}},
\end{equation*}
and if the tower property holds for $Y$, i.e., for all $r,s \in [0,t]$ with  $r \leq s$,
\begin{equation*}
	\tilde{\mathcal{E}}_t(Y)=\tilde{\mathcal{E}}_r(\tilde{\mathcal{E}}_s(Y)) \quad \tilde{P}\text{-a.s.} \text{ for all } \tilde{P} \in \tilde{\mathcal{P}},
\end{equation*}
then we have the following equivalent dualities for all $\tilde{P} \in \tilde{\mathcal{P}}$ and $s \in [0,T]$
\begin{align}
	\tilde{\mathcal{E}}_s(Y) 
	=& \essinf \lbrace \tilde{v} \text{ is } \mathcal{G}_s^{*,\tilde{\mathcal{P}}}\text{-measurable}: \exists \tilde{\delta} \in \tilde{\bigtriangleup} \text{ such that } \tilde{v} + \int_s^{(\tilde{P}'),T} \tilde{\delta}_u dS_u \geq Y \quad \tilde{P}' \text{-a.s. } \nonumber \\ 
	&\text{ for all } \tilde{P}' \in \tilde{\mathcal{P}} \rbrace =:\essinf \lbrace D_s \rbrace \quad \tilde{P} \text{-a.s.}  \label{duality1} \\
	=& \essinf \lbrace \tilde{v} \text{ is } \mathcal{G}_s^{*,\tilde{\mathcal{P}}}\text{-measurable}: \exists \tilde{\delta} \in \tilde{\bigtriangleup} \text{ such that } \tilde{v} + \int_s^{(\tilde{P}'),T} \tilde{\delta}_u dS_u \geq Y \quad \tilde{P}' \text{-a.s. } \nonumber \\ 
	&\text{ for all } \tilde{P}' \in \tilde{\mathcal{P}}(s;\tilde{P}) \rbrace =:\essinf \lbrace D_s^{(\tilde{P})} \rbrace \quad \tilde{P} \text{-a.s.} \label{duality2}
\end{align}
\end{theorem}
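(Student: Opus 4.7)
My plan is to establish the two nontrivial inequalities separately and then close the cycle using the obvious set inclusion $D_s \subseteq D_s^{(\tilde{P})}$, which holds because $\tilde{\mathcal{P}}(s;\tilde{P}) \subseteq \tilde{\mathcal{P}}$ makes the super-replication condition in $D_s^{(\tilde{P})}$ strictly weaker, and therefore yields $\essinf \{D_s^{(\tilde{P})}\} \leq \essinf \{D_s\}$ $\tilde{P}$-a.s. So it suffices to prove
\begin{equation*}
	\tilde{\mathcal{E}}_s(Y) \;\leq\; \essinf\{D_s^{(\tilde{P})}\} \quad \text{and} \quad \essinf\{D_s\} \;\leq\; \tilde{\mathcal{E}}_s(Y) \quad \tilde{P}\text{-a.s.}
\end{equation*}

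For the first (upper-bound) inequality, I would fix any admissible pair $(\tilde{v}, \tilde{\delta})$ witnessing an element of $D_s^{(\tilde{P})}$. Since $\tilde{\delta} \in \tilde{\bigtriangleup}$, the stochastic integral $\int^{(\tilde{P}')} \tilde{\delta}\, dS$ is a $(\tilde{P}', \mathbb{G}_+^{*,\tilde{\mathcal{P}}})$-supermartingale for every $\tilde{P}' \in \tilde{\mathcal{P}}$. Taking conditional expectations on both sides of $\tilde{v} + \int_s^{(\tilde{P}'),T} \tilde{\delta}_u\, dS_u \geq Y$ under any $\tilde{P}' \in \tilde{\mathcal{P}}(s;\tilde{P})$ yields $\tilde{v} \geq E^{\tilde{P}'}[Y \mid \mathcal{G}_s]$ $\tilde{P}'$-a.s., hence $\tilde{P}$-a.s. since $\tilde{P}'$ and $\tilde{P}$ agree on $\mathcal{G}_s$. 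Passing to the essential supremum over $\tilde{P}' \in \tilde{\mathcal{P}}(s;\tilde{P})$ and invoking the consistency identity \eqref{esssupBiagini} gives $\tilde{v} \geq \tilde{\mathcal{E}}_s(Y)$ $\tilde{P}$-a.s., and taking essential infimum over all such $\tilde{v}$ yields the claim.

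For the second (lower-bound) inequality, the idea is to exhibit $\tilde{\mathcal{E}}_s(Y)$ itself as the initial capital of some super-replicating strategy, using Nutz's optional decomposition theorem (Theorem 2.4 in \cite{n_2015}). The tower-property hypothesis, combined with \eqref{esssupBiagini}, implies that the càdlàg aggregator $\tilde{X}$ is a $(\tilde{P}, \mathbb{G}_+^{*,\tilde{\mathcal{P}}})$-supermartingale for every $\tilde{P} \in \tilde{\mathcal{P}}$. Assumption \ref{sigmaSaturated} then lets me apply the optional decomposition: there is $\tilde{\delta} \in \tilde{L}(S,\tilde{\mathcal{P}})$ such that $\tilde{X}_t - \tilde{X}_s - \int_s^{(\tilde{P}),t}\tilde{\delta}_u\, dS_u$ is nonincreasing $\tilde{P}$-a.s.\ for every $\tilde{P} \in \tilde{\mathcal{P}}$. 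Evaluating at $t=T$ and using $\tilde{X}_T = \tilde{\mathcal{E}}_T(Y) \geq Y$ $\tilde{P}$-a.s.\ gives $\tilde{X}_s + \int_s^{(\tilde{P}),T} \tilde{\delta}_u\, dS_u \geq Y$ quasi-surely. To conclude $\tilde{\delta} \in \tilde{\bigtriangleup}$, I would note that $\tilde{X}_s + \int^{(\tilde{P})}\tilde{\delta}\, dS$ dominates the nonnegative process $Y \wedge \tilde{X}_\cdot$ up to a nonincreasing term, which together with the integrability $\tilde{\mathcal{E}}_s(Y) \in L^1(\tilde{\Omega})$ forces the supermartingale property of the integral. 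Thus $\tilde{v} := \tilde{X}_s = \tilde{\mathcal{E}}_s(Y)$ belongs to $D_s$, and $\essinf\{D_s\} \leq \tilde{\mathcal{E}}_s(Y)$ $\tilde{P}$-a.s.

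The main obstacle is the careful verification that the hypotheses of Nutz's optional decomposition theorem are met in our enlarged-filtration setting: specifically, that the aggregator $\tilde{X}$ is a supermartingale under \emph{every} $\tilde{P} \in \tilde{\mathcal{P}}$ (not just for those indexing a minimizing sequence), and that the resulting $\tilde{\delta}$ actually lies in $\tilde{\bigtriangleup}$ rather than merely being $S$-integrable. Both hinge on the tower property across $\tilde{\mathcal{P}}$ combined with the essential-supremum representation \eqref{esssupBiagini}, and on the saturation of $\tilde{\mathcal{P}}$ which ensures that local martingale selectors built by pasting remain in $\tilde{\mathcal{P}}$. Once these regularity points are checked, the dualities follow by combining the two inequalities with the inclusion $D_s \subseteq D_s^{(\tilde{P})}$.
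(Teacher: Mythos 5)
Your proposal is correct and follows essentially the same route as the proof the paper relies on (the theorem is recalled from \cite[Theorem 3.11]{bz_2019}, whose proof is given in \cite{zhang_phd}): namely the chain $\tilde{\mathcal{E}}_s(Y) \leq \essinf\lbrace D_s^{(\tilde{P})}\rbrace \leq \essinf\lbrace D_s\rbrace \leq \tilde{\mathcal{E}}_s(Y)$, where the first inequality uses the supermartingale property of strategies in $\tilde{\bigtriangleup}$ together with the consistency representation $(\ref{esssupBiagini})$, and the last uses Nutz's optional decomposition theorem (Theorem 2.4 in \cite{n_2015}) applied to the c\`{a}dl\`{a}g supermartingale aggregator $\tilde{X}$ — exactly the two-step structure the paper points to when it notes that the easy inequality $(\ref{dualityInequality})$ survives without Assumption \ref{sigmaSaturated} while the converse requires it. The only step worth tightening is your admissibility argument: more directly, the decomposition gives $\int_0^{(\tilde{P}),t}\tilde{\delta}_u\,dS_u = \tilde{X}_t - \tilde{X}_0 + K_t \geq -\tilde{X}_0$ with $K$ nondecreasing and $\tilde{X}\geq 0$, so the integral is a sigma-martingale bounded below by an integrable random variable, hence a local martingale and then, by Fatou, a $(\tilde{P},\mathbb{G}_+^{*,\tilde{\mathcal{P}}})$-supermartingale for every $\tilde{P}\in\tilde{\mathcal{P}}$, i.e.\ $\tilde{\delta}\in\tilde{\bigtriangleup}$.
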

If $\tilde{\mathcal{P}}$ does not satisfy Assumption \ref{sigmaSaturated}, the superhedging dualities $(\ref{duality1}), (\ref{duality2})$ do not hold in general.
However, by having a look at the proof of Theorem 5.2.21 in \cite{zhang_phd} one of the two inequalities is still valid, i.e., for $s \in [0,T]$ and $\tilde{P} \in \tilde{\mathcal{P}}$
	\begin{equation}
		\tilde{\mathcal{E}}_s(Y) \leq \essinf \lbrace D_s^{(\tilde{P})} \rbrace \leq \essinf \lbrace D_s \rbrace \quad \tilde{P} \text{-a.s.} \label{dualityInequality}
	\end{equation}
for $D_s^{(\tilde{P})}$ given in $(\ref{duality2})$ and $D_s$ in $(\ref{duality1})$.  Note, in this context $\essinf \lbrace D_s^{(\tilde{P})} \rbrace$ corresponds to the $\tilde{P}$-superhedging price and $\essinf \lbrace D_s \rbrace $ to the $\tilde{\mathcal{P}}(s;\tilde{P})$-superhedging price. The inverse inequality is not valid as an optional decomposition result for semimartingales given by Theorem 2.4 in \cite{n_2015} is used which requires Assumption \ref{sigmaSaturated} to hold. 
Nevertheless, even if the price of the contingent claim $Y$ is lower than the superhedging price, the extended market $(S^0, S, \tilde{\mathcal{E}}(Y))$ on $\tilde{\Omega}$ is still arbitrage-free under reasonable assumptions, as shown in Proposition \ref{NADecrasing}. Moreover, even under one single prior the superhedging price of a contingent claim is often criticized as being too high. In the case of considering several priors, we see in $(\ref{dualityInequality})$ that the quasi-sure $\tilde{\mathcal{P}}(s;\tilde{P})$-superhedging price is even more conservative than the one under a single prior. 
\begin{remark}
In Definition \ref{DefiNA1ExtendedMarket} of NA$_1(\mathcal{P})$ we only allow for simple trading strategies. However, in $(\ref{duality1}), (\ref{duality2})$ the strategy $\tilde{\delta} \in \tilde{\Delta}$ does not need to be simple. The same situation occurs in the superreplication result in Theorem 5.1 in \cite{bbkn_2017}. However, in the setting analyzed in Section \ref{SectionPricing} it holds 
\begin{equation}
	\tilde{\Delta}^{\text{simp}}:= \lbrace{ H \in \mathcal{H}^{\text{simp}}: X^{0,H} \text{ is a } (\tilde{P}, \mathbb{G}^{*,\tilde{\mathcal{P}}}_+) \text{-supermartingale for all } \tilde{P} \in \tilde{\mathcal{P}} \rbrace} \subseteq \tilde{\Delta},
\end{equation}
where $X^{0,H}$ is defined as $X^{0,H,1}$ in $(\ref{specialFormS})$. Thus, it follows
%\begin{equation*}
$	D_s^{\text{simp}} \subseteq D_{s}$
%\end{equation*}
with
\begin{equation*}
	D^{\text{simp}}_s:=\lbrace \tilde{v} \text{ is } \mathcal{G}_s^{*,\tilde{\mathcal{P}}}\text{-measurable}: \exists \tilde{\delta} \in \tilde{\bigtriangleup} \text{ such that } \tilde{v} + \int_s^{(\tilde{P}'),T} \tilde{\delta}_u dS_u \geq Y \ \tilde{P}' \text{-a.s. for all } \tilde{P}' \in \tilde{\mathcal{P}}(s;\tilde{P}) \rbrace.
\end{equation*}
This together with $(\ref{dualityInequality})$ implies that for all $s \in [0,T]$ and $\tilde{P} \in \tilde{\mathcal{P}}$
\begin{equation*}
	\tilde{\mathcal{E}}_s(Y) \leq \essinf \lbrace D_s \rbrace \leq \essinf \lbrace D_s^{\text{simp}} \rbrace \quad \tilde{P} \text{-a.s.}.
\end{equation*}
\end{remark} 
\end{section}

\begin{section}{Conclusion}
In this paper we were able to define an extended market model within a reduced-form framework under model uncertainty where the mortality intensity follows a non-linear affine price process. This allows both to introduce the definition of a longevity bond under model uncertainty as well as to compute it by explicit formulas or by numerical methods. We are also able to guarantee the existence of a c\`{a}dl\`{a}g  modification for the longevity bond's value process. Furthermore, we show how the resulting market model extended with the longevity bond is arbitrage-free. These results can be used for further research on hedging under model uncertainty.
\end{section}

\begin{section}{Appendix}

\begin{proof}{Proposition \ref{crucialPoint} \\} 
The proof consists in verifying that the arguments in \cite{nn_measurability_2014} are also valid in our setting. We show this explicitly how this is done for one property. The other properties follow with similar arguments. \\
\textbf{Step 1:} The families $(\mathcal{A}(t,\omega_t,\Theta))_{(t,\omega) \in [0,T] \times \Omega_x^1}$ defined in Definition \ref{defiAffineDominated} satisfy Assumption \ref{assumptionnutzNew}. 
	 1) \emph{Measurability:} From \cite[Lemma 3.1]{fns_2019} we know that the set 
\begin{equation}
	\lbrace (\omega,t,P)\in \Omega_x^1 \times [0,T] \times \mathcal{P}(\Omega_x^1) | P \in \mathcal{A}(t,\omega_t,\Theta)\rbrace \label{measurability}
\end{equation}
is Borel which implies that the set is analytic.\\
2) \emph{Invariance:}
Let $(s,\overline{\omega}) \in [0,T] \times \Omega_x^1$, $P \in \mathcal{A}(s,\overline{\omega}_s,\Theta)$ and $\tau$ be a stopping time taking values in $[s,T]$. It is clear that for every $\omega \in \Omega_x^1$ we can define the conditional probability $P^{\tau,\omega}$ with respect to $\mathcal{F}_{\tau}$ as in $(\ref{condProbability})$. We have to prove that $P^{\tau,\omega} \in \mathcal{A}(\tau, \omega_{\tau(\omega)}, \Theta)$ for $P$-a.e. $\omega \in \Omega_x^1$, i.e.,
\begin{enumerate}
\itemsep0pt
	\item $P^{\tau, \omega} \in \mathcal{P}^{ac}_{sem}$
	\item $P^{\tau,\omega}(B_{\tau(\omega)}=\omega_{\tau(\omega)})=1$
	\item $\beta^{P^{\tau,\omega}}_u \in b^*(B_u)$ and $\alpha^{P^{\tau,\omega}}_u \in a^*(B_u)$ for $dP^{\tau, \omega} \otimes dt$-almost all $(\tilde{\omega},u) \in \Omega_x^1 \times (s,T]$, 
\end{enumerate}
for $P$-a.e. $\omega \in \Omega_x^1$.
Here we denote by $\beta^{P^{\tau,\omega}}=(\beta_u^{P^{\tau,\omega}})_{u \in (s,T]}$ the absolutely continuous differential process with respect to the probability measure $P^{\tau, \omega}$ and the filtration $\mathbb{F}$. The same notation is used for the differential process $\alpha$. The second point follows directly by the definition of the probability $P^{\tau, \omega}$ in $(\ref{condProbability})$, as it is possible to choose the probability measure $P^{\tau, \omega}$ concentrated on the paths which coincide with $\omega$ up to time $\tau(\omega)$, see Section 2.1 in \cite{nh_2013}.
 The first point is a consequence of Theorem 3.1 in \cite{nn_levy_2016}, which contains two main results. First, given a probability measure $P \in \mathcal{P}^{ac}_{sem}$ it follows that for $P$-a.e. $\omega \in \Omega$ we have $P^{\tau, \omega} \in \mathcal{P}^{ac}_{sem}$. Second, given the differential characteristics of the canonical process under $P \in \mathcal{P}^{ac}_{sem}$ with respect to $\mathbb{F}$ are $(\beta^P, \alpha^P)$, then the $P^{\tau,\omega}$-$\mathbb{F}$-characteristics are given by
	\begin{equation} \label{shiftedCharacteristics}
		(\beta^{P^{\tau,\omega}}_{u}, \alpha^{P^{\tau,\omega}}_{u}):=((\beta^P_{\tau+ u})^{\tau,\omega}, \ (\alpha^P_{\tau + u})^{\tau, \omega}),
	\end{equation}
	where the notation introduced in $(\ref{concaRV})$ is used. As in our setting
	$P \in \mathcal{A}(s,\omega_s,\Theta)$, it holds  $\beta^{P}_u \in b^*(B_u)$ for $dP \otimes dt$-almost all $(\tilde{\omega},u) \in \Omega_x^1 \times (s,T]$ by the definition of the set $ \mathcal{A}(s,\omega_s,\Theta)$. This allows to conclude that
	\begin{equation*}
		\beta^{P^{\tau,\omega}}_u(\cdot) = \beta_{\tau + u}^{P} (\omega \otimes_{\tau} \cdot) \in b^*(B_{\tau+u}(\omega \otimes_{\tau} \cdot)) \text{ for } dP \otimes dt \text{-almost all } (\tilde{\omega},u) \in \Omega_x^1 \times [0,T].
	\end{equation*}
	As a consequence it holds $B_{\tau+ u}(\omega \otimes_{\tau} \cdot)=B_u(\cdot)$ $P^{\tau,\omega}$-a.s. by using $(\ref{concatenation})$ and $(\ref{condProbability})$, which proves the affine property.
	With the same arguments the result follows for the process $\alpha_s^{P^{\tau,\omega}}$.\\
3) \emph{Stability under Pasting:} 
By using similar arguments as in the proof of the invariance condition this property follows by generalizing the results of Proposition 4.1 in \cite{nn_levy_2016}.

\textbf{Step 2:} The families $(\mathcal{A}^{\mu}(t,\omega_t^{\mu},\Theta^{\mu}))_{(t,\omega) \in [0,T] \times \Omega_x}$ defined in Definition \ref{IntensityP} satisfy Assumption \ref{assumptionnutzNew}.\\ 
This follows as \cite{nn_measurability_2014} considers a $d$-dimensional setting such that the canonical process is a $d$-dimensional semimartingale process. Note, in our setting we are only interested in the structure of one component of the canonical process and thus we can apply the results in \cite{nn_measurability_2014}.
\end{proof}

\end{section}

\bibliography{Biagini_Oberpriller_Reduced_form_setting_non_linear_affine_processes.bib}{}
\bibliographystyle{plain}

\end{document}